 \definecolor{orange}{RGB}{230,170,120}
  \definecolor{green}{RGB}{120,200,120}
\definecolor {processblue}{cmyk}{0.96,0,0,0}
 \theoremstyle{plain}
 \newtheorem{thm}{Theorem}[section]
 \newtheorem{lem}[thm]{Lemma}
 \theoremstyle{definition}
 \newtheorem{exmp}{Example}[section]
  \newtheorem{ass}{Assumption}
 \theoremstyle{definition}
 \newtheorem{rem}{Remark}
 \def\BState{\State\hskip-\ALG@thistlm}
 \newcommand{\tr}{\text{tr}}
 \newcommand{\R}{\mathbb{R}}
 \renewcommand{\P}{\mathbb{P}}
 \newcommand{\diag}{\text{diag}}
 \newcommand{\E}{\mathbb{E}}
 \newcommand{\Var}{\mathrm{Var}}
 \newcommand{\lb}{\left(}
 \newcommand{\rb}{\right)}
 \renewcommand{\P}{\mathbb{P}}
 \newcommand{\one}{\mathbf{1}}
\def\spacingset#1{\renewcommand{\baselinestretch}%
{#1}\small\normalsize} \spacingset{1}
 \newcommand{\indep}{\perp \!\!\! \perp}
 \newcommand{\ea}{\mathrm{ea}}
 \newcommand{\ep}{\mathrm{ep}}
\definecolor{coquelicot}{rgb}{1.0, 0.22, 0.0}
\newcounter{algsubstate}
\renewcommand{\thealgsubstate}{\alph{algsubstate}}
\newenvironment{algsubstates}
  {\setcounter{algsubstate}{0}%
   \renewcommand{\State}{%
     \stepcounter{algsubstate}%
     \Statex {\footnotesize\thealgsubstate:}\space}}
 \newcommand{\blind}{0}
\begin{document}
  
\if0\blind
{  
  \title{\bf Causal clustering: design of cluster experiments under network interference\footnote{We thank Toru Kitagawa, Michael Leung, Jesse Shapiro, Jann Spiess, Max Tabord-Meehan for helpful comments. All mistakes are our own.} }
\author{Davide Viviano\footnote{Department of Economics, Harvard University. Email: dviviano@fas.harvard.edu} $\quad$ Lihua Lei\footnote{Graduate School of Business, Stanford University. Email: lihualei@stanford.edu} $\quad$ Guido Imbens\footnote{Graduate School of Business and Department of Economics, Stanford University. Email: imbens@stanford.edu} \\ Brian Karrer\footnote{FAIR, Meta. Email: briankarrer@meta.com } $\quad$ Okke Schrijvers\footnote{Central Applied Science, Meta. Email: okke@meta.com.} $\quad$ Liang Shi\footnote{Central Applied Science, Meta. Email: liangshi@meta.com.}}
    \date{This version: \today \\ First version: October, 2023}
  \maketitle
} \fi

\if1\blind
{
  \bigskip
  \bigskip
  \bigskip
  \ 
 \\
 
 \ 
 \\

  \medskip
} \fi

  \begin{abstract}
\noindent

This paper studies the design of cluster experiments to estimate the global treatment effect in the presence of network spillovers. We provide a framework to choose the clustering that minimizes the worst-case mean-squared error of the estimated global average treatment effect. We show that optimal clustering solves a novel penalized min-cut optimization problem computed via off-the-shelf semi-definite programming algorithms. Our analysis also characterizes simple conditions to choose between \textit{any} two cluster designs, including choosing between a cluster or individual-level randomization. We illustrate the method's properties using unique network data from the universe of Facebook's users and existing data from a field experiment.

\end{abstract}

\noindent%
{\it Keywords:} Experimental Design, Spillover Effects, Causal Inference, Cluster Designs. \\
{\it JEL Codes:} C10, C14, C31, C54
\vfill

\newpage 

\bigskip

\spacingset{1.7}


\section{Introduction} 

Many experiments in economics are conducted in environments where units interact. A cash-transfer program may affect untreated households through local prices or informal insurance; an information campaign may spread through social networks; an online experiment may affect users connected through a platform. In such settings, treatment assignment is not only a device for generating exogenous variation. It also determines which estimand is identified by the experiment and how much independent variation remains for inference.

A common design choice in such settings is a cluster experiment: units are partitioned into clusters, entire clusters are assigned to treatment or control, and effects are estimated by differences in mean outcomes (possibly with covariate adjustment). Cluster randomization avoids specifying a model of how outcomes depend on neighbors' assignments when the goal is to estimate the total effect of the treatment, but it forces researchers to choose a partition of the network and to take a stand on how spillovers may propagate. Examples include algorithmic partitions of an online social network or administrative units such as villages.
In applications, however, pre-specified clusters such as villages, schools or platform partitions may either fail to capture all relevant interactions  that generate spillovers when too small, or introduce large variance and underpowered studies when too coarse.  

Thus, the choice of clusters is a statistical design problem: the researcher must choose how much bias to tolerate through smaller clusters in order to gain precision. This paper asks how to choose both the structure and the number of clusters when the researcher observes a single network and wants to estimate the global average treatment effect, the target estimand in many applications \citep[e.g.][]{egger2019general, muralidharan2017experimentation}. We develop a decision-theoretic framework when the goal is to minimize the worst-case mean-squared error (MSE) of the difference-in-means estimator. 
Unlike clustering methods aimed at community detection, we choose clusters with the downstream causal objective in mind.

Our analysis proceeds as follows. We first present our main setup and focus on settings where spillover effects are local (within neighbors) and potentially of the same order as standard errors. We formalize this through a local asymptotic framework in which first-order spillover effects are local to zero, yet non-negligible for inference,\footnote{Higher-order interference can be accommodated by assuming it is of smaller order than first-order spillovers.} which is empirically relevant in many applications as we discuss further in Section \ref{sec:local} \citep[examples include][]{karrer2021network, lewis2015unfavorable, athey2023digital}. Within this framework, optimal clustering depends on the expected magnitude of the largest spillover effects. Calibrating this magnitude is an important practical input --- akin to specifying minimum detectable effects in power analysis \citep[e.g.][]{baird2018optimal}. This parameter can be calibrated using previous studies or obtained through a particular model (see  Section \ref{rem:psi} for practical recommendations).

The paper delivers three sets of results. First, we provide sharp worst-case characterizations of bias and variance for the difference-in-means estimator under a given clustering. The worst-case bias is governed by a simple measure of between-cluster connectedness: the average fraction of an individual's neighbors that lie outside her cluster, closely related to notions of cluster ``purity'' \citep[e.g.][]{brennan2022cluster, toulis2013estimation}. The variance component, by contrast, can be complicated because dependence arises both from interference and from correlated assignments induced by clustering. Our local asymptotic framework sheds light on the first-order effects of the variance: up to a small approximation error, the variance is proportional to the sum of squared cluster size. Together, these results yield a tractable expression for worst-case mean squared error that makes the bias--variance trade-off explicit.

Second, we use this characterization to derive a simple rule for choosing between a cluster design based on a given partition and individual-level randomization (a Bernoulli design) that depends on finite sample properties of the chosen partition and network.

Third, we turn to the core design problem of constructing MSE-optimal clustering. We show that choosing clusters reduces to a novel penalized min-cut problem in which the penalty internalizes both cross-cluster exposure (bias) and cluster-size dispersion (variance). We provide a convex-optimization algorithm that can be easily computed in practice with a provable certificate of its performance.

The empirical analysis illustrates two uses of the framework. In large online networks, the method can rank existing scalable clustering systems and quantify when cluster randomization is preferable to individual randomization. In field experiments, the method can evaluate whether administrative units are appropriate experimental clusters and can construct alternative partitions when social ties cross administrative boundaries. We illustrate both uses with unique de-identified network data from Facebook users and with village network data from a field experiment in rural China from \cite{cai2015social}.

Several extensions in Section \ref{sec:extensions} and Appendix \ref{app:more_extensions} illustrate the broader applicability of the method. These include cluster design for bias-aware inference, minimax regret choice with unknown spillover magnitude, higher order interference, among others.

\subsection{Related literature}

This paper connects to work on experimental design under interference, causal
inference with spillovers, and graph clustering. Most existing results on
experimental designs with spillovers---such as cluster and saturation designs
studied in \cite{taylor2018randomized,baird2018optimal,basse2016analyzing,
pouget2018dealing,viviano2020policy,karrer2021network,cai2022optimizing}---
take the clustering, or the class of clustering algorithms, as given and analyze
properties of the resulting estimators under alternative assignment schemes. In
contrast, our main object is the choice of the clustering itself. We derive a
decision-theoretic criterion from the bias and variance of the causal estimator,
and use this criterion to rank and construct clusterings. We provide details below. 

\paragraph{Clustering for community detection} There is a large literature on graph clustering and community detection,
including spectral clustering, normalized cuts, modularity-based methods, and
balanced graph partitioning
\citep[e.g.][]{shi2000normalized,von2007tutorial,newman2013spectral,
blondel2008fast,kabiljo2017social,lei2019unified,lei2020consistency,
li2022hierarchical}. These methods provide principled ways to partition a graph to recover communities. Here instead we choose clusters to minimize the
worst-case causal risk for estimating the global average treatment effect, motivating a different objective and analysis for our causal inference goal. 

\paragraph{Clustering for causal inference} The closest papers in spirit are those that use graph-based clusterings to
improve experimental designs under interference \citep[e.g.][]{ugander2013graph,eckles2017design,ugander2023randomized}, and in particular designs under interference that decays in network distance
\citep{leung2022rate,leung2023design,faridani2022rate}. The difficulty highlighted by \cite{leung2022rate}—that variance depends jointly on the network and on the clustering—helps explain why much of this literature either targets bias control alone \citep[e.g.][]{brennan2022cluster, taylor2018randomized} or works in settings where bias can be driven asymptotically to zero \citep{leung2022rate, faridani2022rate}. 
As a result, much of this literature focuses on rate-optimal cluster scales under assumptions that allow the bias to vanish asymptotically. Common (and useful) approaches are variations of $\varepsilon$-net clustering \citep{leung2022rate, faridani2022rate, ugander2013graph} consisting of randomly selects focal units and then selects the size of the clusters around these units using decay conditions on the size of the spillovers. 

The main distinction here is that instead the choice of the clusters is driven by the finite sample properties of the network through an explicit optimization problem. Instead of prescribing how cluster sizes should scale in terms of rates of convergence, we derive a finite-sample objective that trades off worst-case bias and variance
for a given estimator and estimand. This explicit characterization is made
possible by our local asymptotic framework; it allows us to characterize optimal clustering as a function of the underlying graph, where the trade-off between the bias and variance is explicit in the optimization problem (see e.g. Theorem \ref{thm:opt1}). 

\paragraph{Experimentation without clustering} Additional references on experimentation with and without networks that are not
primarily concerned about choosing optimal clusterings include \cite{basse2018model} and
\cite{jagadeesan2020designs} for direct rather than global effects,
\cite{kang2016peer} for encouragement designs, \cite{viviano2020experimental}
for using pilot information, and \cite{basse2018limitations} for limitations of
design-based inference under interference. Recent work on experimental design
for i.i.d. settings includes
\cite{cai2024performance,tabord2023stratification}. These contributions provide
important guidance for design and inference, but they do not address the problem
studied here of selecting a partition of a single observed network to optimize a
causal bias--variance criterion.

\paragraph{Causal inference under interference} Finally, our paper is related to the broader literature on identification and
inference under network interference---e.g.,
\cite{aronow2017estimating,hudgens2008toward,toulis2013estimation,
manski2013identification,athey2018exact,goldsmith2013social,
savje2017average,ogburn2017causal,manresa2013estimating,li2022random}---which
typically studies estimands, estimators, and assumptions rather than the
construction of an optimal clustering. We also connect to work on clustering in
economics \citep{wooldridge2003cluster,abadie2017should}. The key distinction
is that our clustering criterion is dictated by a downstream causal estimand and
a worst-case mean-squared-error objective, rather than by community detection.

\section{Setup} \label{sec:setup}

We consider a setting with $i \in \{1, \cdots, n\}$ units. Let $Y_i \in \mathbb{R}$ denote the observed outcome of interest for individual $i$, $D_i \in \{0,1\}$ denote a binary treatment assignment, and $\mathbf{D} \in \{0,1\}^n$ the vector of treatment assignments of each unit. Define $\mathbf{A}$ a symmetric adjacency matrix with $\mathbf{A}_{i,j} \in \{0,1\}$, and 
$
Y_i(\mathbf{d}), \mathbf{d} \in \{0,1\}^n
$ 
the potential outcome as a function of the entire vector of treatment assignments, with $Y_i = Y_i(\mathbf{D})$. We implicitly condition on $\mathbf{A}$, observed to the researchers, and (unobserved) potential outcomes $Y(\mathbf{d})$ (taking therefore a design-based perspective), unless otherwise 
specified (see Remark \ref{rem:unobserved_a} for an extension with unobserved $\mathbf{A}$). We define 
$\mathcal{N}_i = \{j: \mathbf{A}_{i,j} = 1\}$ the set of individuals connected to $i$. To avoid dividing by zero, we define $|\mathcal{N}_i| = \max\{1, \sum_j \mathbf{A}_{i,j}\}$. Let $\mathcal{N}_{n, \max} = \max_{i \in \{1, \cdots, n\}} |\mathcal{N}_i|$, the maximum degree.  
We focus on estimating the global average treatment effect (GATE), 
\begin{equation} \label{eqn:overall}
\small 
\begin{aligned} 
\tau_n = \frac{1}{n} \sum_{i=1}^n \Big[Y_i(\mathbf{1}) - Y_i(\mathbf{0})\Big]. 
\end{aligned}
\end{equation} 
The GATE defines the effect if all units had received the treatment compared to none of the individuals receiving the treatment. This is a target estimand in many applications when researchers are interested in implementing the policy at scale, e.g., \cite{egger2022general}, \cite{muralidharan2017experimentation}, \cite{karrer2021network} (see Appendix \ref{sec:saturations} for a discussion on other estimands). 

\vspace{-4mm}

\subsection{Potential outcomes and spillover effects} 
\label{sec:local}

Next, we impose restrictions on the spillover effects. For a set $S$ of indices and vector $\mathbf{x}$, we denote $\mathbf{x}_S$ the entries of $\mathbf{X}$ with indices in $S$. 

\begin{ass}[First-order local interference] \label{ass:first_order} For $i \in \{1, \cdots,n\}$, 
$
Y_i(\mathbf{d}) = \mu_i(\mathbf{d}_i, \mathbf{d}_{\mathcal{N}_i}), \forall \mathbf{d} \in \{0,1\}^n,
$
for some functions $\mu_i(1, \cdot) \in \mathcal{M}_{1,i}, \mu_i(0, \cdot) \in 
\mathcal{M}_{0, i}$ for some set of functions $\mathcal{M}_{1, i}, \mathcal{M}_{0, i}$.
\end{ass}

Assumption \ref{ass:first_order} states that spillovers occur between neighbors and allows for arbitrary dependence of potential outcomes with neighbors' assignments. One-degree neighborhood is consistent with models often used in applications, e.g., \cite{cai2015social}, \cite{sinclair2012detecting}, \cite{muralidharan2017experimentation}. \cite{athey2018exact} provide a framework for testing Assumption \ref{ass:first_order}.  Higher-order interference can be accommodated, although higher-order effects are often small and difficult to detect (see Remark \ref{rem:higher_order}).  With a slight abuse of notation, we occasionally write $\mu_i(\mathbf{d})$ for $\mu_i(\mathbf{d}_i, \mathbf{d}_{\mathcal{N}_i})$ for notational convenience.

We will refer to $\tau_{n, \mu}$ as the GATE in Equation \eqref{eqn:overall} to make the dependence of $\tau_n$ on $(\mu_1, \cdots, \mu_n)$ explicit. We do not assume that we know or can estimate consistently $\mu_i$. (The functions $\mu_i$ and their classes $\mathcal{M}_{0, i}, \mathcal{M}_{1, i}$ are indexed by $i$, and therefore they cannot be consistently estimated.) Instead, we allow for \textit{arbitrary} classes $\mathcal{M}_{1, i}, \mathcal{M}_{0, i}$ of potential outcome functions, as long as such classes satisfy the conditions below. These classes have to be sufficiently large to accommodate rich structures in the data, as well as satisfy some restrictions to be able to estimate features of causal effects.

\begin{ass}[Class of potential outcome functions] \label{ass:exposure_restriction} The potential outcomes $\mu = (\mu_1, \cdots, \mu_n) \in \mathcal{M}  = \otimes_{i=1}^n \mathcal{M}_i$ where $\mathcal{M}_i$ is the set such that $\mu_i(1, \cdot) \in \mathcal{M}_{1, i}, \mu_i(0, \cdot) \in \mathcal{M}_{0, i}$. For all $i \in \{1, \cdots, n\}$, $d \in \{0,1\}$, the function class $\mathcal{M}_{d,i}$ is the class of functions that include \textit{all} functions $\mu_i(d, \cdot)$ satisfying the following two conditions: 
\begin{itemize} 
\item[(i)] $\mu_i(d, \cdot)\in [\psi_L, \psi_R]$ for some constants $-\infty < \psi_L < \psi_R < \infty$ that do not vary with $n$; 
\item[(ii)] for all $\mathbf{d} \in \{0,1\}^{|\mathcal{N}_i|}$,  for some $\bar{\phi}_n \le \psi_R - \psi_L$ 
\begin{equation} \label{eqn:interference} 
\small 
\begin{aligned} 
\Big|\mu_i(0, \mathbf{d}) - \mu_i(0, \mathbf{0})\Big| & \le  \frac{\bar{\phi}_n}{|\mathcal{N}_i|} \sum_{k \in \mathcal{N}_i} \mathbf{d}_k, \quad 
  \Big|\mu_i(1, \mathbf{d}) - \mu_i(1, \mathbf{1})\Big| & \le  \frac{\bar{\phi}_n}{|\mathcal{N}_i|} \sum_{k \in \mathcal{N}_i} \Big(1 - \mathbf{d}_k\Big). 
 \end{aligned} 
\end{equation} 
\end{itemize} 
\end{ass}

Assumption \ref{ass:exposure_restriction} imposes two restrictions. Condition (i)  states that potential outcomes are uniformly bounded.\footnote{This restriction is common in the literature, e.g., \cite{kitagawa2020should}, and can be relaxed by assuming random sub-gaussian potential outcomes.
}  Condition (ii) is our main restriction on the exposure mapping. Condition (ii) is attained if $\mathcal{M}_{1, i}, \mathcal{M}_{0, i}$ are Lipschitz function classes in the share of treated neighbors. Condition (ii) states that potential outcomes vary with the share of neighbors' treatments by \textit{at most} $\bar{\phi}_n$. Specifically, the first condition (resp. second condition) states that as the number of treated friends increases, the difference from the pure control (resp. pure treatment) potential outcome increases (resp. decreases) in a worst-case sense linearly, with marginal effect bounded by $\bar{\phi}_n$. Here $\bar{\phi}_n$ captures the magnitude of (largest) spillovers. The condition that $\bar{\phi}_n \le \psi_R - \psi_L$ is a regularity assumption that guarantees that $(ii)$ is consistent with the boundedness restriction in $(i)$.  In Appendix \ref{sec:random_outcomes}, we relax Assumption \ref{ass:exposure_restriction}, by imposing $(ii)$ only on the expected value of the potential outcomes (instead of the realized potential outcomes). In Appendix \ref{sec:heterogeneity}, we relax Assumption \ref{ass:exposure_restriction} by allowing for heterogeneity in the range of potential outcomes and the spillover effects.

The component $\bar{\phi}_n$ quantifies the largest magnitude of spillover effects and will play an important role in our asymptotic analysis as $n \rightarrow \infty$.

\begin{ass}[Spillover local asymptotics] \label{defn:local} We consider a sequence of $\Big(\mathbf{A}, (Y_{i}(\cdot))_{i=1}^n\Big)$, indexed by $n$, where $\lim \sup_n \bar{\phi}_n \min\{n, \mathcal{N}_{n, \max}^2\} \le \varepsilon$, for some finite constant $\varepsilon \ge 0$. 
\end{ass}

Assumption \ref{defn:local} formalizes the local asymptotic framework considered here, where $\bar{\phi}_n  \mathcal{N}_{n, \max}^2 $ denotes the product of the largest spillover effects with the squared maximum degree. We assume that such a product is bounded in the limit by $\varepsilon \ge 0$. 
For $\varepsilon = 0$, spillover effects converge to zero asymptotically. This may occur if $\bar{\phi}_n \rightarrow 0$ at an arbitrary rate for networks with bounded degree \citep[e.g.][where $\mathcal{N}_{n, \max}$ is bounded]{de2018identifying}. In this case, the rate of convergence of $\bar{\phi}_n$ characterizes the magnitude of the spillover effects. As in a local asymptotic framework \citep[e.g.][]{hirano2009asymptotics}, rates of convergence of order equal or slower than the effective number of observations (allowed here) indicate non-negligible spillover effects even with $\varepsilon = o(1)$. Although we impose no restrictions on $\varepsilon$, we will think of $\varepsilon$ as small. Specifically, we provide a framework that characterizes approximately optimal experimental design, where the quality of the approximation depends on $\varepsilon$.   Also, note that GATE may also exhibit the same (or similar) asymptotic behavior, see Remark \ref{rem:direct_asymptotics}. 

Scenarios where $\bar{\phi}_n$ is local to zero formalize the idea that spillover effects (and possibly but not necessarily also overall treatment effects $\tau_n$) are small but possibly non-negligible for inference. 
For example, in contexts with online advertising \cite{lewis2015unfavorable} observe that a treatment effect of 0.005 of a standard deviation of the outcome would be highly profitable for the company given the cost of advertising \citep[see also][for a discussion]{athey2023semi}. Similar examples in digital advertisements with small effects are  
\cite{athey2023digital} with $i.i.d.$ data and \cite{karrer2021network} with network (Facebook) data, where the GATE are also of the order of 0.005 standard deviations. Motivated by these applications we therefore focus on settings where first-order effects are small but possibly non-negligible. 

As we show below, this framework allows us to provide explicit characterization of the bias and variance of the clustering, different from models of decaying interference as e.g., in \cite{leung2022causal}, where such characterization is not possible.

\begin{exmp}[Linear exogenous peer effects] \label{exmp:1} 
Consider a class of functions of the form
$\mu_i(\mathbf{d}) = \mu(T_i(\mathbf{d})) + \nu_i$, with $T_i(\mathbf{d}) = \Big[\mathbf{d}_i, (1 - \mathbf{d}_i) \times \frac{\sum_{j \neq i} \mathbf{A}_{i,j} \mathbf{d}_j}{\sum_{j \neq i} \mathbf{A}_{i,j}}, \mathbf{d}_i \times \frac{\sum_{j \neq i} \mathbf{A}_{i,j} \mathbf{d}_j}{\sum_{j \neq i} \mathbf{A}_{i,j}}\Big]$, and $\mu(t) = t^\top \beta$ for $\beta \in [-\bar{\Delta}_n,\bar{\Delta}_n] \times [-\bar{\phi}_n,\bar{\phi}_n]^2,$ for some arbitrary $\bar{\Delta}_n, \bar{\phi}_n$, and $\nu_i$ that is not a function of $\mathbf{d}$. Then Assumption \ref{ass:exposure_restriction} holds. \qed 
\end{exmp}

\begin{rem}[Restriction on the average squared degree instead of maximum degree] Assumption \ref{defn:local} can be sharpened by imposing restrictions on 
the \textit{average} second-order degree instead of maximum degree; see Equation \eqref{eq:sum_square_degree} in the proof of Lemma \ref{lem:rel}.  \qed  
\end{rem}

\begin{rem}[Direct and global effect] \label{rem:direct_asymptotics} Assumption \ref{defn:local} allows direct treatment effects (and GATEs) to be of the \textit{same} magnitude of spillover effects. Specifically, it is possible that $\tau_n$ is of the same order of $\bar{\phi}_n$. Therefore, our local asymptotic assumption does not require that spillover effects are local to the GATE (since $\tau_n$ can also converge to zero). Instead, our local asymptotics formalizes settings where the signal-to-noise ratio can be small (common in  experiments on online platforms,  \cite{karrer2021network}). \qed 
\end{rem} 

\begin{rem}[Squared bias and variance of the same order]
It is important to note that Assumption \ref{defn:local} allows the squared bias
and variance to be of the same order. For example, when the degree
is uniformly bounded \citep[common in economics, e.g.][]{jackson2012social,cai2015social,de2018identifying},  and \(K_n\asymp n\), 
the bias is of order \(\bar\phi_n=o(1)\), the squared bias is of order
\(\bar\phi_n^2\), and the variance is of order \(1/n\). Thus, squared bias and
variance are of the same order when \(\bar\phi_n\asymp n^{-1/2}\), while either
component may dominate depending on whether \(\bar\phi_n\) converges more slowly
or more quickly than \(n^{-1/2}\), so that Assumption \ref{defn:local} is not restrictive on the relative order of magnitude between the two. 

Similar reasoning applies with growing degree, provided that the maximum degree does not grow too fast in the sample size.\footnote{For example, suppose for small $\eta > 0$
\(\mathcal N_{n,\max}\asymp n^{\kappa - \eta}\), \(K_n\asymp n^\rho\), and
\(\bar\phi_n\asymp n^{-a}\), with \(\kappa\le 1/2\). Then
Assumption
\ref{defn:local} holds with \(a = 2\kappa\). In this case, the squared bias is of the same or larger order
than variance if \(4\kappa \le \rho\), i.e., provided that the network is sufficiently sparse.} 
\qed
\end{rem}

\begin{rem}[Inference on treatment effects] 
Here, we study optimal designs under Assumption \ref{defn:local}. However,  once the experiment is conducted based on a given clustering, inference on the GATEs does not require Assumption \ref{defn:local} to hold, provided that suitable conditions on the covariance structure and sparsity on the maximum degree hold. See Theorem \ref{thm:inference1} and Appendix \ref{sec:estimated_variance2}. \qed   
\end{rem} 

\begin{rem}[Higher order interference and endogenous peer effects] \label{rem:higher_order} 
Our setting generalizes to higher-order interference in two scenarios. 

First, suppose that friends up to degree $d < \infty$ generate spillovers in magnitude similar to first-degree friends. Our results extend after we define the set of friends as the set of friends up to degree $d$, and $\bar{\phi}_n$ the largest effect that such friends generate. Sparsity restrictions on the largest degree in Assumption \ref{defn:local} are with respect to the number of friends up to degree $d$. 

In this case the same algorithm follows through, after appropriately redefining the set of friends $\mathcal{N}_i$ to include higher degree friends.

In the second scenario, suppose that the assumption of first-order effects approximates higher-order effects up to a term of smaller order than first-order effects. In this case our results and algorithm follow similarly as we discuss in Appendix \ref{example:endogenous}, since this additional remainder is negligible relative to first order effects.  \qed 
\end{rem} 


\vspace{-4mm} 

\subsection{Experimental design and estimation}

Next, we turn to the class of designs and estimators considered here. Define a clustering of size $K_n$ as a collection of subsets of unit indices satisfying
$$
\small 
\begin{aligned} 
\mathcal{C}_{ n} = \left\{c_k \subseteq \{1, \cdots, n\}, k \in \{1, \cdots, K_n\}, \bigcup_k c_k = \{1, \cdots, n\}, c_k \bigcap c_{k'} = \emptyset \text{ for } k \neq k' \right\}.
\end{aligned} 
$$
Here, $\mathcal{C}_{ n}$ denotes
a particular partition of the units in the population with $K_n$ exclusive clusters. With an abuse of notation, let $c(i) \subseteq \{1, \cdots, n\}$ denote the cluster assigned to unit $i$, and $|c_k| = n_k$ the number of individuals in cluster $k$. 
For ease of exposition, we focus our discussion and assumptions below in the presence of a given clustering $\mathcal{C}_{ n}$ and return to choosing the optimal clustering in Section \ref{sec:opt}.

\begin{ass}[Cluster designs] \label{ass:clusters} Let $\mathbf{A}$ and $\{Y_i(\mathbf{d})\}_{i \in \{1, \cdots, n\} \mathbf{d} \in \{0,1\}^n}$ be fixed (non-random). Assume that $\mathcal{C}_{ n}$ is a deterministic function of $\mathbf{A}$ and is such that the number of units in cluster $k$ is $n_k = \gamma_k \frac{n}{K_n}$ with $\max_k \gamma_k \le \bar{\gamma} < \infty$, for some $\bar{\gamma} < \infty$. Assume that $D_i = \tilde{D}_{c(i)}$ almost surely, for independent random variables $(\tilde{D}_1, \cdots, \tilde{D}_{K_n})$ where for $k \in \{1, \cdots, K_n\}$, $\tilde{D}_{k} \sim \mathrm{Bern}(0.5)$.  
\end{ass}

Assumption \ref{ass:clusters} states that the clustering is constructed using information from the adjacency matrix only (i.e., it is independent of potential outcomes), clusters are proportional in size\footnote{This restriction is sufficient but not necessary for our analysis, and it is imposed for expositional convenience. It is possible to relax such an assumption by assuming that $\sum_{k = 1}^{K_n} \gamma_k^2/K_n = \mathcal{O}(1)$.}, and individuals in a given cluster are all assigned either treatment or control with equal probability. Assumption \ref{ass:clusters} restricts the class of designs to cluster designs, motivated by our focus on overall treatment effects and empirical practice.

Motivated by standard practice both in industry and in field experiments, we consider estimators obtained by simple differences in means between treated and control clusters \citep[which are those used in most of applications on online platforms,][]{karrer2021network}; see Remark \ref{rem:estimator} for comparison with other estimators studied in the literature.  Because $P(D_i = 1) = 1/2$, we construct a (biased) estimator of treatment effects as\footnote{Here, for expositional convenience, we make explicit the dependence of $\hat{\tau}_n$ with $\mathcal{C}_n$, where the clustering $\mathcal{C}_n$ only characterizes the distribution of treatment assignments $D_i$ but not the structure of the estimator.} 
\begin{equation} \label{eqn:estimatormain}
\small 
\begin{aligned} 
\hat{\tau}_n(\mathcal{C}_{ n}) = \frac{2}{n} \sum_{i=1}^n \Big[D_i Y_i - (1 - D_i) Y_i  \Big]  = \frac{2}{n}\sum_{i=1}^{n}(2D_i - 1)Y_i. 
\end{aligned} 
\end{equation} 
The estimator $\hat{\tau}_n(\mathcal{C}_{ n})$ takes the difference between the treated units and control units' outcomes and appropriately rescales this difference by the probability of treatment. 
Therefore, $\hat{\tau}_n(\mathcal{C}_{ n})$ depends on the clustering $\mathcal{C}_{ n}$ \textit{only} because the distribution of the treatments depends on the clusters under Assumption \ref{ass:clusters}. Studying the estimator in Equation \eqref{eqn:estimatormain} is a natural starting point for the analysis of cluster experiments. Variants of difference in means estimators (possibly also with regression adjustments discussed in Remark \ref{rem:reg_adj}) are often used or studied in practice \citep{karrer2021network, holtz2020reducing, savje2017average}. One could also normalize each sum in Equation \eqref{eqn:estimatormain} by the number of treated and control units (instead of using knowledge about the treatment probability). This would improve the stability of the estimator but complicate the analysis when the number of treated units is stochastic (e.g., when clusters have different sizes).

\begin{rem}[Alternative estimators]  \label{rem:estimator}
Alternative estimators studied in the literature are inverse probability weights estimators \citep[e.g.,][]{aronow2017estimating, ugander2013graph}. These estimators reweight by the probability that the assignment of \textit{all} friends being either zero or one. Unless researchers impose additional restrictions on the exposure mapping, in the network context, these estimators are highly sensitive to instability of the propensity score because the propensity score scales to zero \textit{exponentially} in the maximum degree.\footnote{Even if the maximum degree grows at a slow rate, such as $\mathcal{N}_{n, \max} \propto \log(n), \bar{\phi}_n \propto 1/\log^2(n)$, so that effects are local to zero but non negligible for inference the propensity score converges to zero at rate $1/n$.} As a result, the IPW estimator is particularly sensitive to the choice of the design, as in contexts where neighbors are assigned to different clusters, the propensity score converge to zero fast in the number of friends.  
An alternative is model-based estimators, which are subject to model misspecification. A third class of estimators are trimming estimators, which trim observations based on their position in the network and also optimize over the choice of the estimators and designs. We study this class of estimators and designs in Appendix \ref{sec:other_estimators}. 
\qed   
\end{rem}

\begin{rem}[Covariate adjustment] \label{rem:reg_adj} Our framework generalizes to settings that use covariate adjustment. Denote $\bar{\mu}_i$ an arbitrary predictor for $\mu_i(\mathbf{0})$ that only uses information from some arbitrary baseline observable characteristics (i.e., does not depend on the treatments or end-line outcomes in the experiment).  The estimator with such an adjustment takes the form 
\begin{equation} \label{eqn:reg_adj}
\small 
\begin{aligned} 
\frac{2}{n} \sum_{i=1}^n (Y_i - \bar{\mu}_i)(2 D_i - 1).
\end{aligned} 
\end{equation}
Note that $Y_1 
- \bar{\mu}_1, \ldots, Y_n - \bar{\mu}_n$ typically have different ranges ex ante even under Assumption \ref{ass:exposure_restriction}. To account for this, we can apply the more general result in Appendix \ref{sec:heterogeneity}. 
\qed 
\end{rem}

\begin{rem}[Saturation designs] It is possible to consider designs where treatments are assigned with different cluster-level probabilities. Our main insights for the bias and variance characterization follow similarly to what we present for cluster experiments,  with appropriate modifications presented in Appendix \ref{sec:saturations}. \qed 
\end{rem}

\vspace{-5mm}

\section{Bias, Variance and MSE of a given clustering } \label{sec:when}

As a necessary first step, this section focuses on studying the bias, variance, and mean-squared error of a \textit{given} clustering $\mathcal{C}_{ n}$, deferring the question of choosing an optimal clustering to the next section.  Denote $\mathbb{E}_\mu[\cdot]$ the expectation for given potential outcome functions $\mu$ (conditional on $\mathbf{A}$). The components $\mathcal{O}(\cdot), o(\cdot)$ in the following lemmas and theorems hold uniformly over all cluster designs with $K_n$ many clusters that satisfy Assumption \ref{ass:clusters}. The notation for $\mathcal{O}(\cdot), o(\cdot)$ may depend on finite constants defined in Section \ref{sec:setup}, including $\bar{\psi}, \bar{\gamma}$.

Our objective reads as follows 
\begin{equation} \label{eqn:worst-case} 
\small 
\begin{aligned} 
\mathcal{B}_n(\mathcal{C}_{ n}, \lambda) = \sup_{\mu \in \mathcal{M}} \left\{\mathbb{E}_\mu\Big[\Big(\hat{\tau}_n(\mathcal{C}_{ n}) - \mathbb{E}_\mu[\hat{\tau}_n(\mathcal{C}_{ n})]\Big)^2\Big] + \lambda \Big(\tau_{n,\mu} - \mathbb{E}_\mu[\hat{\tau}_n(\mathcal{C}_{ n})]\Big)^2 \right\},  
\end{aligned} 
\end{equation}  
where the supremum is intended over $(\mu_i)_{i=1}^n$, and $\mathcal{M}$ is the product space of potential outcome functions as in Assumption \ref{ass:exposure_restriction}. The user may specify $\lambda$.

For $\lambda = 1$, $\mathcal{B}_n(\mathcal{C}_n, \lambda)$ is the worst-case mean-squared error. 
It is also possible to minimize the \textit{length of confidence intervals}, see Section \ref{sec:bias_aware_inferece}. 


\vspace{-3mm} 

\subsection{Worst-case bias, variance, and MSE}

Let
\begin{equation}\label{eq:bn}
\small 
\begin{aligned} 
b_n(\mathcal{C}_n) = \frac{1}{n}\sum_{i=1}^n\frac{1}{|\mathcal{N}_i|} \Big|\mathcal{N}_i \bigcap \Big\{j: c(i) \neq c(j)\Big\} \Big|
\end{aligned} 
\end{equation}
and $\Big\{j: c(i) \neq c(j)\Big\}$ denotes the set of units $j$ in a different cluster from unit $i$.

\begin{lem}[Worst-case bias] \label{lem:worst_case_bias}  Let Assumptions \ref{ass:first_order}, \ref{ass:exposure_restriction},  \ref{ass:clusters} hold. Then
$
\sup_{\mu \in \mathcal{M} } \Big|\tau_{n, \mu} - \mathbb{E}_\mu[\hat{\tau}_n(\mathcal{C}_{ n})]\Big| = \bar{\phi}_n b_n(\mathcal{C}_n).
$ 
\end{lem}

See Appendix \ref{sec:proof_bias} for the proof. 
The worst-case bias can be expressed as the number of friends of $i$ in a different cluster from $i$, appropriately reweighted by the overall number of friends of $i$. The size of the worst-case bias also depends on the magnitude of spillover effects $\bar{\phi}_n$. Lemma \ref{lem:worst_case_bias} provides a formal justification to notions of between clusters connectedness based on the worst-case bias of the treatment effect estimator. In the proof, the worst-case bias is attained as the condition in Equation \eqref{eqn:interference} holds with equality (Appendix \ref{sec:heterogeneity} studies settings where effects are heterogeneous across individuals). The expression for the bias connects to measures of cluster purity such as those in \cite{goel2009respondent}, \cite{toulis2013estimation}, motivating clustering purity based on the class of models in Assumption \ref{ass:exposure_restriction}.

To derive the worst-case variance, we define 
\begin{equation}\label{eq:psibar}
\small 
\begin{aligned} 
\bar{\psi}^{1/2} =  2\max\{|\psi_R|, |\psi_L|\} .
\end{aligned} 
\end{equation}
By Assumption \ref{ass:exposure_restriction} (i), $\bar{\psi}$ can also be expressed as $\max_{\mu_i\in \mathcal{M}_i}(\mu_i(\mathbf{1}) + \mu_i(\mathbf{0}))^2$ for every $i$.

\begin{lem}[Worst-case variance] \label{prop:main} 
Let Assumptions \ref{ass:first_order}, \ref{ass:exposure_restriction}, \ref{defn:local}, \ref{ass:clusters} hold. Then as $K_n \rightarrow \infty$
$$
\small 
\begin{aligned} 
&  \sup_{\mu \in \mathcal{M}} \mathbb{E}_\mu\Big[\Big(\hat{\tau}_n(\mathcal{C}_{ n}) - \mathbb{E}[\hat{\tau}_n(\mathcal{C}_{ n})]\Big)^2\Big]  = \frac{1}{n^2} \sum_{k=1}^{K_n} n_k^2 \times  \Big[\bar{\psi} + \mathcal{O}(\varepsilon) \Big]. 
\end{aligned} 
$$
\end{lem}

See Appendix \ref{app:propmain} for the proof. 
Lemma \ref{prop:main} characterizes the worst-case variance as the number of clusters $K_n$ grows (at an arbitrary rate). Such bounds depend on the variation in cluster sizes, captured through
$
\frac{1}{n^2} \sum_{k=1}^{K_n} n_k^2, 
$
times a positive constant $\bar{\psi}$. 
By leveraging the local asymptotic framework in Assumption \ref{defn:local}, a key insight is that the variance is (asymptotically) driven by the within-cluster correlations instead of cross-cluster connections, up-to a $\varepsilon$-approximation.  
To our knowledge, this characterization of the worst-case variance is novel to the literature.

\begin{thm} \label{thm:Bn_Bnstar} Suppose that Assumptions \ref{ass:first_order}, \ref{ass:exposure_restriction}, \ref{defn:local}, \ref{ass:clusters} hold.  Then as $K_n \rightarrow \infty$
$
\mathcal{B}_n(\mathcal{C}_{ n}, \lambda) = \mathcal{B}_n^*(\mathcal{C}_{ n}, \lambda)(1 + \mathcal{O}(\varepsilon)),
$
where
\begin{equation}\label{eqn:bound2}
\small 
\begin{aligned} 
\mathcal{B}_n^*(\mathcal{C}_{ n}, \lambda) = \bar{\psi}\cdot \frac{1}{n^2} \sum_{k=1}^{K_n} n_k^2  + (\lambda \bar{\phi}_n^2) \cdot b_n^2(\mathcal{C}_n).
\end{aligned} 
\end{equation} 
\end{thm}

See Appendix \ref{app:worst_case_mse} for the proof.
Theorem \ref{thm:Bn_Bnstar} leverages the lemmas above for the upper bound, and, in addition, it constructs an admissible data-generating process so that the upper bound is attained for the worst-case MSE to show sharpness.  Theorem \ref{thm:Bn_Bnstar} formalizes the intuition that as the number of clusters increase, and clusters have approximately the same size, the variance decreases as the correlation between treatment decreases; on the other hand, the bias increases since more units may have friends in different clusters.

\begin{rem}[Choice and sharper bound for $\bar{\psi}$] Because here we consider a finite population with fixed potential outcomes, the worst-case variance depends on $\bar{\psi}$, a constant proportional to the square of the largest value of the realized outcomes. In Appendix \ref{sec:random_outcomes}, we provide similar calculations in settings where potential outcomes are random functions, in which case we can interpret $\bar{\psi}$ as a sharper bound on the \textit{moments} of potential outcomes.     \qed 
\end{rem} 
\begin{rem}[Unobserved $\mathbf{A}$] \label{rem:unobserved_a} Suppose that $\mathbf{A}$ is unobserved or \textit{partially} observed, and researchers have a prior over $\mathbf{A}$. In this case, the worst-case variance does not change, the worst-case bias is replaced by $\bar{\phi}_n\E_{\mathbf{A}}[b_n^2(\mathcal{C}_n)]$, 
where the expectation may depend on partial network information \citep[][]{breza2020using}. 
\qed 
\end{rem}

\vspace{-3mm} 

\subsection{Comparison between a given clustering and a Bernoulli design}

Before discussing the optimal choice of the cluster, we explore a simpler question, that is central in many randomized trials with spillovers. Suppose we are given a clustering $\mathcal{C}_n$, but we have the option to run a Bernoulli design where treatments are assigned independently between individuals. When should we prefer the (more complex) cluster design over a Bernoulli design? Denote a Bernoulli design as $\mathcal{C}_{\mathrm{B},n} = \{c_k = \{k\}, k \in \{1, \cdots, n\}\}$. 

\begin{thm}[Rule of thumb] \label{thm:thumb} Suppose that Assumptions \ref{ass:first_order},  \ref{ass:exposure_restriction}, \ref{ass:clusters} hold.  Let $\underline{\gamma} = \frac{1}{K_n} \sum_{k=1}^{K_n} \gamma_k^2$, where $\gamma_k = n_k K_n/n$. Suppose that each individual has at least one connection. Then  
$
\frac{\mathcal{B}_n^{*}(\mathcal{C}_{\mathrm{B},n}, \lambda)}{\mathcal{B}_n^{*}(\mathcal{C}_{ n}, \lambda)} \ge 1 \quad  \text{ if }   \quad  (\lambda \bar{\phi}_n^2/\bar{\psi})^{-1} \le \frac{K_n(1 - 
b_n(\mathcal{C}_n)^2) }{\underline{\gamma} }. 
$
\end{thm}

See Appendix \ref{app:thumb} for the proof. 
Theorem \ref{thm:thumb} provides an explicit rule for choosing a cluster design with given clusters $\mathcal{C}_{ n}$ over a Bernoulli design $\mathcal{C}_{\mathrm{B},n}$, using the approximate objective $\mathcal{B}_n^*(\mathcal{C}_n, \lambda)$. The right-hand side depends on three \textit{observables}: (i) $\underline{\gamma}$, (ii) the expected bias of the clustering method (as a function of $\mathbf{A}$), and (iii) the number of clusters $K_n$. The left-hand side depends on $\bar{\phi}_n^2/\bar{\psi}$. 
For $\lambda = 1$, known $\bar{\psi}$, the rule of thumb provides the smallest spillover effects that would guarantee that the cluster design dominates the Bernoulli design. The figure on the right-hand side of Table \ref{fig:min_size} illustrates the exact threshold for the spillovers that justifies a cluster experiment. 
In our empirical application in Section \ref{sec:empirics} using data  from the universe of users at Facebook we find $\bar{\phi}_n \ge 0.003$ motivates a cluster experiment for binary outcomes.

\begin{table}[!ht]
\caption{Minimum spillover effects' size that justifies running a cluster experiment instead of a Bernoulli design in the presence of equally sized clusters $(\gamma_k = 1$ for all $k \in \{1, \cdots, K_n\})$, and where the clustering $\mathcal{C}_n$ is given. The bias denotes  $\frac{1}{n} \sum_{i=1}^n \frac{1}{|\mathcal{N}_i|} |j \in \mathcal{N}_i: c(i) \neq c(j)| \in [0,1]$. The table on the left-hand side reports the value of $\bar{\phi}_n$ for different values of the bias and number of clusters, fixing $\bar{\psi} = 4$ (e.g., valid when outcomes are bounded between $[-1,1]$). In the figure on the right-hand side, the y-axis reports in log-scale the minimum size of the spillover effects $\bar{\phi}_n/\sqrt{\bar{\psi}}$ to run a cluster experiment as in Theorem \ref{thm:thumb}. The x-axis reports the number of clusters. } \label{fig:min_size}
    \begin{tabularx}{\linewidth}{*{2}{>{\centering\arraybackslash}X}}
    \begin{tabular}[b]{cccc}
     Bias & $K_n$ & $\bar{\psi}$ & $\bar{\phi}_n$ \\
    \hline
      0.25 & 100 & 4 & 0.21 \\ 
      0.5 & 100 & 4 & 0.23 \\ 
      0.75 & 100 & 4 & 0.3 \\ 
      0.25 & 200 & 4 & 0.15 \\ 
      0.5 & 200 & 4 & 0.16 \\ 
      0.75 & 200 & 4 & 0.21 \\ 
      0.25 & 500 & 4 & 0.09 \\ 
      0.5 & 500 & 4 & 0.10 \\ 
      0.75 & 500 & 4 & 0.14 \\
        \hline
    \end{tabular}
    &
  \includegraphics[scale = 0.3]{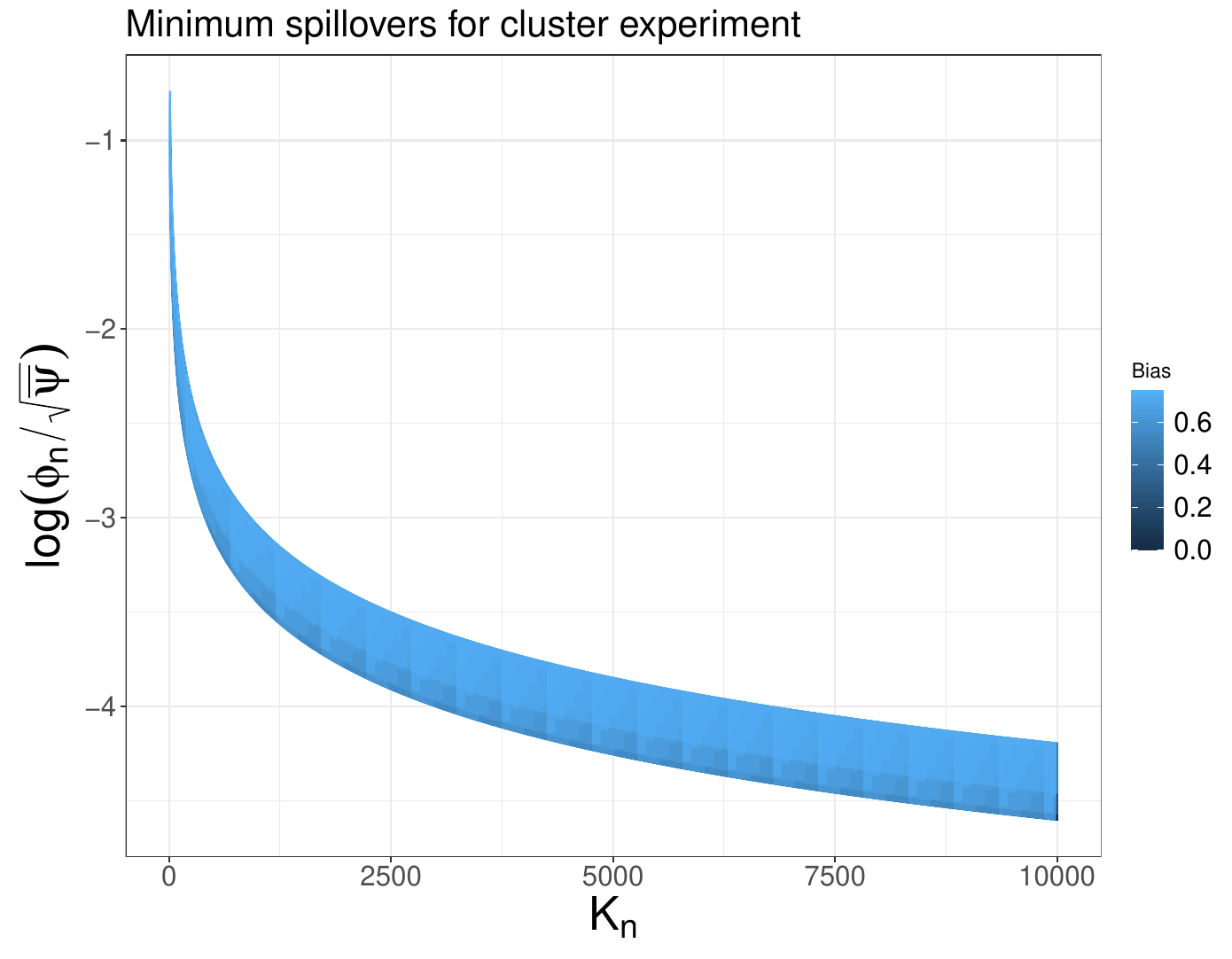}    \\
    &
    \end{tabularx}
\end{table}

\vspace{-15mm} 

\section{Optimization over the choice of the clustering}
\label{sec:opt}

In this section, we turn to designing the optimal clustering. 
Theorem \ref{thm:Bn_Bnstar} shows that, up to the approximation error induced by the local asymptotic framework, the relevant design objective is
$ 
\mathcal{B}_n^*(\mathcal C_n,\lambda)
=
\bar\psi\frac{1}{n^2}\sum_{k=1}^{K_n}n_k^2
+
\lambda\bar\phi_n^2 b_n(\mathcal C_n)^2 .
$ 
Since multiplying the objective by the positive constant \(\bar\psi\) does not affect the minimizer, the optimization problem is equivalent to minimizing
\begin{equation} \label{eqn:optimization_v10}
R_n(\mathcal C_n;\xi_n)
=
\frac{\xi_n }{n^2}\sum_{k=1}^{K_n}n_k^2
+
b_n(\mathcal C_n)^2 .
\end{equation} 
where $\xi_n = (\lambda \bar{\phi}_n^2/\bar{\psi})^{-1}$. 
The first component captures the worst-case variance contribution, and the second component captures the squared worst-case bias contribution.

Equation \eqref{eqn:optimization_v10} solves a weighted min-cut problem with an additional \textit{penalization} term that depends on the variance of the clusters' size. The optimization problem in Equation \eqref{eqn:optimization_v10} differs from the minimum normalized-cut problem \citep[e.g.,][]{shi2000normalized, ling2020certifying}, or related graph clustering problems \citep[see][and references therein]{pokhilko2019d, parker2017optimal, zhang2023asymptotics} whose objective only depends on measures of cluster purity such as  
$
\sum_{k=1}^K \sum_{i: c(i) = k}|j \in \mathcal{N}_i: c(i)\neq c(j) |/ \sum_{i: c(i) = k}|\mathcal{N}_i|.    
$
Different from our proposal, 
standard min-cut problems use a different denominator (since the objective is not motivated by the bias of treatment effect estimators), and, most importantly, do not account for the additional variance component.
Here instead, researchers should balance the bias \textit{and variance} for the clusters. 
To our knowledge,  the problem in Equation \eqref{eqn:optimization_v10} has not been studied before.

Let $\mathbf{V} =  \diag(|\mathcal{N}_{1}(1)|, \ldots, |\mathcal{N}_{n}(1)|)$. Define  
\begin{equation}\label{eq:laplacian}
\small 
\begin{aligned} 
\mathbf{L} = \mathbf{V}^{-1}\mathbf{A}.
\end{aligned} 
\end{equation}
Given a set of $K$ clusters (i.e., fixing the number of clusters), for any cluster mapping $c: \{1, \cdots, n\} \mapsto \{1, \cdots, K\}$, let $\mathbf{M}_c(K)\in \R^{n\times K}$ with
$$
\small 
\begin{aligned} 
\mathbf{M}_{c, ik}(K) = 1\{c(i)=k\}, \quad i\in \{1, \cdots, n\}, \quad  k\in \{1, \cdots, K\},
\end{aligned} 
$$ 
where $\mathbf{M}_c(K)$'s dimension depends on $K$. Here, $\mathbf{M}_{c, ik}(K) = 1$ if unit $i$ is in cluster $k$. 
The following result gives an exact reformulation of the design problem through a trace-optimization program with a semi-definite (convex) constraint for a fixed number of clusters.

\begin{thm}[Trace-optimization program]
\label{thm:opt1}
Fix \(K\ge 2\), and let \(\mathfrak C_K\) denote the set of clusterings with \(K\) nonempty clusters with each cluster $k$ having size $n_k \le \bar{\gamma} n/K$ for given $\bar{\gamma}$ in Assumption \ref{ass:clusters}.  
A clustering \(c^\star\in\mathfrak C_K\) minimizes
\(\mathcal B_n^*(\mathcal C_n,\lambda)\) if and only if
\[
\begin{aligned}
\min_{c \in \mathfrak C_K,z,t}\quad
&  \frac{\xi_n }{n^2} \mathrm{Trace}\left( \mathbf{1} \mathbf{1}_n^\top \mathbf{M}_c(K) \mathbf{M}_c(K)^\top \right)+ t, \\ 
\text{s.t.}\quad
& z= \frac{1}{n} \mathrm{Trace}\left(\mathbf{L}^\top (\mathbf{1}_n \mathbf{1}_n^\top - \mathbf{M}_c(K) \mathbf{M}_c(K)^\top \right) , \qquad 
\begin{pmatrix}
t & z\\
z & 1
\end{pmatrix}
\succeq 0 .
\end{aligned}
\]
\end{thm}

The proof is in Appendix \ref{proof:thm:opt1}.

Theorem \ref{thm:opt1} is an exact reformulation, but it is still combinatorial because $\mathbf{M}_c(K)$ is a binary matrix. We therefore replace \(\mathbf{M}_c(K)\mathbf{M}_c(K)^\top\) with a convex relaxation, a common approach in the literature for clustering \citep[e.g.][]{hong2021optimal}. 
Let $\mathbf{X}_c(K) = \mathbf{M}_c(K) \mathbf{M}_c(K)^\top$. 
Every exact
\(K\)-cluster matrix \(\mathbf X_c(K)=\mathbf M_c(K)\mathbf M_c(K)^\top\)
satisfies
$ 
\mathbf X_c(K)\succeq0,
X_{c,ii}(K)=1,
0\le X_{c,ij}(K)\le1.
$ 
In addition,
$ 
\frac{n^2}{K}
\le
\operatorname{tr}\left(
\one_n\one_n^\top\mathbf X_c(K)
\right)
=
\sum_{k=1}^K n_k^2
\le
\bar{\gamma} \frac{n^2}{K}.
$ 
The lower bound follows from Cauchy--Schwarz. The upper bound avoids unbalanced clusters (Assumption \ref{ass:clusters}).  This motivates the following SDP relaxation:
\begin{equation}
\label{eq:SDP}
\small
\begin{aligned}
\underline R_{n,K}^{\mathrm{sdp}}(\xi_n)
=
\min_{\mathbf X,z,t}\quad
&
\frac{\xi_n }{n^2}
\operatorname{tr}\left(
\one_n\one_n^\top \mathbf X
\right)
+
 t
\\
\text{s.t.}\quad
&
\mathbf X\succeq0,
\qquad
\mathbf{X}_{ii}=1,\quad i=1,\ldots,n,
\\
&
z=
\frac1n
\operatorname{tr}\left(
\mathbf L^\top
\left[
\one_n\one_n^\top-\mathbf X
\right]
\right), \qquad \begin{pmatrix}
t & z\\
z & 1
\end{pmatrix}
\succeq0
\\
& \mathrm{Optional: } \quad   \frac{n^2}{K}
\le
\operatorname{tr}\left(
\one_n\one_n^\top \mathbf X
\right)
\le
\bar{\gamma} \frac{n^2}{K}, \quad 0\le \mathbf{X}_{ij}\le1,
\quad i,j=1,\ldots,n,.
\end{aligned}
\end{equation}

The objective and all constraints in \eqref{eq:SDP} are convex, providing us with a simple semi-definite program. Let \(\widehat{\mathbf X}_K\) denote a solution
to \eqref{eq:SDP}. We then retrieve a feasible clustering by applying \(K\)-means
to the first \(K\) eigenvectors of \(\widehat{\mathbf X}_K\), whose properties are well studied in the literature \citep{von2007tutorial}. Finally, we compare solutions for different values of $K$ and choose the clustering with the smallest objective, see Algorithm \ref{alg:1}. Some constraints can be optional to improve optimization speed, see Remark \ref{rem:optional}.

\begin{algorithm}[!ht]
\caption{Causal Clustering}
\label{alg:1}
\begin{algorithmic}[1]
\Require Adjacency matrix \(\mathbf A\), smallest and largest number of clusters
\(\underline K,\bar K\), parameter $\bar{\gamma}$ (default is $\bar{\gamma} = 10$), and calibration \(\xi_n=(\lambda\bar\phi_n^2/\bar\psi)^{-1}\). Boolean ``$K$-constraint''. 

\item \textit{If $K$-constraint is false:} Solve the SDP relaxation in Equation \eqref{eq:SDP} without the optional constraints and let
    \(\widehat{\mathbf X}\) denote the solution.
\For{\(K\in\{\underline K,\ldots,\bar K\}\)}
    \begin{algsubstates}
    \State \textit{If $K$-constraint is true:} Solve the SDP relaxation in Equation \eqref{eq:SDP} with the optional constraints, and let
    \(\widehat{\mathbf X}\) denote the solution.

    \State Retrieve a feasible clustering with \(K\) clusters by applying
    \(K\)-means to the first \(K\) eigenvectors of \(\widehat{\mathbf X}\).
    Call the resulting clustering \(\widehat{\mathcal C}_{n,K}\).

    \State Compute the objective in Equation \eqref{eqn:optimization_v10},
    $ 
    R_n(\widehat{\mathcal C}_{n,K};\xi_n)
    =
    \frac{\xi_n}{n^2}\sum_{k=1}^{K}\widehat n_k^2
    +
     b_n(\widehat{\mathcal C}_{n,K})^2 .
    $ 
    \end{algsubstates}
\EndFor

\State Denote the chosen clustering and the approximation error from discretization as
\[
\mathcal C_n^\star
\in
\arg\min_{\widehat{\mathcal C}_{n,K}:K\in\{\underline K,\ldots,\bar K\}}
R_n(\widehat{\mathcal C}_{n,K};\xi_n), \qquad \Gamma_n := R_n(\mathcal{C}_n^\star;\xi_n)\Big/\min_{K\in\{\underline K,\ldots,\bar K\}}
\underline R_{n,K}^{\mathrm{sdp}}(\xi_n).
\]

\Return \(\mathcal C_n^\star\), and  $\Gamma_n$
\end{algorithmic}
\end{algorithm}

The following theorem provides an observable bound on the optimization error.

 \begin{thm}[Observable approximation bound]
\label{thm:observable_approximation_error}
Let \(\Gamma_n\) be the observed approximation error from discretization of the SDP solution reported by Algorithm \ref{alg:1} and \(\mathcal C_n^\star\) be the corresponding clustering returned by Algorithm
\ref{alg:1}. Let $\xi_n = (\lambda \bar{\phi}_n^2/\bar{\psi})^{-1}$. Let \(\bar{\mathfrak C}\) denote the set of clusterings with \(K \in \{\underline{K}, \cdots, \bar{K}\}\) nonempty clusters with each cluster $k$ having size $n_k \le \bar{\gamma} n/K$ for given $\bar{\gamma}$ in Assumption \ref{ass:clusters}.   Then 
\[
\frac{
\mathcal B_n^*(\mathcal C_n^\star,\lambda)
}{
\min_{\mathcal C_n \in \bar{\mathfrak C}}
\mathcal B_n^*(\mathcal C_n,\lambda)
}
\le
\Gamma_n.
\]
\end{thm}

The proof is in Appendix \ref{proof:thm:observable_approximation_error}.
 Theorem \ref{thm:observable_approximation_error} provides us with a certificate to researchers about the quality of the clustering algorithm, where the upper bound $\Gamma_n$ is observed.   

\begin{rem}[Difference with spectral clustering]
Unlike minimum normalized cut typical for spectral clustering \citep{von2007tutorial}, the 
objective combines a cluster-size penalty,
$ 
\frac{1}{n^2}
\operatorname{tr}\!\left(
\one_n\one_n^\top \mathbf M_c(K)\mathbf M_c(K)^\top
\right),
$ 
with the square of a left-normalized cut, \\ 
$ 
\left[
\frac1n
\operatorname{tr}\!\left(
\mathbf L^\top
\left\{
\one_n\one_n^\top
-
\mathbf M_c(K)\mathbf M_c(K)^\top
\right\}
\right)
\right]^2, 
$ 
and does not necessarily admit a spectral relaxation. 
This differs from standard spectral clustering, which fixes \(K\)
and recovers clusters from eigenvectors of a graph Laplacian designed to
optimize a cut or normalized-cut criterion \citep{von2007tutorial}. Here,
instead, the SDP solution \(\widehat{\mathbf X}_K\) is obtained from an
optimization program that depends jointly on  
\(\mathbf  L\), which captures the bias induced by cross-cluster exposure, and
on \(\one_n\one_n^\top\), which captures the variance contribution from cluster
sizes. The subsequent eigendecomposition is only used as a rounding
device to the semidefinite solution. 
\qed
\end{rem}

\begin{rem}[Optional constraints] \label{rem:optional}
Note that the two sets of constraints are optional; these can improve stability and guarantee that Assumption \ref{ass:clusters} is satisfied by design but are not required for the validity of Theorem \ref{thm:observable_approximation_error} below. In practice, because they can make optimization slower (as without such constraint we only have to compute $\hat{\mathbf{X}}$ once, instead of each time for each $K$), we recommend not to impose these constraints, and only impose them if the resulting solution obtained through Algorithm \ref{alg:1} presents highly sized-unbalanced clusters violating Assumption \ref{ass:clusters}.  \qed 
\end{rem} 


\section{Inference, unknown spillovers and extensions} \label{sec:extensions}

Here we present settings where researchers are interested in inference on the GATE, settings where $\bar{\phi}_n^2/\bar{\psi}$ is unknown and a brief description of our additional extensions in Appendix \ref{app:more_extensions}. 

\vspace{-3mm}

\subsection{Experimental design for bias-aware inference} \label{sec:bias_aware_inferece}

In this section, we turn to settings where researchers are interested in conducting inference on the global treatment effect $\tau_n$. 
First, define 
  \begin{equation}\label{eq:exante_CBA_CI}
  \small 
  \begin{aligned} 
    \mathcal{I}_{\ea}(b, \sigma) = [\hat{\tau} \pm \chi(b, \sigma)],
    \end{aligned} 
  \end{equation}
  where
  \begin{equation}
  \small 
  \begin{aligned} 
    \label{eq:chi}
    \chi(b, \sigma) = (1-\alpha)\text{-th quantile of }|b + \sigma Z|\text{ where }Z \sim N(0, 1).
    \end{aligned} 
  \end{equation}
 In addition, define 
  $$
  \small 
  \begin{aligned} 
  \beta_n(\mu) = \tau_{n,\mu} - \mathbb{E}_\mu[\hat{\tau}_n(\mathcal{C}_n)], \quad \nu_n(\mu) = \sqrt{\mathbb{E}_\mu\left[\Big(\mathbb{E}_\mu[\hat{\tau}_n(\mathcal{C}_n)] - \hat{\tau}_n(\mathcal{C}_n)\Big)^2\right]}, \quad \sigma_{*,n}^2 =  \frac{\bar{\psi}}{n^2} \sum_{k=1}^{K_n} n_k^2. 
  \end{aligned} 
  $$
Here $b_{*,n}$ and $\sigma_{*,n}^2$ correspond to the worst-case bias and variance derived in Section \ref{sec:when}.

  \begin{thm} \label{thm:inference1} Suppose that Assumptions \ref{ass:first_order}, \ref{ass:exposure_restriction}, \ref{ass:clusters} hold. Let $\mathcal{N}_{n, \max} = o(K_n^{1/8})$. 
  For any $\mu$ and sequences $\bar{b}_n, \bar{\sigma}_n$ such that $|\beta_n(\mu)| \le \bar{b}_n$, $\nu_n(\mu) \le \bar{\sigma}_n$, with $\bar{\sigma}_n^2 \ge \frac{\eta + o(1)}{K_n}$ for some $\eta > 0$, it follows that 
  $
  \liminf_{n \rightarrow \infty} \mathbb{P}(\tau_{n, \mu} \in \mathcal{I}_{\ea}(\bar{b}_n, \bar{\sigma}_n)) \ge 1 - \alpha. 
  $ 
  \end{thm}

See Appendix \ref{proof:lem:inference1} for the proof. 
Theorem \ref{thm:inference1} establishes that the confidence interval $\mathcal{I}_{\ea}(\bar{b}_n, \bar{\sigma}_n)$ guarantees valid coverage pointwise for any $\mu$, provided that $\bar{b}_n$ and $\bar{\sigma}_n$ are valid upper bounds on the bias and standard deviation under a potential outcome model $\mu$.
In addition, Theorem \ref{thm:inference1} requires that such upper bound on the variance $\bar{\sigma}_n^2$ scales at order $1/K_n$ (or slower) to avoid degenerate solutions.\footnote{ This condition is attained when observations are e.g., positively dependent. It is also consistent with and attained by the worst-case variance derived in Lemma \ref{prop:main}, under a local asymptotic framework with $\varepsilon = o(1)$ in Assumption \ref{defn:local}.}  Note that the upper bound for the bias can be obtained directly from Lemma \ref{lem:worst_case_bias}. 

We optimize over an objective of the form 
$ 
\sup_{\mu \in \mathcal{M}} |\mathcal{I}_{\ea}(\beta_n(\mu), \nu_n(\mu)|.
$ 
As for our results about optimal clustering, in the characterization of the optimal confidence length for experimental design we leverage Assumption \ref{defn:local}.

\begin{thm} \label{thm:worst_case_inference} Suppose that Assumptions \ref{ass:first_order}, \ref{ass:exposure_restriction}, \ref{defn:local}, \ref{ass:clusters} hold. Then 
$$
\small 
\begin{aligned} 
\lim_{n\rightarrow \infty} \frac{\sup_{\mu \in \mathcal{M}} |\mathcal{I}_{\ea}(\beta_n(\mu), \nu_n(\mu))|}{|\mathcal{I}_{\ea}^*|(1 + \mathcal{O}(\varepsilon))} = 1, \quad 
|\mathcal{I}_{\ea}^*| = 2\left|\chi\Big(\bar{\phi}_n b(\mathcal{C}_n), \sigma_{*,n}\Big)\right|
\end{aligned} 
$$
\end{thm}

See Appendix \ref{proof:thm:worst_case_inference} for the proof. 
Theorem \ref{thm:worst_case_inference} shows that the length of the confidence intervals using the worst-case approximations for the bias and variance derived in Section \ref{sec:when} approximates the worst-case length of the confidence interval. 

Given this result, we can design the clusters with minimum worst-case confidence interval length. Since \(\chi(b,\sigma)\) is increasing in both arguments, any clustering that
minimizes \(|\mathcal I_{\ea}^*|\) must be Pareto efficient in worst-case bias
and variance. Therefore, the approximate algorithm proceeds as follows. First,
we compute candidate points on the Pareto frontier between worst-case variance
and bias by solving the SDP relaxation in Equation \eqref{eq:SDP} over a grid
of scalarization weights \(\xi\) and numbers of clusters \(K\). We then evaluate the target criterion
$ 
|\mathcal I_{\ea}^*|
=
2\chi\left(
\bar\phi_n b_n(\mathcal C_n),
\sqrt{
\bar\psi\frac{1}{n^2}\sum_{k=1}^{K_n}n_k^2
}
\right)
$ 
at the rounded clusterings and choose the clustering among those selected that yield the shortest interval.
Finally, note that once the experiment is conducted, it is possible to conduct inference on $\tau_\mu$ by using an estimator for the variance (instead of a worst-case bound), and invoke Theorem \ref{thm:inference1} for inference. Such estimator can be constructed using, for instance, the HAC estimator in \cite{gao2023causal}. We provide details in Appendix \ref{sec:estimated_variance2}. 

\vspace{-3mm}

\subsection{MSE with unknown spillover effects: unknown $\xi$}
\label{sec:unknown_spillover_size}

The objective in Equation \eqref{eqn:bound2} depends on the scalar
$\xi = \lambda\bar{\phi}_n^2/\bar{\psi}$, which may be difficult to calibrate ex ante. The goal of this subsection is to show how our framework and our main results can be used to also accommodate worst-case values of $\xi$. For simplicity, we consider a simple criterion corresponding to the ratio of MSE between the chosen clustering and the oracle that knows $\xi$. Other criteria are however possible (as for example worst-case/bias-aware interval lengths studied in Section \ref{sec:bias_aware_inferece}).

Specifically, for any feasible clustering $\mathcal C_n$, define its relative
loss at a given value of $\xi_n = (\lambda \bar{\phi}_n^2/\bar{\psi})^{-1}$ by the regret in terms of MSE-ratio for a given range of values $\xi \in [\underline{\xi}, \bar{\xi}]$
$$  
\small 
\begin{aligned} 
\mathcal{A}_{\xi_n}(\mathcal{C}_n) := \frac{
 \frac{\xi_n}{n^2}\sum_{k=1}^{K_n}n_k^2
+
 b_n(\mathcal C_n)^2
}{
\inf_{\widetilde{\mathcal C}_n}
\left\{
 \frac{\xi_n}{n^2}\sum_{k=1}^{| \widetilde{\mathcal C}_n| }\widetilde n_k^2
+
 b_n(\widetilde{\mathcal C}_n)^2
\right\}
},\qquad \bar{\mathcal{A}}(\mathcal{C}_n) = \sup_{\xi_n \in [\underline{\xi}, \bar{\xi}]} \mathcal{A}_{\xi_n}(\mathcal{C}_n) 
\end{aligned} 
$$ 
where the infimum is over the same feasible class, and $\widetilde n_k$ denotes
the size of cluster $k$ under $\widetilde{\mathcal C}_n$. Let
$$  
\small 
\begin{aligned} 
R_n^*(\xi) 
=
\inf_{\widetilde{\mathcal C}_n}
R_n(\mathcal C_n;\xi). 
\end{aligned} 
$$ 

The next result characterizes the ratio regret \citep[e.g.][]{epanomeritakis2025learning, armstrong2025adapting} in our context of spillovers, building on results in \cite{epanomeritakis2025learning}. In our context, the regret regularizes the worst-case choice of $\xi$ by introducing a penalty that depends on the oracle performance. Alternative notions of optimality are also possible, here omitted for brevity. 

\begin{thm} \label{thm:adaptation_regret} Let $\mathcal{C}_n^\star \in \mathrm{arg} \min_{\mathcal{C}_n} \bar{\mathcal{A}}(\mathcal{C}_n)$ and $\infty\ge \bar{\xi} >\underline{\xi} > 0$. Then 
$$
\small 
\begin{aligned} 
\mathcal C_n^\star
\in
\arg\min_{\mathcal C_n}
\max\left\{
\frac{R_n(\mathcal{C}_n, \bar{\xi})}{R_n^*(\bar{\xi})},
 \frac{R_n(\mathcal{C}_n, \underline{\xi})}{R_n^*(\underline{\xi})}
\right\}. 
\end{aligned} 
$$
\end{thm} 

 See Appendix \ref{proof:thm:adaptation_regret} for the proof.  Note that $\bar{\xi} = \infty$ is allowed within our framework.

Given Theorem \ref{thm:adaptation_regret}, we present a semi-definite relaxation as the one in Section \ref{sec:opt} by targeting regret as opposed to MSE in Algorithm \ref{alg:adaptive_causal_clustering}. To facilitate optimization, we consider lower bound on $R_n^*(\xi)$ that are readily available from Algorithm \ref{alg:1} in Appendix \ref{app:a} through the SDP relaxation. Using lower bounds guarantees that the chosen objective in Algorithm \ref{alg:adaptive_causal_clustering} provides a valid surrogate (upper bound) to the target regret.

\subsection{Additional extensions}

In Appendix \ref{app:more_extensions} we consider a series of other extensions. These are: (i) potential outcomes are random variables and we impose restrictions only on the moments of such random variables; (ii) spillover effects may be asymmetric between different individuals; (iii) researchers consider a larger class of estimators and optimize both over designs and estimators; (iv) saturation designs with positive treatment probabilities.

\vspace{-5mm}

\section{Empirical illustration and practical recommendations}\label{sec:empirics}

In this section, we illustrate the properties of the procedure in two empirical applications, one using unique data from the Facebook friendship and messaging graphs and one using data from an experiment conducted in rural China by \cite{cai2015social}. 

The first application shows how our results can be used to compare existing clustering algorithm through our proposed objective function; the second application illustrate how our results can be used to derive an explicit clustering. 

\vspace{-3mm} 

\subsection{Comparisons between existing clusterings at Facebook}
 
 We consider two clustering algorithms implemented at scale on social networks owned by Meta:\footnote{We focus on existing clusterings (instead of implementing the causal clustering proposed here), as implementing new clustering methods requires changes to the entire engineering pipeline as described in \cite{karrer2021network}, outside the scope of this paper.}  Louvain algorithm \citep{blondel2008fast}, and Balanced Partitioning \citep{kabiljo2017social}. Louvain and Balanced Partitioning clustering produces a hierarchical clustering structure with a growing number of clusters. We consider three ``types" of such algorithms, corresponding to three levels in the clustering structure hierarchy (i.e., different numbers of clusters), defined as Type 1, 2, and 3, respectively. For Louvain, higher-order types denote less clusters, whereas it is the opposite for Balanced Partitioning. For each clustering, we also report $\log(n/K_n)$ in Table \ref{tab:meta1}.

We consider two graphs owned by Meta. In the first graph, edges capture the strength of the friendship in the Facebook graph, and in the second graph, they capture connections based on messaging on Facebook. In both cases, the data was aggregated and de-identified. In each graph, edges are continuous variables. For each graph, we construct three adjacency matrices by setting the edges to be zero if below the $5^{th}$, $10^{th}$, and $50^{th}$ percentile value of the edges. We will refer to these graphs obtained as dense, moderate (mod), and sparse after thresholding. We also note that all graphs have a bounded maximum degree.

We refer to the worst-case bias and variance weighted by $\xi$ as 
$
b_n(\mathcal{C}_n),  \frac{\xi}{n^2} \sum_{k=1}^{K_n} n_k^2. 
$   

\medskip \noindent \textbf{Results}
Table \ref{tab:meta1} in the appendix collects each of these algorithms' worst-case bias and variance and the number of clusters per individual. A larger number of clusters increases the bias but decreases the variance throughout all graphs. (The bias increases for sparser networks because the denominator $|\mathcal{N}_i|$ decreases.) Louvain algorithm dominates Balanced Partitioning for most designs in variance and bias, suggesting that practitioners may prefer Louvain.  

We study different types of Louvain algorithms (each having a different number of clusters) in Figure \ref{fig:meta1}. Figure \ref{fig:meta1} reports the worst-case mean squared error for different graphs, Louvain algorithms, and degrees of sparsity. For $\xi \approx 1$, the Louvain algorithm with the smallest number of clusters (Type 3) dominates the other two algorithms in all but one case. 

As $\xi$ increases, 
Louvain clustering with the smallest number of clusters (Type 3) is optimal for a dense graph but sub-optimal as the graph becomes more sparse. This result is intuitive: for denser networks, a smaller number of clusters may best control the estimator's bias relative to the bias induced by other clustering algorithms. However, we observe that Type 1 is the most robust in terms of variance in the mean-squared error across different values of spillover effects. This is suggestive that Louvain clustering with the largest number of clusters (Type 1) may be preferred in various applications. 

\medskip 
\noindent \textbf{Comparison between Bernoulli and Cluster design} To conclude, we study when a cluster design with our preferred clustering (Louvain Type 1) should be preferred over a Bernoulli design taking $\lambda = 1$.
Using Theorem \ref{thm:thumb}, and the dense Messaging graph,   Theorem \ref{thm:thumb} for Louvain clustering (Type 1), for 
$
\sqrt{\frac{\bar{\psi}}{\bar{\phi}_n^2}} \le \bar{\xi}^{1/2}, \quad \bar{\xi}^{1/2} = 706, 
$
we should run a cluster experiment instead of a Bernoulli design. 
Therefore, researchers should run a Cluster design if 
$
\bar{\phi}_n^2 \ge \bar{\psi}/\bar{\xi}.
$
This implies that the researcher should run a cluster design if spillovers are larger than $\bar{\phi}_n \ge \bar{\psi}^{1/2}/706$. This comparison is suggestive that a cluster design with a Louvain clustering may be preferred over a Bernoulli design over a wide range of values of spillover effects (and outcomes variation $\bar{\psi}$). For binary outcomes, we should prefer cluster experiments for $\bar{\phi}_n \ge 0.003$.

In summary, our results shed light on using clustering algorithms for large-scale implementation on online platforms. These results suggest that Louvain clustering with possibly many clusters performs best in practice, and may often be preferred to a Bernoulli design. It confirms heuristic arguments in \cite{karrer2021network} based on AA-tests.  

\definecolor{glaucous}{rgb}{0.38,0.51,0.71}

\begin{figure}[!ht]
    \centering
    \includegraphics[scale = 0.6]{./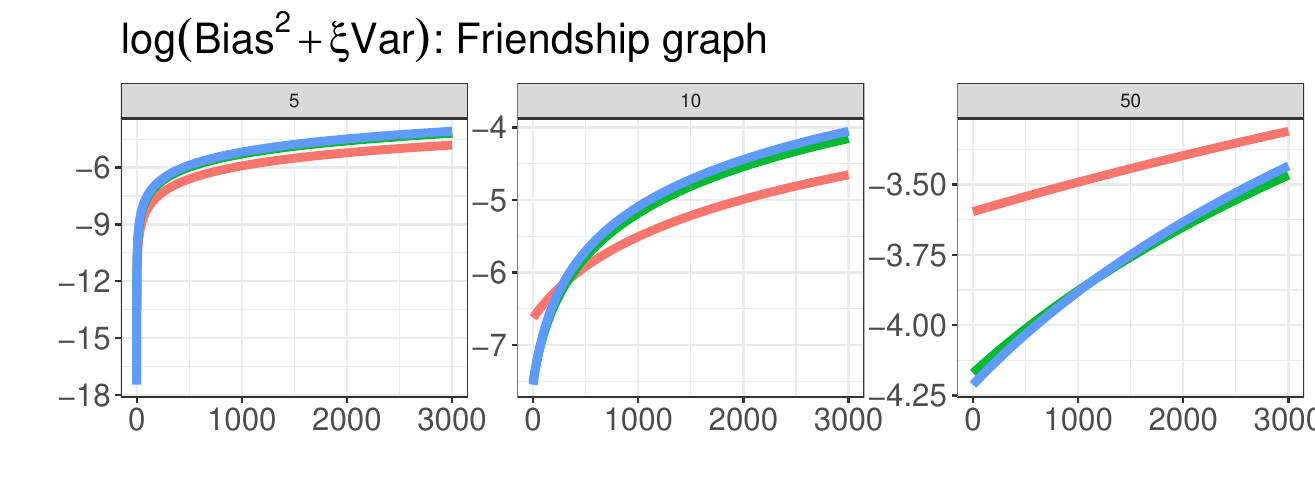}
     \includegraphics[scale = 0.6]{./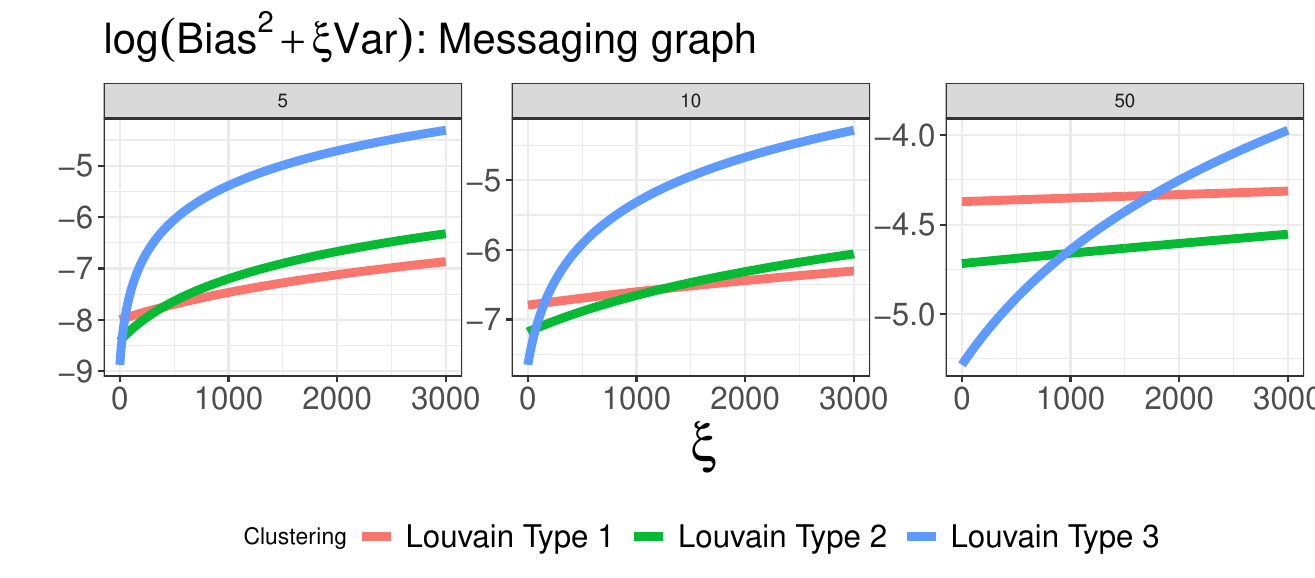}
   \caption{Clusters comparisons for Louvain clustering. Different Types correspond to different numbers of clusters (with Type 1 having the largest number of clusters). Different panels correspond to different graphs where two individuals are not connected if the connection (measured with a continuous variable) is below the $5^{th}, 10^{th}, 50^{th}$ percentile (dense, moderate, and sparse graph). The two graphs in the panels are Facebook friendship and Facebook messaging. }
    \label{fig:meta1}
\end{figure}

\vspace{-3mm} 

\subsection{Clustering in the field}
\label{sec:info}

\begin{figure}[!ht]
    \centering
      \includegraphics[scale = 0.5]{./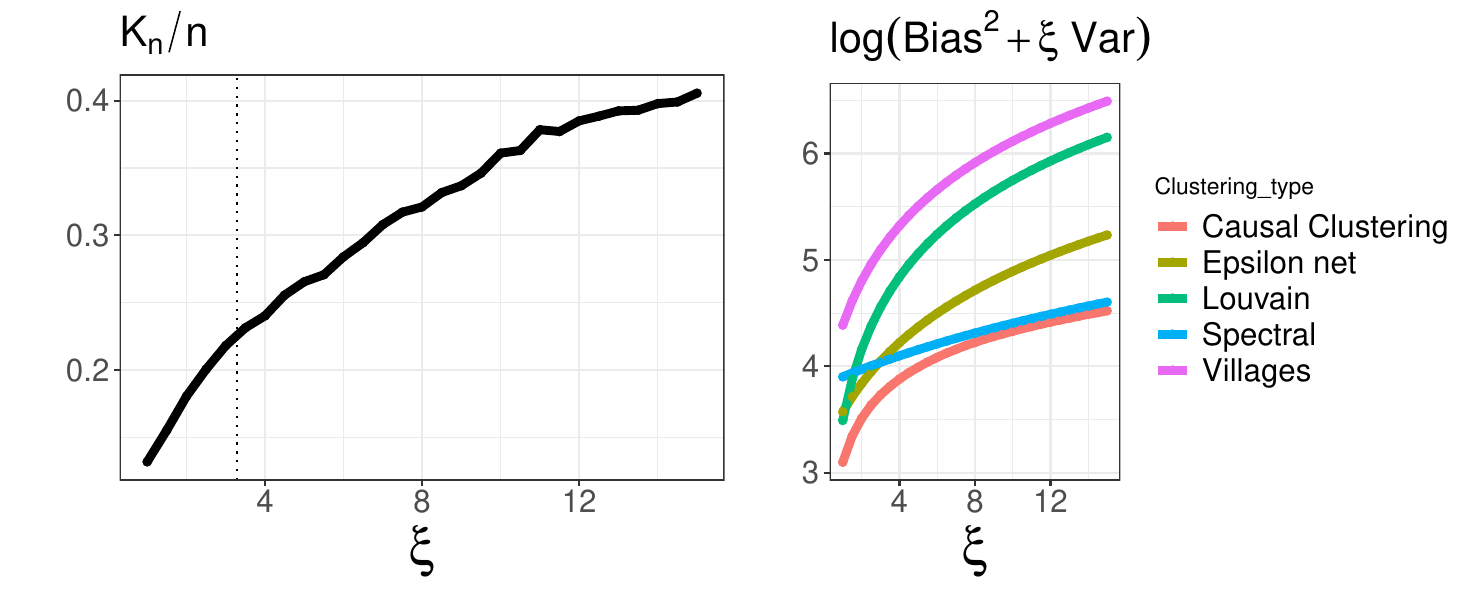}
    \caption{Causal clustering in the field application. The left panel reports the number of clusters selected by Algorithm \ref{alg:1}, normalized by the number of units in the region, as a function of \(\xi\). The dotted vertical line corresponds to the calibration \(\xi=3.29\) discussed in the text. The right panel reports the natural logarithm of the design objective \(b_n(\mathcal C_n)^2+\xi n^{-2}\sum_k n_k^2\), average over the regions in \cite{cai2015social} for different clustering procedures. Algorithm \ref{alg:1} corresponds to Causal Clustering. Results use the network data from \cite{cai2015social} and average across the regions in \cite{cai2015social}.}
    \label{fig:1}
\end{figure}

Next, we present an application to the field experiment of \cite{cai2015social}, where the treatment consists of informing individuals about an insurance product.

\cite{cai2015social} collected network information in 185 villages in rural China from 48 larger regions. We use these network data to study the properties of the proposed method. The network has a total of \(7,649\) nodes, once we also include individuals who are friends of surveyed individuals. Individuals have on average about \(50\%\) of their connections outside their own village. By contrast, individuals in different regions are almost disconnected: about \(99\%\) of links are within region. We therefore construct clusterings separately within each region. We use the ``weak ties'' adjacency matrix, where two individuals are connected if either person lists the other as a friend.

We compare our procedure, denoted ``Causal Clustering,'' with four alternative clustering rules: \(\varepsilon\)-net clustering as in \cite{eckles2017design}, using \(\varepsilon=3\) as suggested in \cite{eckles2017design}; spectral clustering with a fixed number of clusters equal to \(n/3\), where \(n\) is the population size in the region; Louvain clustering with default parameters selected by the R package \texttt{igraph}; and clustering based on village identity, creating one additional cluster for individuals with missing village information.\footnote{These procedures are all off-the-shelf. None is designed to minimize the worst-case mean-squared error criterion for the GATE studied in this paper. Whenever possible, we select tuning parameters using either values suggested by the literature, as for \(\varepsilon\)-net clustering, or default software implementations, as for Louvain clustering.}

\medskip
\noindent\textbf{Design objective and number of clusters.}
Figure \ref{fig:1} reports the behavior of Algorithm \ref{alg:1} and compares its realized objective with the benchmark clusterings. The left panel shows that the selected number of clusters increases with \(\xi\). This is consistent with the bias--variance trade-off in Equation \eqref{eqn:optimization_v10}: larger \(\xi\) places more weight on the variance component and therefore favors designs with more clusters. The right panel compares the realized squared-bias objective,
\[
b_n(\mathcal C_n)^2+\xi\frac{1}{n^2}\sum_{k=1}^{K_n}n_k^2,
\]
across clustering methods. Causal Clustering attains the lowest objective over the full range of \(\xi\) values considered. Spectral clustering is the closest competitor only for large values of $\xi$ while it significantly underperforms for values of $\xi \in [1,4]$. Louvain, \(\varepsilon\)-net clustering, and village-based clustering also have substantially larger values than Causal Clustering. The poor performance of village-based clustering is intuitive in this application: many social ties cross village boundaries, so administrative units do not align well with the network structure relevant for spillovers.

To interpret the magnitude of \(\xi\), take \(\bar\psi=0.24\), approximately equal to the outcome variance in \cite{cai2015social}, and \(\bar\phi_n=0.27\), corresponding to the spillover coefficient in Table 2 of \cite{cai2015social}. This gives
\[
\xi=\frac{\bar\psi}{\bar\phi_n^2}\approx 3.29.
\]
At this calibration, the selected number of clusters is much larger than the number of administrative villages. This illustrates the practical relevance of optimizing the number and structure of clusters using the downstream causal objective rather than relying on pre-existing administrative partitions.

\begin{figure}[!ht]
    \centering
      \includegraphics[scale = 0.45]{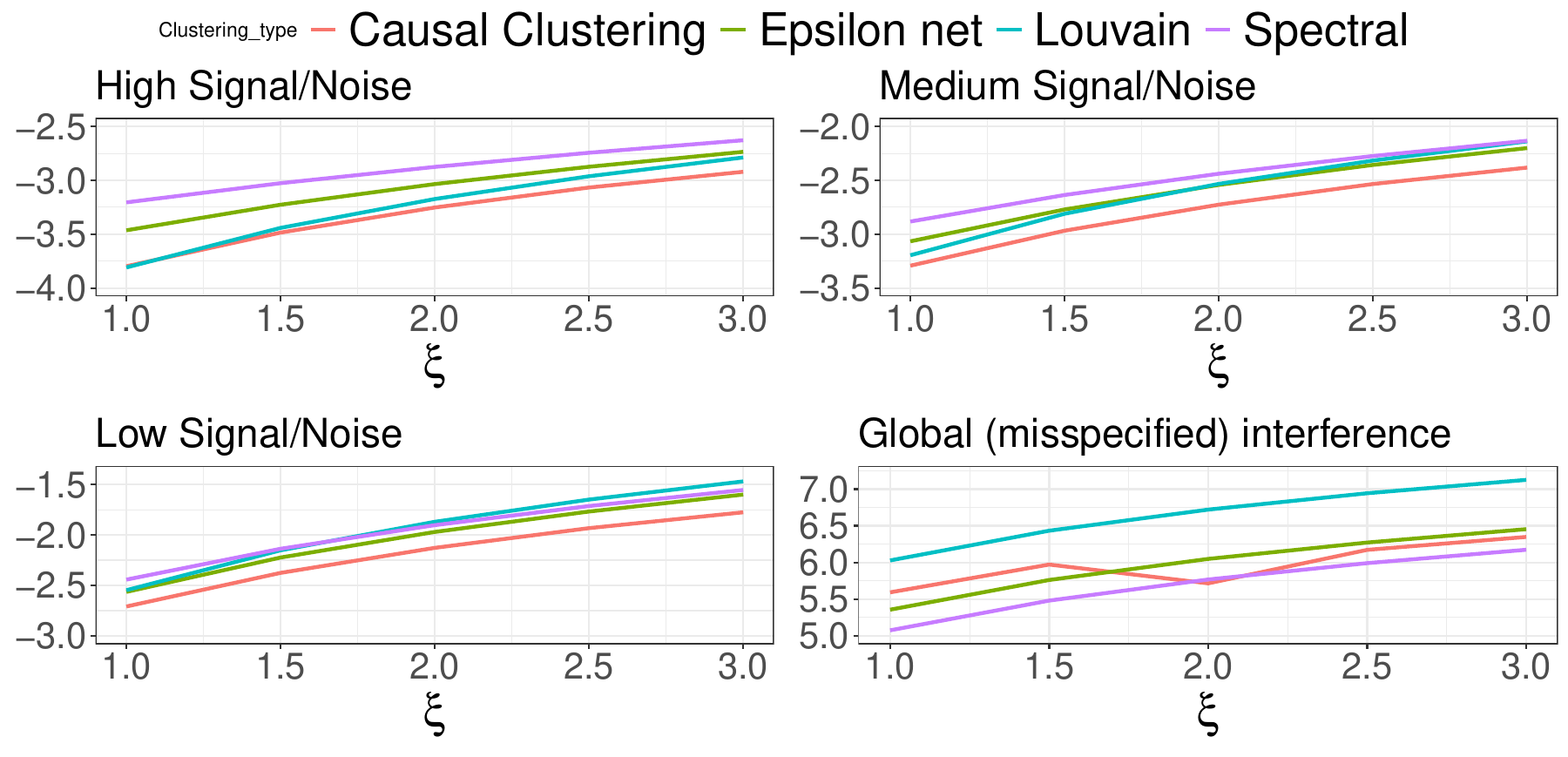}
    \caption{Weighted mean-squared error in calibrated simulations. The figure reports the natural logarithm of \(\widehat{\mathrm{Bias}}^2+\xi\,\widehat{\mathrm{Var}}\), averaged across all regions in \cite{cai2015social}. The first three panels use a first-order interference model calibrated to Table 2, Column 4 of \cite{cai2015social}, varying the residual variance \(\sigma^2\in\{1/4,1/2,1\}\) to represent high, medium, and low signal-to-noise ratios. The last panel uses a misspecified global-interference design calibrated to Table 5, Column 4 of \cite{cai2015social}, where outcomes depend on neighbors' outcomes rather than only on neighbors' treatments. Lower values indicate better performance.}
    \label{fig:cai_mse}
\end{figure}

\medskip
\noindent\textbf{Weighted mean-squared error comparisons.}
We next evaluate the clusterings in calibrated simulations. For each clustering estimated from the observed network, we repeatedly assign treatment at the cluster level and simulate outcomes under the estimated model from \cite{cai2015social}. For each clustering \(\mathcal C_n\) and each value of \(\xi\), we compute the weighted mean-squared-error criterion
\begin{equation}
\label{eqn:M_xi}
\small
\begin{aligned}
M_\xi(\mathcal C_n)
=
\Big(\mathbb E[\hat\tau(\mathcal C_n)]-\tau\Big)^2
+
\xi\,\mathbb V\!\left(\hat\tau(\mathcal C_n)\right).
\end{aligned}
\end{equation}
This is the simulation analogue of the design criterion in Equation \eqref{eqn:optimization_v10}: the first term captures squared bias, while the second term captures the variance penalty indexed by \(\xi\). We report the natural logarithm of \(M_\xi(\mathcal C_n)\). Note that $M_\xi$ captures an actual bias and variance (as opposed to the upper bound minimized by our objective). 

The first three panels of Figure \ref{fig:cai_mse} use a first-order interference model calibrated to \cite{cai2015social}, Table 2, Column 4, where outcomes depend on neighbors' treatments. We vary the residual variance \(\sigma^2\in\{1/4,1/2,1\}\), corresponding to high, medium, and low signal-to-noise ratios. Causal Clustering performs best throughout the range of \(\xi\) values shown. In contrast, any other clustering algorithm performs well only for some or none of the ranges of $\xi$ and values of $\sigma^2$.  

The last panel considers a misspecified global-interference environment calibrated to \cite{cai2015social}, Table 5, Column 4. In this design, outcomes depend on neighbors' outcomes rather than only on neighbors' treatment assignments, violating the first-order interference model used to motivate our objective. This exercise is useful as a robustness check. Even in this misspecified setting, Causal Clustering is competitive to the benchmark methods over the displayed range of \(\xi\), although not necessarily the optimal depending on the values of $\xi$. The results suggest that optimizing the clustering for the causal bias--variance trade-off can deliver practical gains under the maintained first-order model primarily, while avoiding significant underperformance even in settings with richer forms of network propagation.

To gain further insights, 
Table \ref{tab:weighted_mse_regret} summarizes these comparisons by reporting the average regret of each method relative to the best-performing clustering rule within each of the competitors considered in simulations at each value of \(\xi\).  Causal Clustering has essentially zero regret in the first-order interference designs: its average regret is \(0.2\%\) in the high-signal design, \(0.0\%\) in the medium-signal design, and \(0.0\%\) in the low-signal design. By contrast, the benchmark methods are more context-dependent. Spectral clustering performs poorly in the first-order designs, with average regret equal to \(51.8\%\), \(36.3\%\), and \(26.4\%\) in the high-, medium-, and low-signal designs, respectively, even though it performs best in the misspecified global-interference design. Louvain is relatively competitive in the high-signal design, with \(7.6\%\) average regret, but its performance deteriorates as the signal-to-noise ratio falls, reaching \(20.1\%\) in the medium-signal design and \(28.1\%\) in the low-signal design; it also performs especially poorly under global misspecification, with \(161.8\%\) regret. \(\varepsilon\)-net clustering is more stable but never close to the best method, with regrets between \(17.2\%\) and \(33.8\%\) across designs. Although Causal Clustering is not pointwise optimal in the misspecified global-interference design, its average regret across the four simulation designs is \(8.6\%\), compared with \(24.9\%\) for \(\varepsilon\)-net clustering, \(28.9\%\) for spectral clustering, and \(54.4\%\) for Louvain. These results suggest that Causal Clustering provides a robust procedure across the simulation designs considered, while the benchmark procedures can perform well in some settings but poorly in others.

\begin{table}[!ht]
\centering
\caption{Average percent excess weighted MSE relative to the best competitor. For each scenario and value of $\xi$, regret is computed as $100\times(M_\xi/M_{\min}-1)$ and then averaged over $\xi\in[1,3]$. The final column averages across the four simulation designs. Lower values indicate better performance.}
\label{tab:weighted_mse_regret}
\begin{tabular}{lcccc|c}
\toprule
Method/Regret & High S/N & Medium S/N & Low S/N & Global misspec. & Average \\
\midrule
Causal Clustering & 0.2 & 0.0 & 0.0 & 34.0 & 8.6 \\
Epsilon net & 27.3 & 21.3 & 17.2 & 33.8 & 24.9 \\
Spectral & 51.8 & 36.3 & 26.4 & 1.1 & 28.9 \\
Louvain & 7.6 & 20.1 & 28.1 & 161.8 & 54.4 \\
\bottomrule
\end{tabular}
\end{table}

 \subsection{Practical recommendation: Tuning parameter $\xi$} \label{rem:psi}
Similarly to standard power analysis \citep[e.g.][]{baird2018optimal}, our method depends on the choice of $(\bar{\phi}_n^2/\bar{\psi})^{-1}$ (the size of spillover effects $\bar{\phi}_n$ relative to the outcomes' largest squared deviation $\bar{\psi}$). In practice, researchers may choose a range of values; in these cases,  
our recommended choice for $\bar{\psi}$ is to consider values $\bar{\psi} = c \bar{\sigma}^2, c \in [1,4]$ and $\bar{\sigma}^2$ is the baseline variance of the residuals after removing the covariate adjustment, and $c$ is a constant between one and four.\footnote{To gain further intuition, let 
$\lambda = 1$. Let $\bar{\mu}_i$ be a prediction for $\mu_i(\mathbf{0})$ as in Remark \ref{rem:reg_adj} and consider an estimator as in Equation \eqref{eqn:reg_adj}. When using regression adjusted estimators, we have $\bar{\psi} = (\mu_i(\mathbf{1}) - \bar{\mu}_i + \mu_i(\mathbf{0}) - \bar{\mu}_i)^2$. By approximating $(\mu_i(\mathbf{d}) - \bar{\mu}_i)^2 \approx \bar{\sigma}^2$, we can write $\bar{\psi} \le 4 \bar{\sigma}^2$.}  Alternatively, if available, the choice of $\bar{\phi}_n$ can be based on spillover effects observed in previous experiments, in the spirit of minimum detectable effects \citep{baird2018optimal}. 

Finally, if researchers do not want to specify any information about $\xi$ but only a reasonable range, they can use the worst-case regret approach we present in Algorithm \ref{alg:adaptive_causal_clustering}. 

\section{Conclusions} 

This paper provides a formal characterization of the bias and variance of the clustering algorithm to estimate the overall treatment effect. Using such characterization, it introduces a novel clustering method that takes into account the downstream task of causal inference. Applicability of our method spans online experiments, cash-transfer programs, information campaigns to cite some. 
The key parameter of the algorithm is how to trade-off variance and bias through $\xi$. We provide practical recommendations below. 

Many open questions remain. Future research should study explicit optimal clustering with non-local network interference as for example in \cite{bramoulle2009identification, leung2022causal} among others. Topics of future research also include clustering in dynamic settings, or optimal clustering for a larger class of estimators.


\bibliography{reference}
\bibliographystyle{chicago}

\appendix 

\newpage 
\setcounter{page}{1}

\spacingset{1.5}

\section{Additional tables and algorithms} \label{app:a}

\begin{table*}[!ht]\centering
  \label{tab:mse2} 
\begin{tabular}{@{}lrrrcrrrcrrr@{}}\toprule
Friendship & \multicolumn{3}{c}{Type 1} & \ & \multicolumn{3}{c}{Type 2} & \ & \multicolumn{3}{c}{Type 3} \\
\cmidrule{2-4} \cmidrule{6-8}  \cmidrule{10-12}  
&\small dense&\small mod &\small sparse && \small dense&\small  mod  & \small sparse && \small dense& \small mod  & \small sparse \\\Xhline{.8pt} 
 \rowcolor{lightgray}\multicolumn{12}{c}{Balanced Partitioning Algorithm} \\ \Xhline{.8pt} 
 \small $100$ Bias & $38.41$ & $38.54$ & $39.95$&&$55.21$ & $55.17$ & $55.89$ &&$69.00$ & $68.92$ & $69.37$\\
  \small $10000$ Variance & $9.76$ & $ 9.76$ & $9.85$&&$0.30$ & $0.30$ & $0.30$ &&$0.01$ & $0.01$ & $0.01$\\

      \cline{2-4}  \cline{6-8}  \cline{10-12} 
  \small $\log(n/K_n)$ & $14.41$ & $14.37$ & $14.17$&& $10.95$ & $ 10.91$ & $10.70$&&7.48&7.44&7.24\\
\Xhline{.8pt}  \rowcolor{lightgray}\multicolumn{12}{c}{Louvain Algorithm} \\ \Xhline{.8pt} 
 \small $100$ Bias & $0.17$ & $3.64$ & $16.55$&&$0.01$ & $2.34$ & $12.43$ &&$0.01$ & $2.29$ & $12.17$\\
  \small $10000$ Variance & $0.03$ & $0.03$ & $ 0.03$&&$0.05$ & $0.05$ & $0.05$ &&$0.06$ & $0.06$ & $0.06$\\
      \cline{2-4}  \cline{6-8}  \cline{10-12} 
  \small $\log(n/K_n)$ & $2.30$ & $2.28$ & $2.16$&& $4.51$ & $4.56$ & $4.59$&& 5.32&5.50&6.05\\
\bottomrule
\end{tabular}
    
\begin{tabular}{@{}lrrrcrrrcrrr@{}}\toprule
Messaging & \multicolumn{3}{c}{Type 1} & \ & \multicolumn{3}{c}{Type 2} & \ & \multicolumn{3}{c}{Type 3} \\
\cmidrule{2-4} \cmidrule{6-8}  \cmidrule{10-12}  
&\small dense&\small mod &\small sparse && \small dense&\small  mod  & \small sparse && \small dense& \small mod  & \small sparse \\\Xhline{.8pt} 
 \rowcolor{lightgray}\multicolumn{12}{c}{Balanced Partitioning Algorithm} \\ \Xhline{.8pt} 
 \small 100 Bias & $17.26$ & $18.10$ & $22.57$&&$26.94$ & $27.62$ & $31.00$ &&$37.06$ & $37.46$ & $39.06$\\
  \small 10000 Variance & $97.65$ & $97.65$ & $97.71$&&$3.05$ & $3.05$ & $3.05$ &&$0.09$ & $0.09$ & $0.09$\\

      \cline{2-4}  \cline{6-8}  \cline{10-12} 
  \small $\log(n/K_n)$ & $16.04$ & $16.02$ & $15.76$&& $12.57$ & $12.55$ & $12.30$&&9.10&9.09&8.83\\
\Xhline{.8pt}  \rowcolor{lightgray}\multicolumn{12}{c}{Louvain Algorithm} \\ \Xhline{.8pt} 
 \small 100 Bias & $1.83$ & $3.34$ & $11.24$&&$1.48$ & $2.76$ & $9.45$ &&$1.18$ & $2.18$ & $7.12$\\
  \small 10000 Variance & $0.02$ & $0.02$ & $0.02$&&$0.05$ & $0.05$ & $0.05$ &&$0.44$ & $0.44$ & $0.45$\\
      \cline{2-4}  \cline{6-8}  \cline{10-12} 
  \small $\log(n/K_n)$ & $3.50$ & $3.49$ & $ 3.50$&& $6.05$ & $6.09$ & $6.19$&&6.45&6.52& 6.88\\
\bottomrule
\end{tabular}
 \caption{Worst-case bias and variance for Balanced Partition Clustering and Louvain clusterings, and for two different graphs owned by Meta. Different types correspond to algorithms with an increasing number of clusters for Balanced Partition and a decreasing number of clusters for Louvain. Note that for Type 1 to Type 3 of Balanced partition, the number of clusters increases, while for Louvain algorithm, the number of clusters decreases. Here $\log(\cdot)$ indicates natural log. }
\label{tab:meta1}
\end{table*}

 \begin{algorithm}[!ht]
\caption{Causal Clustering over a Range of Spillover Calibrations}
\footnotesize 
\label{alg:adaptive_causal_clustering}
\begin{algorithmic}[1]
\Require Adjacency matrix \(\mathbf A\), smallest and largest number of clusters
\(\underline K,\bar K\), calibration range \([\underline\xi,\bar\xi]\).

\State Solve Equation \eqref{eq:SDP} without optional constraint for $\xi \in \{\underline{\xi}, \bar{\xi}\}$ and compute the corresponding objective $\underline R_{n}^{\mathrm{sdp}}(\underline{\xi}), \underline R_{n}^{\mathrm{sdp}}(\bar{\xi})$. These provide valid lower bound for $R_n^*(\underline{\xi}), R_n^*(\bar{\xi})$, respectively. 
    \State Solve the SDP relaxation
    \[
    \begin{aligned}
    \min_{\mathbf X,z,t,\rho}\quad
    & \rho\\
    \text{s.t.}\quad
    & \mathbf X\succeq0,\qquad X_{ii}=1,\quad i=1,\ldots,n, \qquad \rho \ge 0\\
    & z=
    \frac1n
    \operatorname{tr}\!\left(
    \mathbf L^\top
    [\one_n\one_n^\top-\mathbf X]
    \right),
    \qquad
    \begin{pmatrix}
    t & z\\
    z & 1
    \end{pmatrix}
    \succeq0,\\
    & \underline\xi
    \frac1{n^2}
    \operatorname{tr}(\one_n\one_n^\top\mathbf X)
    +t
    \le
    \rho \underline R_{n}^{\mathrm{sdp}}(\underline{\xi}),\\
    & \bar\xi
    \frac1{n^2}
    \operatorname{tr}(\one_n\one_n^\top\mathbf X)
    +t
    \le
    \rho \underline R_{n}^{\mathrm{sdp}}(\bar{\xi}). 
    \end{aligned}
    \]
    \State Let \(\widehat{\mathbf X}\) denote the solution.

\For{\(K\in\{\underline K,\ldots,\bar K\}\)}
    \begin{algsubstates}
    \State Retrieve a feasible clustering with \(K\) clusters by applying
    \(K\)-means to the first \(K\) eigenvectors of \(\widehat{\mathbf X}\).
    Call the resulting clustering \(\widehat{\mathcal C}_{n,K}\).

    \State Compute
    \[
    \widehat\rho_K
    =
    \max\left\{
    \frac{
    R_n(\widehat{\mathcal C}_{n,K};\underline\xi)
    }{
    \underline R_{n}^{\mathrm{sdp}}(\underline{\xi})
    },
    \frac{
    R_n(\widehat{\mathcal C}_{n,K};\bar\xi)
    }{
   \underline R_{n}^{\mathrm{sdp}}(\bar{\xi})
    }
    \right\}, \qquad 
    R_n(\widehat{\mathcal C}_{n,K};\xi)
    =
    \frac{\xi}{n^2}
    \sum_{k=1}^{K}\widehat n_k^2
    +
    b_n(\widehat{\mathcal C}_{n,K})^2.
    \]
    \end{algsubstates}
\EndFor

\State Choose
\[
\widehat{\mathcal C}_n
\in
\arg\min_{\widehat{\mathcal C}_{n,K}:K\in\{\underline K,\ldots,\bar K\}}
\widehat\rho_K.
\]

\Return \(\widehat{\mathcal C}_n\).
\end{algorithmic}
\end{algorithm}

\newpage

\section{Omitted proofs in the main text} \label{app:proofs}

Throughout the proofs, expectations are conditional on the adjacency matrix $\mathbf{A}$. 

\subsection{Auxiliary lemmas}

In the following lines, we study the variance of the estimator $\hat{\tau}_n(\mathcal{C}_{ n})$.  Observe that 
$$
\small 
\begin{aligned} 
\mathbb{E}_\mu\Big[\Big(\hat{\tau}_n(\mathcal{C}_{ n}) - \mathbb{E}_\mu[\hat{\tau}_n(\mathcal{C}_{ n})]\Big)^2\Big] = \frac{4}{n^2} \sum_{i,j} \mathrm{Cov}\Big(\mu_i(D_i, \mathbf{D}_{-i})[2 D_i - 1], \mu_j(D_j, \mathbf{D}_{-j})[2 D_j - 1]\Big). 
\end{aligned} 
$$

\begin{lem}[Zero covariances] \label{lem:2} Suppose that Assumptions \ref{ass:first_order}, \ref{ass:clusters} hold. For all $i \in \{1, \cdots, n\}$
$$
\begin{aligned} 
 & \mu_i(D_i, \mathbf{D}_{-i})\Big[2D_i - 1\Big] \perp  \mu_j(D_j, \mathbf{D}_{-j})\Big[2D_j - 1\Big] \quad \forall j \not \in B_i \cup G_i,  
 \end{aligned}
$$
where 
$$
\small 
\begin{aligned}
B_i& =  \Big\{v \in \{1, \cdots, n\}: \text{ either } c(v) = c(i) \text{ or } c(v) = c(v'), \text{ for some } v' \in \mathcal{N}_i \Big\}. \\ 
G_i &= \Big\{ g \in \{1, \cdots, n\}: \mathcal{N}_g \cap B_i \neq \emptyset \Big\}
\end{aligned} 
$$
 \end{lem}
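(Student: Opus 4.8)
The plan is to exhibit the two random variables appearing in the covariance as measurable functions of \emph{disjoint} collections of the cluster-level coin flips $(\tilde D_c)_c$, so that independence forces the covariance to vanish, and then to verify the combinatorial statement $j\notin B_i\cup G_i\Rightarrow$ these collections are disjoint. Write $Z_i:=\mu_i(D_i,\mathbf{D}_{-i})[2D_i-1]$. By Assumption~\ref{ass:first_order}, $\mu_i(D_i,\mathbf{D}_{-i})$ depends on $\mathbf{D}$ only through the coordinates $(D_k)_{k\in\{i\}\cup\mathcal{N}_i}$, while $2D_i-1$ depends only on $D_i$; hence $Z_i$ is a deterministic function of $(D_k)_{k\in\{i\}\cup\mathcal{N}_i}$. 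By Assumption~\ref{ass:clusters}, $D_k=\tilde D_{c(k)}$ almost surely, so $Z_i$ is a measurable function of $(\tilde D_c)_{c\in S_i}$, where $S_i:=\{c(k):k\in\{i\}\cup\mathcal{N}_i\}$ is the set of clusters that contain $i$ itself or at least one neighbor of $i$. The identical argument shows $Z_j$ is a measurable function of $(\tilde D_c)_{c\in S_j}$.

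Since the $\tilde D_c$ are mutually independent across clusters (Assumption~\ref{ass:clusters}), whenever $S_i\cap S_j=\emptyset$ the vectors $(\tilde D_c)_{c\in S_i}$ and $(\tilde D_c)_{c\in S_j}$ are independent, so $Z_i$ and $Z_j$ are independent and $\mathrm{Cov}(Z_i,Z_j)=0$. It therefore remains to prove $j\notin B_i\cup G_i\Rightarrow S_i\cap S_j=\emptyset$, which I would establish by contraposition: assume $c^\star\in S_i\cap S_j$. Because $c^\star\in S_j$, either (a) $c^\star=c(j)$, or (b) $c^\star=c(k)$ for some $k\in\mathcal{N}_j$. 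In case (a), $c(j)=c^\star\in S_i$ means $c(j)$ equals $c(i)$ or $c(v')$ for some $v'\in\mathcal{N}_i$; either way the definition of $B_i$ gives $j\in B_i$. In case (b), $c(k)=c^\star\in S_i$ gives, by the same reasoning, $k\in B_i$; and since $\mathbf{A}$ is symmetric, $k\in\mathcal{N}_j$ yields $k\in\mathcal{N}_j\cap B_i\neq\emptyset$, hence $j\in G_i$. In both cases $j\in B_i\cup G_i$, which is the contrapositive of the claim, completing the proof.

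There is no substantive analytic difficulty; the only real work is the bookkeeping in the last step, namely tracking that the outcome of unit $i$ (with its weight $2D_i-1$) is driven precisely by the clusters in $S_i$, and that two such factors can be correlated only when $S_i$ and $S_j$ overlap. The set $B_i$ is exactly designed to capture the event ``the cluster of unit $j$ coincides with the cluster of $i$ or of some neighbor of $i$,'' and $G_i$ captures the complementary way an overlap can occur, namely ``the cluster of some neighbor $k$ of $j$ coincides with the cluster of $i$ or of some neighbor of $i$''; symmetry of $\mathbf{A}$ enters only in the passage from $k\in\mathcal{N}_j$ to $j\in G_i$. I would also note explicitly that $\mathcal{N}_g(1)$ in the definition of $G_i$ denotes the first-order neighborhood $\mathcal{N}_g$, consistent with Assumption~\ref{ass:first_order}, which is what makes $Z_j$ a function of $(\tilde D_c)_{c\in S_j}$ rather than of a larger index set.
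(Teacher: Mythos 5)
Your proof is correct and follows essentially the same route as the paper's: both identify $Z_i$ and $Z_j$ as measurable functions of the cluster-level coins indexed by $S_i=\{c(k):k\in\{i\}\cup\mathcal{N}_i\}$ (the paper's $C_i$) and $S_j$, and reduce the claim to the combinatorial fact that $j\notin B_i\cup G_i$ forces $S_i\cap S_j=\emptyset$, which you prove by contraposition and the paper proves directly. (Your appeal to the symmetry of $\mathbf{A}$ in case (b) is unnecessary, since $G_i$ is defined via $\mathcal{N}_j\cap B_i\neq\emptyset$ directly, but this is harmless.)
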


\begin{proof}[Proof of Lemma \ref{lem:2}] See Appendix \ref{proof:lem:2}. 
\end{proof}

Lemma \ref{lem:2} states that two realized outcomes are independent (and therefore have zero covariance) if two individuals (i) are in two different clusters, such that none of the two clusters contains a friend of the other individual, and (ii) are not friends  or share a common friend (set), and if there is no friend of $j$ in a cluster that contains a friend of $i$. Lemma \ref{lem:2} is equivalent to state that $\mu_i(D_i, \mathbf{D}_{-i})[2D_i - 1], \mu_j(D_j, \mathbf{D}_{-j})[2D_j - 1]$ have zero covariance if  $B_i \cap B_j = \emptyset$.\footnote{By definition, $B_i \cap B_j \neq \emptyset$ implies the existence of $i'\in \{i\}\cup \mathcal{N}_i, j' \in \{j\}\cup \mathcal{N}_j$, and $k$ such that $c(i') = c(k) = c(j')$. Thus, $j'\in B_i$. If $j' = j$, then $j\in B_i$; otherwise, $j\in G_i$. For the converse, first note that $j\in B_i$ implies that $j\in B_i \cap B_j \neq \emptyset$ since $j\in B_j$ for any $j$. If $j\in G_i\setminus B_i$, then there exists $j'\in \mathcal{N}_j$ such that $j'\in B_i$. Since $j' \in B_j$, $j'\in B_i \cap B_j \neq\emptyset$.}  Next, we analyze the covariances for the remaining units.

\begin{lem}[Non-zero Covariances] \label{lem:rel} Suppose Assumptions \ref{ass:first_order},  \ref{ass:exposure_restriction}, \ref{ass:clusters} hold. Then
\begin{equation} \label{eqn:bound_covariances} 
\small 
\begin{aligned} 
&\Big|\mathrm{Cov}\Big(\mu_i(D_i, \mathbf{D}_{-i})\Big[2D_i - 1\Big], \mu_j(D_j, \mathbf{D}_{-j})\Big[2D_j - 1\Big]\Big)\Big| = \mathcal{O}(\bar{\phi}_n )\quad \forall j: c(j) \neq c(i).
\end{aligned} 
\end{equation}  
In addition, for $c(i) = c(j)$,
\begin{equation} \label{eqn:bound_covariances2}
\small 
\begin{aligned}
& \mathrm{Cov}\Big(\mu_i(D_i, \mathbf{D}_{-i})[2D_i - 1], \mu_j(D_j, \mathbf{D}_{-j})[2D_j - 1]\Big)
=  \frac{1}{4}\Big(\mu_i(\mathbf{1}) + \mu_i(\mathbf{0}) \Big)\Big(\mu_j(\mathbf{1}) + \mu_j(\mathbf{0}) \Big) + \mathcal{O}(\bar{\phi}_n ).
\end{aligned} 
\end{equation}
Above, the big-O terms are uniform over $i$ and $j$.
\end{lem}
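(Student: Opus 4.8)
The plan is to reduce both displays to a single linearization of the realized, sign-adjusted outcomes, after which the within- and between-cluster cases fall out of one elementary covariance computation plus a remainder estimate. Under Assumption \ref{ass:first_order} we may write $\mu_i(D_i,\mathbf{D}_{-i})=\mu_i(D_i,\mathbf{D}_{\mathcal{N}_i})$, and Assumption \ref{ass:exposure_restriction}(ii), applied to the realized $\mu_i$, gives the deterministic bound $\big|\mu_i(D_i,\mathbf{D}_{\mathcal{N}_i})-\big[D_i\mu_i(\mathbf{1})+(1-D_i)\mu_i(\mathbf{0})\big]\big|\le \bar{\phi}_n\frac{\alpha_i}{|\mathcal{N}_i|}\,\#\{k\in\mathcal{N}_i:D_k\neq D_i\}$, since for $D_i=1$ the left side is $|\mu_i(1,\mathbf{D}_{\mathcal{N}_i})-\mu_i(1,\mathbf{1})|$ and for $D_i=0$ it is $|\mu_i(0,\mathbf{D}_{\mathcal{N}_i})-\mu_i(0,\mathbf{0})|$. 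Because neighbors in the same cluster as $i$ share $i$'s assignment under Assumption \ref{ass:clusters}, only cross-cluster neighbors contribute, so the right side is at most $\bar{\phi}_n\frac{\alpha_i}{|\mathcal{N}_i|}\big|\mathcal{N}_i\cap\{j:c(j)\neq c(i)\}\big|$, i.e.\ a summand of $n\,b_n(\mathcal{C}_{\mathrm{C},n})$. Setting $W_i:=D_i\mu_i(\mathbf{1})-(1-D_i)\mu_i(\mathbf{0})$, which equals $[D_i\mu_i(\mathbf{1})+(1-D_i)\mu_i(\mathbf{0})][2D_i-1]$, and $R_i:=\mu_i(D_i,\mathbf{D}_{-i})[2D_i-1]-W_i$, we get $\|R_i\|_\infty\le \bar{\phi}_n\frac{\alpha_i}{|\mathcal{N}_i|}|\mathcal{N}_i\cap\{j:c(j)\neq c(i)\}|$ and $\mathrm{Cov}(\cdot,\cdot)=\mathrm{Cov}(W_i,W_j)+\mathrm{Cov}(W_i,R_j)+\mathrm{Cov}(R_i,W_j)+\mathrm{Cov}(R_i,R_j)$.

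First I would pin down $\mathrm{Cov}(W_i,W_j)$ exactly, as it is the only term surviving at leading order. Each $W_i$ is a measurable function of $D_i=\tilde{D}_{c(i)}$ alone. When $c(i)\neq c(j)$, Assumption \ref{ass:clusters} makes $\tilde{D}_{c(i)}$ and $\tilde{D}_{c(j)}$ independent, so $\mathrm{Cov}(W_i,W_j)=0$; this is why \eqref{eqn:bound_covariances} has no leading term. When $c(i)=c(j)$ we have $D_i=D_j$, and using $D_i\in\{0,1\}$ (so $D_i^2=D_i$, $(1-D_i)^2=1-D_i$, $D_i(1-D_i)=0$) one gets $W_iW_j=D_i\,\mu_i(\mathbf{1})\mu_j(\mathbf{1})+(1-D_i)\,\mu_i(\mathbf{0})\mu_j(\mathbf{0})$, hence $\mathbb{E}[W_iW_j]=\tfrac12\mu_i(\mathbf{1})\mu_j(\mathbf{1})+\tfrac12\mu_i(\mathbf{0})\mu_j(\mathbf{0})$ while $\mathbb{E}[W_i]\mathbb{E}[W_j]=\tfrac14(\mu_i(\mathbf{1})-\mu_i(\mathbf{0}))(\mu_j(\mathbf{1})-\mu_j(\mathbf{0}))$. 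Subtracting and regrouping yields $\mathrm{Cov}(W_i,W_j)=\tfrac14(\mu_i(\mathbf{1})+\mu_i(\mathbf{0}))(\mu_j(\mathbf{1})+\mu_j(\mathbf{0}))$, the leading term of \eqref{eqn:bound_covariances2}.

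It then remains to bound the three remainder covariances. Using $|\mathrm{Cov}(X,Y)|\le 2\|X\|_\infty\,\mathbb{E}|Y|$ together with Assumption \ref{ass:exposure_restriction}(i) (so $\|W_i\|_\infty\le M$ and $\|\mu_j\|_\infty\le M$) and the bound on $\|R_j\|_\infty$, each of $|\mathrm{Cov}(W_i,R_j)|$ and $|\mathrm{Cov}(R_i,W_j)|$ is at most a fixed multiple of $\bar{\phi}_n\frac{\alpha_j}{|\mathcal{N}_j|}|\mathcal{N}_j\cap\{k:c(k)\neq c(j)\}|$ (resp.\ its $i$-version), which is $\mathcal{O}(\bar{\phi}_n)$ and is precisely the local between-cluster connectedness that aggregates into $b_n(\mathcal{C}_{\mathrm{C},n})$; the last term obeys $|\mathrm{Cov}(R_i,R_j)|\le 2\|R_i\|_\infty\|R_j\|_\infty=\mathcal{O}(\bar{\phi}_n^2)$, of strictly smaller order. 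Collecting the three terms gives the $\mathcal{O}(b_n(\mathcal{C}_{\mathrm{C},n}))$ correction in both cases, which, combined with the previous paragraph, yields \eqref{eqn:bound_covariances} for $c(i)\neq c(j)$ and \eqref{eqn:bound_covariances2} for $c(i)=c(j)$.

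The routine parts are the linearization and the exact covariance of the $W$'s. The delicate point I expect to require care is the accounting that makes $b_n(\mathcal{C}_{\mathrm{C},n})$, rather than merely $\bar{\phi}_n$, the scale of the remainder: one must use that $R_i$ vanishes on configurations where all of $i$'s neighbors share $i$'s treatment, so that it is controlled by the between-cluster degree of $i$ (which normalizes to a summand of $b_n$) and not by the full degree, and one must keep track of which collection of cluster assignments each of the four pieces depends on, both to justify $\mathrm{Cov}(W_i,W_j)=0$ in the between-cluster case and to size the surviving remainder covariances. Everything else reduces to $P(D_i=1)=\tfrac12$, independence of assignments across clusters, and uniform boundedness of potential outcomes.
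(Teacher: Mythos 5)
Your proof is correct and follows essentially the same route as the paper's: both linearize the realized sign-adjusted outcome around $\mu_i(\mathbf{1})$ and $\mu_i(\mathbf{0})$, compute the covariance of the leading (cluster-assignment-measurable) part exactly — zero across clusters by independence, $\tfrac14(\mu_i(\mathbf{1})+\mu_i(\mathbf{0}))(\mu_j(\mathbf{1})+\mu_j(\mathbf{0}))$ within a cluster — and bound the remainder covariances via the Lipschitz condition of Assumption 3(ii), which yields the between-cluster degree terms that aggregate to $b_n(\mathcal{C}_{\mathrm{C},n})$. Your single $W_i+R_i$ decomposition is just a more compact packaging of the paper's four-way split by $(D_i,D_j)$ followed by add-and-subtract within each piece, and your explicit per-unit bound on $\|R_i\|_\infty$ is in fact slightly cleaner than the paper's statement of the corresponding conditional expectation.
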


 \begin{proof} See Appendix \ref{proof:psi}. 
 \end{proof}

Lemma \ref{lem:rel} characterizes the covariance between individuals in different clusters (Equation \ref{eqn:bound_covariances}) and individuals in the same cluster (Equation \ref{eqn:bound_covariances2}). For individuals in different clusters, the covariance is of the \textit{same} order of the bias, whereas for individuals in the \textit{same} cluster, the covariance is $\mathcal{O}(1)$. 
The component $\frac{1}{4}\Big(\mu_i(\mathbf{1}) + \mu_i(\mathbf{0}) \Big)\Big(\mu_j(\mathbf{1}) + \mu_j(\mathbf{0}) \Big)$ captures the covariance between individuals in the same clusters up-to a factor of order $\bar{\phi}_n$. The covariance between individuals in different clusters is zero if there are no individuals with neighbors in a different cluster since, in this case, the within-cluster covariance captures all the covariances between individuals.

\subsection{Proof of Lemma \ref{lem:worst_case_bias}} \label{sec:proof_bias}

    We prove the equality by proving inequality on both sides. 
    
    \paragraph{Upper bound ($\le$ case)} We have 
$$
\small 
\begin{aligned} 
\sup_{\mu \in \mathcal{M}} \Big| \mathbb{E}_\mu[\hat{\tau}_n] - \tau_{n,\mu}\Big| & = 
\sup_{\mu \in \mathcal{M}} \Big|
\frac{1}{n}\sum_{i=1}^n \mathbb{E}\Big[\mu_i(D_i, \mathbf{D}_{-i}) - \mu_i(\mathbf{1}) | D_i = 1\Big] - \mathbb{E}\Big[\mu_i(D_i, \mathbf{D}_{-i}) - \mu_i(\mathbf{0}) | D_i = 0\Big]\Big| \\ 
&\le \sup_{\mu(1, \cdot) \in \otimes_{i=1}^n \mathcal{M}_{1, i}} \Big|
\frac{1}{n} \sum_{i=1}^n \mathbb{E}\Big[\mu_i(D_i, \mathbf{D}_{-i}) - \mu_i(\mathbf{1}) | D_i = 1\Big]\Big| \\ & + \sup_{\mu(0, \cdot) \in \otimes_{i=1}^n \mathcal{M}_{0, i}} \Big| \frac{1}{n} \sum_{i=1}^n\mathbb{E}\Big[\mu_i(D_i, \mathbf{D}_{-i}) - \mu_i(\mathbf{0}) | D_i = 0\Big]\Big|. 
\end{aligned} 
$$ 
The last inequality follows from the triangular inequality. By Assumption \ref{ass:exposure_restriction} (ii), we write 
\begin{equation}\label{eq:bias_term1}
\small 
\begin{aligned} 
\sup_{\mu(1, \cdot) \in \otimes_{i=1}^n \mathcal{M}_{1, i}} \Big|
\frac{1}{n} \sum_{i=1}^n \mathbb{E}\Big[\mu_i(D_i, \mathbf{D}_{-i}) - \mu_i(\mathbf{1}) | D_i = 1\Big]\Big| 
& \le \frac{\bar{\phi}_n}{n} \sum_{i=1}^n \frac{1}{|\mathcal{N}_i|}\mathbb{E}\Big[\sum_{k \in \mathcal{N}_i} (1 - D_k )\Big| | D_i = 1\Big]  \\ 
&= \frac{\bar{\phi}_n}{n} \sum_{i=1}^n \frac{1}{|\mathcal{N}_i|}  \mathbb{E}\Big[|\mathcal{N}_i| - \sum_{k \in \mathcal{N}_i} D_k  \Big| D_i = 1\Big] \\ 
&= \frac{\bar{\phi}_n}{2 n} \sum_{i=1}^n \frac{1}{|\mathcal{N}_i|}   \Big|j \in \mathcal{N}_i: c(i) \neq c(j) \Big|, 
\end{aligned} 
\end{equation}
where the last inequality follows from the fact that each unit not assigned to the same cluster of $i$ has treatment probability equal to $1/2$ under Assumption \ref{ass:clusters}. The same reasoning applies to $ \sup_{\mu(0, \cdot) \in \otimes_{i=1}^n \mathcal{M}_{0, i}} \Big| \frac{1}{n} \sum_{i=1}^n\mathbb{E}\Big[\mu_i(D_i, \mathbf{D}_{-i}) - \mu_i(\mathbf{0}) | D_i = 0\Big]\Big|$.

\paragraph{Lower bound ($\ge$ case)} To prove the above upper bound is achievable, we construct potential outcomes as follows:
\begin{equation}\label{eq:worst_case}
\small 
\begin{aligned} 
\mu_i(1, \mathbf{d}_{-i}) = \mu_i(0, \mathbf{d}_{-i}) = \left\{\begin{array}{ll}
\psi_R -  \bar{\phi}_n\frac{1}{|\mathcal{N}_i|} \sum_{k \in \mathcal{N}_i}  (1 - \mathbf{d}_k) & \text{if }|\psi_R| \ge |\psi_L|,\\
\psi_L + \bar{\phi}_n \frac{1}{|\mathcal{N}_i|}\sum_{k \in \mathcal{N}_i} \mathbf{d}_k & \text{otherwise}
\end{array}\right.
\end{aligned}
\end{equation}
Since $0\le \bar{\phi}_n\le \psi_R - \psi_L$, for $d \in \{0, 1\}$,
$\mu_i(d, \cdot)\in [\psi_L, \psi_R].$
Thus, $\mu_i(d, \cdot)\in \mathcal{M}_i$. Note that in both cases, $\mu_i(1, \mathbf{d}_{-i}) - \mu_i(1, \mathbf{1})$ and $\mu_i(0, \mathbf{d}_{-i}) - \mu_i(0, \mathbf{0})$ have different signs. Thus, we have

$$
\small 
\begin{aligned} 
\sup_{\mu \in \mathcal{M}} \Big| \mathbb{E}_\mu[\hat{\tau}_n] - \tau_{n,\mu}\Big| & = 
\sup_{\mu \in \mathcal{M}} \Big|
\frac{1}{n} \sum_{i=1}^n \mathbb{E}\Big[\mu_i(D_i, \mathbf{D}_{-i}) - \mu_i(\mathbf{1}) | D_i = 1\Big] - \mathbb{E}\Big[\mu_i(D_i, \mathbf{D}_{-i}) - \mu_i(\mathbf{0}) | D_i = 0\Big]\Big| \\ 
&\ge  \Big|
\frac{\bar{\phi}_n}{n} \sum_{i=1}^n \frac{1}{|\mathcal{N}_i|} \mathbb{E}\Big[\sum_{k \in \mathcal{N}_i} (D_k - 1) | D_i = 1\Big] -  \frac{\bar{\phi}_n}{n} \sum_{i=1}^n \frac{1}{|\mathcal{N}_i|} \mathbb{E}\Big[\sum_{k \in \mathcal{N}_i} D_k | D_i = 0\Big]\Big| \\
&= \frac{\bar{\phi}_n}{n} \sum_{i=1}^n \frac{1}{|\mathcal{N}_i|} \mathbb{E}\Big[\sum_{k \in \mathcal{N}_i} (1 - D_k) | D_i = 1\Big] +  \frac{\bar{\phi}_n}{n} \sum_{i=1}^n \frac{1}{|\mathcal{N}_i|} \mathbb{E}\Big[\sum_{k \in \mathcal{N}_i} D_k | D_i = 0\Big], 
\end{aligned} 
$$ 
 where we inverted the sign in the last expression using the absolute value in the second expression. 
The proof completes following the same steps for the upper bound.

 \subsection{Proof of Lemma \ref{lem:2}}  \label{proof:lem:2}
Let $C_i = \{c(v): v\in B_i\}$, and $G_i = \Big\{g \in \{1, \cdots, n\}: \mathcal{N}_g \cap B_i \neq \emptyset \Big\}$. Observe that we can write Lemma \ref{lem:2} equivalently as stating that 
$$
\small 
\begin{aligned} 
 & \mu_i(D_i, \mathbf{D}_{-i})\Big[2D_i - 1\Big] \perp  \mu_j(D_j, \mathbf{D}_{-j})\Big[2D_j - 1\Big], \quad \forall j \not \in \{B_i \cup G_i\}. 
 \end{aligned}
$$
We, therefore, prove the above expression. 
By definition of $B_i$, we have (a) $c(v') \in C_i$ iff $v'\in B_i$, and (b) $C_i = \{c(v): v\in \{i\}\cup \mathcal{N}_i\}$. By the local interference assumption (Assumption \ref{ass:first_order}), $\mu_i(D_i, \mathbf{D}_{-i})[2D_i - 1]$ is a function of $\{D_v: v\in \{i\}\cup \mathcal{N}_{i}\} = \{\tilde{D}_{c}: c\in C_i\}$. For any $j$ and $v \in B_j$, property (b) implies there exists $v'\in \{j\}\cup \mathcal{N}_{j}$ such that $c(v') = c(v)$. When $j\not\in \{B_i \cup G_i\}$, $(\{j\}\cup \mathcal{N}_{j}\cap B_i = \emptyset$) and hence $v'\not\in B_i$. Property (a) then implies $c(v) = c(v')\not\in C_i$. Thus, $C_i \cap C_j = \emptyset$. Since $\{\tilde{D}_c: c\in \mathcal{C}_{ n}\}$ are independent (Assumption \ref{ass:clusters}), the desired result follows. 

 \subsection{Proof of Lemma \ref{lem:rel}} \label{proof:psi}

 We consider the case where two units are in the same or different clusters separately. We will refer to $\mu_i(D_i, \mathbf{D}_{-i})$ as $\mu_i(\mathbf{D})$ for notational convenience. 

 \paragraph{Unit $i$ and $j$ are in different clusters}
  First, we write 
\begin{equation} \label{eqn:helper_psi}
 \small 
 \begin{aligned} 
 \mathrm{Cov}\Big(\mu_i(\mathbf{D})\Big[2D_i - 1\Big], \mu_j(\mathbf{D})\Big[2D_j - 1\Big]\Big)  &=  \mathrm{Cov}\Big(\mu_i(1, \mathbf{D}_{-i}) D_i, \mu_j(1, \mathbf{D}_{-j}) D_j\Big) \\ & -  \mathrm{Cov}\Big(\mu_i(1, \mathbf{D}_{-i}) D_i, \mu_j(0, \mathbf{D}_{-j}) (1 - D_j)\Big) \\ &- \mathrm{Cov}\Big(\mu_i(0, \mathbf{D}_{-i}) (1 - D_i), \mu_j(1, \mathbf{D}_{-j}) D_j\Big) \\ & + \mathrm{Cov}\Big(\mu_i(0, \mathbf{D}_{-i})(1 - D_i), \mu_j(0, \mathbf{D}_{-j})(1 - D_j)\Big).
 \end{aligned} 
 \end{equation} 
 We bound the first component in the right-hand side of Equation \eqref{eqn:helper_psi}.
 
 \paragraph{First component in Equation \eqref{eqn:helper_psi}} We can write 
\begin{equation} \label{eqn:helper_psi2} 
 \small 
 \begin{aligned}
 \mathrm{Cov}\Big(\mu_i(1, \mathbf{D}_{-i}) D_i, \mu_j(1, \mathbf{D}_{-j}) D_j\Big) & = \mathrm{Cov}\Big(\mu_i(1, \mathbf{D}_{-i}) D_i - \mu_i(1, \mathbf{1})D_i, \mu_j(1, \mathbf{D}_{-j}) D_j\Big) \\ &+ \mathrm{Cov}\Big(\mu_i(1, \mathbf{1})D_i, \mu_j(1, \mathbf{D}_{-j}) D_j - \mu_j(1, \mathbf{1}) D_j\Big) \\ 
 &+ \mathrm{Cov}\Big(\mu_i(1, \mathbf{1})D_i, \mu_j(1, \mathbf{1}) D_j\Big). 
 \end{aligned} 
 \end{equation} 
 We can now study each component separately. Since $\mu_i(\cdot)$ is uniformly bounded, it is easy to show that 
 $$
 \small 
 \begin{aligned} 
 \mathrm{Cov}\Big(\mu_i(1, \mathbf{D}_{-i}) D_i - \mu_i(1, \mathbf{1})D_i, \mu_j(1, \mathbf{D}_{-j}) D_j\Big) & = \mathcal{O}\left(|\mu_i(1, \mathbf{D}_{-i}) - \mu_i(1, \mathbf{1})| \right) = \mathcal{O}(\bar{\phi}_n),
 \end{aligned} 
 $$ 
 where the last line is implied by Assumption \ref{ass:exposure_restriction}. By symmetry, we also have
 $$
 \small
 \mathrm{Cov}\Big(\mu_i(1, \mathbf{1})D_i, \mu_j(1, \mathbf{D}_{-j}) D_j - \mu_j(1, \mathbf{1}) D_j\Big) = 
\mathcal{O}\left(\bar{\phi}_n\right),
 $$
 where the last line is obtained by swapping $i$ and $j$. 
 Finally, for the third component in Equation \eqref{eqn:helper_psi2} note that since $c(i) \neq c(j)$, we have that 
 $
 \mathrm{Cov}\Big(\mu_i(1, \mathbf{1})D_i, \mu_j(1, \mathbf{1}) D_j\Big) = 0, 
 $ 
 by design. Putting pieces together, we obtain that
 $
 \mathrm{Cov}\Big(\mu_i(1, \mathbf{D}_{-i}) D_i, \mu_j(1, \mathbf{D}_{-j}) D_j\Big) = \mathcal{O}\left(\bar{\phi}_n\right). 
 $

 \paragraph{Remaining components in Equation \eqref{eqn:helper_psi}}
 We now move to the second component in Equation \eqref{eqn:helper_psi}.  We have 
\begin{equation} \label{eqn:helper_psi3} 
 \small 
 \begin{aligned}
 \mathrm{Cov}\Big(\mu_i(1, \mathbf{D}_{-i}) D_i, \mu_j(0, \mathbf{D}_{-j}) (1 - D_j)\Big) &=  \mathrm{Cov}\Big(\mu_i(1, \mathbf{D}_{-i}) D_i - \mu_i(1, \mathbf{1})D_i, \mu_j(0, \mathbf{D}_{-j}) (1 - D_j)\Big) \\ &+ \mathrm{Cov}\Big(\mu_i(1, \mathbf{1})D_i, \mu_j(0, \mathbf{D}_{-j}) (1 - D_j) - \mu_j(0, \mathbf{0}) (1 - D_j)\Big) \\ 
 &+ \mathrm{Cov}\Big(\mu_i(1, \mathbf{1})D_i, \mu_j(0, \mathbf{0}) (1 - D_j)\Big). 
 \end{aligned} 
 \end{equation} 

 The same reasoning used to bound the first component in Equation \eqref{eqn:helper_psi2} directly applies also to Equation \eqref{eqn:helper_psi3}. 
 
 Finally, we can use similar arguments for the third and fourth components in Equation \eqref{eqn:helper_psi}, whose proof follows verbatim as the proof to bound the components in  Equation \eqref{eqn:helper_psi2}.  
 
 \paragraph{Collecting bounds} Collecting the bounds, it is easy to show that for $(i,j)$ in different clusters
 $
  \mathrm{Cov}\Big(\mu_i(\mathbf{D})\Big[2D_i - 1\Big], \mu_j(\mathbf{D})\Big[2D_j - 1\Big]\Big)  =  \mathcal{O}\left(\bar{\phi}_n\right), 
 $
 completing the proof for when $i$, $j$ are in two different clusters.

 \paragraph{Unit $i$ and $j$ are in the same cluster} If $i$ and $j$ are in the same cluster we can invoke Equation \eqref{eqn:helper_psi} similarly as above, with the only difference that for $c(i) = c(j)$,
\begin{equation} \label{eqn:22} 
\small 
 \begin{aligned}
 \mathrm{Cov}\Big(\mu_i(1, \mathbf{1})D_i, \mu_j(1, \mathbf{1}) D_j \Big) &= \frac{1}{4} \mu_i(1, \mathbf{1}) \mu_j(1, \mathbf{1}), \\ \mathrm{Cov}\Big(\mu_i(0, \mathbf{0})(1 - D_i), \mu_j(0, \mathbf{0}) (1 - D_j) \Big) & = \frac{1}{4} \mu_i(0, \mathbf{0}) \mu_j(0, \mathbf{0}), \\
 \mathrm{Cov}\Big(\mu_i(0, \mathbf{0})(1 - D_i), \mu_j(1, \mathbf{1}) D_j \Big) & =  -\frac{1}{4} \mu_j(1, \mathbf{1}) \mu_i(0, \mathbf{0}), \\
  \mathrm{Cov}\Big(\mu_i(1, \mathbf{1})D_i, \mu_j(0, \mathbf{0}) (1 - D_j) \Big) & =  -\frac{1}{4} \mu_j(0, \mathbf{0}) \mu_i(1, \mathbf{1}).
 \end{aligned}
\end{equation} 
Following the same steps as for the case where $i$, $j$ are in different clusters, accounting for Equation \eqref{eqn:22}, the proof completes.

\subsection{Proof of Lemma \ref{prop:main}} \label{app:propmain} 

We organize the proof as follows. First, we decompose the variance into sums of covariances in different sets. We then study the dominant components.

\paragraph{Variance decomposition into multiple components}
As a first step, we characterize the size of the set of units for which the covariances are not zero, but of order at most $\mathcal{O}(\bar{\phi}_n)$. 
Thus, 
$$
\footnotesize  
\begin{aligned}
& \mathbb{E}_\mu\Big[\Big(\hat{\tau}_n(\mathcal{C}_{ n}) - \mathbb{E}_\mu[\hat{\tau}_n(\mathcal{C}_{ n})]\Big)^2\Big] 
= \frac{4}{n^2} \sum_{i,j} \mathrm{Cov}\Big(\mu_i(D_i, \mathbf{D}_{-i})[2 D_i - 1], \mu_j(D_j, \mathbf{D}_{-j})[2 D_j - 1]\Big)\\
& = \frac{4}{n^2} \sum_{i,j: c(i) = c(j)} \mathrm{Cov}\Big(\mu_i(D_i, \mathbf{D}_{-i})[2 D_i - 1], \mu_j(D_j, \mathbf{D}_{-j})[2 D_j - 1]\Big) +  \mathcal{O}\Big(\frac{\bar{\phi}_n}{n^2}\cdot \sum_{i,j: c(i)\neq c(j)}I(j\in B_i\cup G_i)\Big)\qquad \text{(by \ref{lem:2})}\\
& =  \frac{4}{n^2} \sum_{i,j: c(i) = c(j)} \frac{1}{4}(\mu_i(\mathbf{1}) + \mu_i(\mathbf{0}))(\mu_j(\mathbf{1}) + \mu_j(\mathbf{0})) + \mathcal{O}\left(\frac{\bar{\phi}_n}{n^2}\cdot \sum_{i,j: c(i)= c(j)}1\right) + \\
& \qquad \mathcal{O}\left(\frac{\bar{\phi}_n}{n^2}\cdot \sum_{i,j: c(i)\neq c(j)}I(j\in B_i\cup G_i)\right)\qquad \text{(by \ref{lem:rel})}\\
& = \frac{4}{n^2} \sum_{i,j: c(i) = c(j)} \frac{1}{4}(\mu_i(\mathbf{1}) + \mu_i(\mathbf{0}))(\mu_j(\mathbf{1}) + \mu_j(\mathbf{0})) + \\
& \qquad \mathcal{O}\left(\frac{\bar{\phi}_n}{n^2}\cdot \sum_{i,j}I(j\in B_i\cup G_i)\right)\qquad (\text{because }c(i) = c(j) \Longrightarrow j\in B_i\cup G_i)\\
& = \frac{1}{n^2}\sum_{k=1}^{K_n}\left(\sum_{i: c(i)  = k}(\mu_i(\mathbf{1}) + \mu_i(\mathbf{0}))\right)^2 +  \mathcal{O}\left(\frac{\bar{\phi}_n}{n^2}\cdot \sum_{i}|B_i\cup G_i|\right).
\end{aligned}
$$ 
\paragraph{Size of the set $B_i \cup G_i$} Next, we study the size of $B_i\cup G_i$. By definition, $j\in G_i$ only if $j$ is the neighbor of a unit in $B_i$. Thus,
$ 
|B_i\cup G_i| \le \sum_{\ell: \ell\in B_i}(|\mathcal{N}_\ell| + 1) = \sum_{\ell: c(\ell)\in C_i}(|\mathcal{N}_\ell| + 1),
$ 
where $C_i = \{c(v): v\in \{i\}\cup \mathcal{N}_i\}$ is defined in the proof of Lemma \ref{lem:2}. Thus, 
$$
\small 
\begin{aligned} 
\sum_{i}|B_i\cup G_i|\le \sum_{i}\sum_{\ell: c(\ell)\in C_i}(|\mathcal{N}_\ell| + 1) = \sum_{k=1}^{K_n}|\{i: k\in C_i\}|\sum_{\ell: c(\ell) = k}(|\mathcal{N}_\ell| + 1).
\end{aligned}
$$
By definition, $k\in C_i$ only if $c(i) = k$ or $i$ is the neighbor of a unit in cluster $k$. Thus, 
$|\{i: k\in C_i\}|\le \sum_{\ell: c(\ell) = k}(|\mathcal{N}_\ell| + 1).$ 
As a result,
$\sum_{i}|B_i\cup G_i|\le \sum_{k=1}^{K_n}\Big(\sum_{\ell: c(\ell) = k}(|\mathcal{N}_\ell| + 1)\Big)^2.$ 
By Cauchy-Schwarz inequality and Assumption \ref{ass:clusters}, 
$$
\small 
\begin{aligned} 
\sum_{i}|B_i\cup G_i|\le \sum_{k=1}^{K_n}n_k \sum_{\ell: c(\ell) = k}(|\mathcal{N}_\ell| + 1)^2\le \bar{\gamma}\frac{n}{K_n}\sum_{i}(|\mathcal{N}_i| + 1)^2.
\end{aligned} 
$$
On the other hand, $|B_i \cup G_i|\le n$ implies that
$\sum_{i}|B_i\cup G_i|\le n^2.$ 
\paragraph{Collecting all terms together} Putting two pieces together, we obtain that 
\begin{align}
\small 
&\mathbb{E}_\mu\Big[\Big(\hat{\tau}_n(\mathcal{C}_{ n}) - \mathbb{E}_\mu[\hat{\tau}_n(\mathcal{C}_{ n})]\Big)^2\Big] = \frac{1}{n^2}\sum_{k=1}^{K_n}\left(\sum_{i: c(i)  = k}(\mu_i(\mathbf{1}) + \mu_i(\mathbf{0}))\right)^2\nonumber\\
& \quad +  \mathcal{O}(\bar{\phi}_n)\cdot \min\left\{\frac{1}{nK_n}\sum_{i}(|\mathcal{N}_i| + 1)^2, 1\right\}\label{eq:sum_square_degree}. 
\end{align}
Since $\mathcal{N}_{n, \max} = \max\{\max_{i}|\mathcal{N}_i|, 1\}$  , 
\begin{align}
& \mathbb{E}_\mu\Big[\Big(\hat{\tau}_n(\mathcal{C}_{ n}) - \mathbb{E}_\mu[\hat{\tau}_n(\mathcal{C}_{ n})]\Big)^2\Big]\\
& = \frac{1}{n^2}\sum_{k=1}^{K_n}\left(\sum_{i: c(i)  = k}(\mu_i(\mathbf{1}) + \mu_i(\mathbf{0}))\right)^2 +  \mathcal{O}\left(\bar{\phi}_n \cdot \min\left\{\mathcal{N}_{n, \max}^2/K_n, 1\right\}\right)\nonumber\\
& = \frac{1}{n^2}\sum_{k=1}^{K_n}\left(\sum_{i: c(i)  = k}(\mu_i(\mathbf{1}) + \mu_i(\mathbf{0}))\right)^2 +  \mathcal{O}\left(\bar{\phi}_n/K_n \cdot \min\left\{\mathcal{N}_{n, \max}^2, n\right\}\right).
\end{align}
By Assumption \ref{defn:local} and the fact that $K_n \le n$, 
\begin{align}
&\mathbb{E}_\mu\Big[\Big(\hat{\tau}_n(\mathcal{C}_{ n}) - \mathbb{E}_\mu[\hat{\tau}_n(\mathcal{C}_{ n})]\Big)^2\Big]= \frac{1}{n^2}\sum_{k=1}^{K_n}\left(\sum_{i: c(i)  = k}(\mu_i(\mathbf{1}) + \mu_i(\mathbf{0}))\right)^2 + \mathcal{O}(\varepsilon/K_n)\label{eq:worst_case_variance_max_degree}.
\end{align}
By Assumption \ref{ass:exposure_restriction} (i), 
$\mathbb{E}_\mu\Big[\Big(\hat{\tau}_n(\mathcal{C}_{ n}) - \mathbb{E}_\mu[\hat{\tau}_n(\mathcal{C}_{ n})]\Big)^2\Big]\le \frac{1}{n^2}\sum_{k=1}^{K_n}n_k^2 \cdot \bar{\psi} + \mathcal{O}(\varepsilon/K_n).$ 
\paragraph{Conclusions by showing that the upper bound is achievable} Consider the potential outcomes constructed in \eqref{eq:worst_case}. If $|\psi_R|\ge |\psi_L|$, then it must be that $\psi_R > 0$ and
$\mu_i(\mathbf{1}) + \mu_i(\mathbf{0}) = 2\psi_R - \bar{\phi}_n = \bar{\psi}^{1/2} - \bar{\phi}_n.$ 
Similarly, when $|\psi_L| > |\psi_R|$, it must be that $\psi_L < 0$ and 
$\mu_i(\mathbf{1}) + \mu_i(\mathbf{0}) = 2\psi_L + \bar{\phi}_n = -\bar{\psi}^{1/2} + \bar{\phi}_n.$ 
$$  
\small 
\begin{aligned}
& \frac{1}{n^2}\sum_{k=1}^{K_n}\left(\sum_{i: c(i)  = k}(\mu_i(\mathbf{1}) + \mu_i(\mathbf{0}))\right)^2 = \frac{1}{n^2}\sum_{k=1}^{K_n}n_k^2 \cdot (\bar{\psi}^{1/2} - \bar{\phi}_n)^2\ge \frac{1}{n^2}\sum_{k=1}^{K_n}n_k^2 \cdot \bar{\psi} - \frac{2\bar{\psi}^{1/2} \bar{\phi}_n\bar{\gamma}}{K_n}, \label{eq:worst_case_variance_last_step}
\end{aligned}
$$ 
where the last inequality invokes Assumption \ref{ass:clusters} under which 
$\frac{1}{n^2}\sum_{k=1}^{K_n}n_k^2\le \frac{1}{n^2}\sum_{k=1}^{K_n}n_k \cdot \bar{\gamma}\frac{n}{K_n} = \frac{\bar{\gamma}}{K_n}.$ 
Note that Assumption \ref{defn:local} implies $|\bar{\phi}_n| \le \mathcal{O}(\varepsilon)$, we obtain that
$$
\small 
\begin{aligned} \sup_{\mu\in \mathcal{M}}\mathbb{E}_\mu\Big[\Big(\hat{\tau}_n(\mathcal{C}_{ n}) - \mathbb{E}_\mu[\hat{\tau}_n(\mathcal{C}_{ n})]\Big)^2\Big]\ge \frac{1}{n^2}\sum_{k=1}^{K_n}n_k^2 \cdot \bar{\psi} + \mathcal{O}(\varepsilon/K_n).
\end{aligned} 
$$ 
Finally, by Cauchy-Schwarz inequality, 
$
\frac{1}{n^2} \sum_{k=1}^{K_n} n_k^2\ge 1/K_n.
$
The proof completes.

\subsection{Proof of Theorem \ref{thm:Bn_Bnstar}}\label{app:worst_case_mse}

Clearly from Lemma \ref{lem:worst_case_bias} and \ref{prop:main} we have 
$
\sup_{\mu \in \mathcal{M}} \mathbb{E}_\mu\Big[\Big(\hat{\tau}_n(\mathcal{C}_{ n}) - \mathbb{E}_\mu[\hat{\tau}_n(\mathcal{C}_{ n})]\Big)^2\Big] + \lambda  \Big(\tau_{n,\mu} - \mathbb{E}_\mu[\hat{\tau}_n(\mathcal{C}_{ n})]\Big)^2 \le \mathcal{B}_n^*(\mathcal{C}_n, \lambda)(1 + \mathcal{O}(\varepsilon)).
$
We now have to show the other side of the inequality. 
From the proofs of Lemma \ref{lem:worst_case_bias} and Lemma \ref{lem:rel}, we observe that the worst-case bias and variance are both achieved, up to $1 + \mathcal{O}(\varepsilon)$ factors, by the potential outcomes defined in Equation \eqref{eq:worst_case}. The result is then proved.

\subsection{Proof of Theorem \ref{thm:thumb}} \label{app:thumb}

We can write, 
$\mathcal{B}_n^*(\mathcal{C}_{\mathrm{B},n}, \lambda) = \lambda\bar{\phi}_n^2  + \frac{1}{n} \bar{\psi},$ 
and 
$\mathcal{B}_n^*(\mathcal{C}_{ n}, \lambda) = \lambda \bar{\phi}_n^2 b_n(\mathcal{C}_{ n})^2 + \left(\frac{1}{n^2} \sum_{k=1}^{K_n} n_k^2\right)\bar{\psi}.$

Finally, note that $\frac{1}{n^2} \sum_{k=1}^{K_n} n_k^2 = (\frac{1}{K_n} \sum_{k=1}^{K_n} \gamma_k^2)/K_n = \underline{\gamma}/K_n$. The proof completes after rearrangement.

\subsection{Proof of Theorem \ref{thm:opt1}} \label{proof:thm:opt1}
Fix \(K\ge2\) and \(c\in\mathfrak C_K\). Let
$ 
\mathbf X_c(K)=\mathbf M_c(K)\mathbf M_c(K)^\top .
$ 
Then
$ 
X_{c,ij}(K)=1\{c(i)=c(j)\}.
$ 
First,
$ 
\operatorname{tr}\!\left(
\one_n\one_n^\top \mathbf M_c(K)\mathbf M_c(K)^\top
\right)
=
\one_n^\top \mathbf M_c(K)\mathbf M_c(K)^\top\one_n .
$ 
Because the \(k\)-th component of \(\mathbf M_c(K)^\top\one_n\) is the size
\(n_k\) of cluster \(k\), it follows that
$ 
\one_n^\top \mathbf M_c(K)\mathbf M_c(K)^\top\one_n
=
\left\|\mathbf M_c(K)^\top\one_n\right\|_2^2
=
\sum_{k=1}^K n_k^2 .
$ 
Thus the first term in the displayed program equals the variance component in
\(R_n(\mathcal C_n;\xi_n)\).

Second,
$ 
\operatorname{tr}\!\left(
\mathbf L^\top
\left[
\one_n\one_n^\top
-
\mathbf M_c(K)\mathbf M_c(K)^\top
\right]
\right) =
\sum_{i=1}^n\sum_{j=1}^n
L_{ij}\{1-X_{c,ij}(K)\}  =
\sum_{i=1}^n\sum_{j=1}^n
L_{ij}1\{c(i)\ne c(j)\}.
$ 
By the definition \(\mathbf L=\mathbf V^{-1}\mathbf A\), this equals
$ 
\sum_{i=1}^n
\frac{1}{|\mathcal N_i|}
\left|
\mathcal N_i\cap\{j:c(i)\ne c(j)\}
\right|.
$ 
Therefore
$ 
\frac1n
\operatorname{tr}\!\left(
\mathbf L^\top
\left[
\one_n\one_n^\top
-
\mathbf M_c(K)\mathbf M_c(K)^\top
\right]
\right)
=
b_n(\mathcal C_n).
$ 

Finally, since the lower-right entry of the matrix
$ 
\begin{pmatrix}
t & z\\
z & 1
\end{pmatrix}
$ 
is positive, the Schur complement gives
$ 
\begin{pmatrix}
t & z\\
z & 1
\end{pmatrix}
\succeq0
\quad\Longleftrightarrow\quad
t\ge z^2 .
$ 
For any fixed clustering \(c\), the constraint defining \(z\) sets
\(z=b_n(\mathcal C_n)\). Since \(\xi_n\ge0\), the objective is minimized at
\(t=z^2=b_n(\mathcal C_n)^2\) whenever \(\xi_n<\infty\). If \(\xi_n=\infty\), the
choice of \(t\) is irrelevant and the set of minimizing clusterings is unchanged.

Consequently, the displayed program has the same value, for each
\(c\in\mathfrak C_K\), as
$ 
\frac{\xi_n}{n^2}\sum_{k=1}^K n_k^2
+
b_n(\mathcal C_n)^2
=
R_n(\mathcal C_n;\xi_n).
$

\subsection{Proof of Theorem \ref{thm:observable_approximation_error}} \label{proof:thm:observable_approximation_error}
Fix \(K\in\{\underline K,\ldots,\bar K\}\) and consider any clustering
\(\mathcal C_n\) with exactly \(K\) nonempty clusters satisfying $n_k \le \bar{\gamma} n/K$. Its cluster matrix
$ 
\mathbf X_c(K)=\mathbf M_c(K)\mathbf M_c(K)^\top
$ 
is feasible for the SDP relaxation in Equation \eqref{eq:SDP}. Indeed,
$ 
\mathbf X_c(K)\succeq0,X_{c,ii}(K)=1, 0\le X_{c,ij}(K)\le1,
$ 
and
$ 
\operatorname{tr}\!\left(\one_n\one_n^\top\mathbf X_c(K)\right)
=
\sum_{k=1}^K n_k^2 .
$ 
Since the clusters are nonempty and satisfy $n_k \le \bar{\gamma} n/K$,
$ 
\frac{n^2}{K}
\le
\sum_{k=1}^K n_k^2
\le
\bar{\gamma} n^2/K.
$ 
Set
$ 
z=
\frac1n
\operatorname{tr}\!\left(
\mathbf L^\top
\left[
\one_n\one_n^\top-\mathbf X_c(K)
\right]
\right), \quad
t=z^2.
$ 
Then
$ 
\begin{pmatrix}
t & z\\
z & 1
\end{pmatrix}
\succeq0,
$ 
so all constraints in Equation \eqref{eq:SDP} are satisfied. By the proof of
Theorem \ref{thm:opt1}, the corresponding SDP objective equals
$ 
R_n(\mathcal C_n;\xi_n).
$ 
Therefore, because the SDP relaxation minimizes over a feasible set that contains
all exact \(K\)-cluster matrices,
$ 
\min_{K\in\{\underline K,\ldots,\bar K\}}
\underline R_{n,K}^{\mathrm{sdp}}(\xi_n)
\le
\min_{\mathcal C_n \in \bar{\mathfrak C}}
R_n(\mathcal C_n;\xi_n).
$ This inequality is true whether the last constraint in Equation \eqref{eq:SDP} is or is not imposed for optimization in Algorithm \ref{alg:1} (since not imposing such a constraint would make the solution smaller $\min_{K\in\{\underline K,\ldots,\bar K\}}
\underline R_{n,K}^{\mathrm{sdp}}(\xi_n)$). 
Dividing \(R_n(\mathcal C_n^\star;\xi_n)\) by both sides of the last display
yields
\[
\frac{
R_n(\mathcal C_n^\star;\xi_n)
}{
\min_{\mathcal C_n \in \bar{\mathfrak C}}
R_n(\mathcal C_n;\xi_n)
}
\le
\frac{
R_n(\mathcal C_n^\star;\xi_n)
}{
\min_{K\in\{\underline K,\ldots,\bar K\}}
\underline R_{n,K}^{\mathrm{sdp}}(\xi_n)
}. 
\]
Finally, because
$ 
\mathcal B_n^*(\mathcal C_n,\lambda)
=
\lambda \bar{\phi}_n^2 R_n(\mathcal C_n;\xi_n)
$ 
the same relative bound holds for
\(\mathcal B_n^*(\mathcal C_n,\lambda)\). This proves the result.

\subsection{Proof of Theorem \ref{thm:inference1}} \label{proof:lem:inference1}

\begin{lem}\label{thm:CBA_CLT}
    Let $X_1, \ldots, X_n$ be a set of random variables whose dependency graph has degree $m_i$ for variable $X_i$, with the dependency graph as denoted in Theorem 3.5 in \cite{ross2011fundamentals}. 
    Let $m_{\max} = \max_{i}m_i + 1$. Assume that 
    $\max_{i}|X_i|\le B,$ 
    for some constant $B$ that does not depend on $n$, and
    $\left|\frac{1}{n} \sum_{i=1}^{n}\E[X_i]\right|\le b_n, \quad \Var\lb \frac{1}{n} \sum_{i=1}^{n}X_i\rb \le \sigma_n^2,$ 
    for some deterministic sequences $b_n, \sigma_n$ that depend on $n$. Further assume that, as $n\rightarrow \infty$,
    \begin{equation}
      \label{eq:rate_mmax}
      \frac{1}{\sigma_n^3} \frac{m_{\max}^2}{n^2} + \frac{1}{\sigma_n^2} \frac{m_{\max}^{3/2}}{n^{3/2}} = o(1).
    \end{equation}
    Then
    $\liminf_{n\rightarrow \infty}\P\lb \left|\frac{1}{n} \sum_{i=1}^{n}X_i\right|\le \chi(b_n, \sigma_n)\rb\ge 1- \alpha.$ 
  \end{lem}

\begin{proof} 
Let
    $S_n = \frac{1}{n} \sum_{i=1}^{n}X_i, \quad \nu_n^2 = \Var(S_n), \quad \mu_n = \mathbb{E}[S_n].$ 

First, note that if $\nu_n^2 = 0$, the result trivially holds, so that we can consider $\nu_n^2 > 0$. From Theorem 3.5 in \cite{ross2011fundamentals}, 
  $d(\frac{S_n - \mu_n}{\nu_n}, N(0, 1)) = O\lb \frac{m_{\max}^2}{\nu_n^3 n^2}  + \frac{ m_{\max}^{3/2}}{\nu_n^2 n^{3/2}}\rb  ,$ 
  where $d(\cdot)$ is the Wasserstein distance. 
  
  Thus, by standard bounds on the Kolmogorov distance using the Wasserstein distance \citep[e.g.][]{gaunt2023bounding},  we can write 
  $$ 
  \small 
  \begin{aligned}
    \P\lb |S_n| > \chi(|\mu_n|, \nu_n)\rb &\le \P\lb |N(\mu_n, \nu_n^2)| > \chi(|\mu_n|, \nu_n)\rb + O\lb \Big(\frac{m_{\max}^2}{\nu_n^3 n^2}  + \frac{ m_{\max}^{3/2}}{\nu_n^2 n^{3/2}}\Big)^{1/2} \rb \\ &= \alpha + O\lb \Big(\frac{m_{\max}^2}{\nu_n^3 n^2}  + \frac{ m_{\max}^{3/2}}{\nu_n^2 n^{3/2} }\Big)^{1/2}\rb.
  \end{aligned}
  $$ 
  By Lemma \ref{lem:monotonicity},
  \begin{equation}
  \small 
  \begin{aligned} 
    \label{eq:large_nu}
    \P\lb |S_n| > \chi(b_n, \sigma_n)\rb \le  \alpha + O\lb \Big(\frac{m_{\max}^2}{\nu_n^3 n^2 }  + \frac{ m_{\max}^{3/2}}{\nu_n^2 n^{3/2} }\Big)^{1/2} \rb.
    \end{aligned} 
  \end{equation}
  On the other hand, by triangle inequality  
  $$ 
  \small 
  \begin{aligned} 
  \P\lb |S_n| > \chi(b_n, \sigma_n)\rb\le \P\lb |S_n - \mu_n| > \chi(b_n, \sigma_n) - |\mu_n|\rb\le \P\lb |S_n - \mu_n| > \chi(b_n, \sigma_n) - b_n\rb.
  \end{aligned} 
  $$ 
  By Lemma \ref{lem:upper_lower_bound},
  $\chi(b_n, \sigma_n) - b_n\ge q_{1-\alpha}\sigma_n, $ 
  where $q_{1- \alpha}$ is the $1 - \alpha$ quantile of a standard Normal random variable. 
  By Cherbyshev's inequality,
  \begin{equation}\label{eq:small_nu}
  \small 
  \begin{aligned} 
    \P\lb |S_n| > \chi(b_n, \sigma_n)\rb\le \P\lb |S_n - \mu_n| > q_{1-\alpha}\sigma_n\rb\le \frac{\nu_n^2}{q_{1-\alpha}^2 \sigma_n^2} = O\lb \frac{\nu_n^2}{\sigma_n^2}\rb.
    \end{aligned} 
    \end{equation}
  Fix any deterministic sequence $a_n$. If $\nu_n < a_n \sigma_n$, \eqref{eq:small_nu} implies 
  $\P\lb |S_n| > \chi(b_n, \sigma_n)\rb = O(a_n^2).$ 
  If $\nu_n \ge a_n \sigma_n$, \eqref{eq:large_nu} implies
  $\P\lb |S_n| > \chi(b_n, \sigma_n)\rb \le \alpha +  O\lb\frac{1}{a_n^{3/2}} \Big( \frac{m_{\max}^2}{\sigma_n^3 n^2 }  + \frac{m_{\max}^{3/2}}{\sigma_n^2 n^{3/2} }\Big)^{1/2} \rb.$ 
  Define $c_n = \frac{m_{\max}^2}{\sigma_n^3 n^2}  + \frac{ m_{\max}^{3/2}}{\sigma_n^2 n^{3/2}}$. Equation \eqref{eq:rate_mmax} implies $c_n = o(1)$. Setting $a_n = c_n^{1/4}$ yields
  $\P\lb |S_n| > \chi(b_n, \sigma_n)\rb \le \alpha +  O\lb c_n^{1/2} + c_n^{1/8}\rb = \alpha + o(1).$
\end{proof} 
Define 
  \[B_i = \{v: \text{either }c(v) = c(i) \text{ or } c(v) = c(v')\text{ for some }v'\in \mathcal{N}_i\}, \quad G_i = \{g: \mathcal{N}_g \cap B_i \neq \emptyset\}.\]
Lemma \ref{lem:2} implies
  \begin{equation}\label{eq:m-dependence}
  \small 
  \begin{aligned}
   \mu_j(\mathbf{D}) \indep \mu_i(\mathbf{D}), \quad \forall j \not\in B_i \cup G_i.
   \end{aligned} 
  \end{equation}
  It is easy to see that
  $j\in B_i \cup G_i \Longleftrightarrow i \in B_j \cup G_j.$ 
  Thus, \eqref{eq:m-dependence} defines a undirected dependency graph among $(\mu_1(\mathbf{D}), \ldots, \mu_n(\mathbf{D}))$. Define
  $i\sim j \text{ iff }j\in B_i \cup G_i.$
  Under our neighborhood interference assumption, we have shown in the proof of Lemma \ref{lem:2} that each summand in $\hat{\tau}$ forms a local dependency graph with degree $m_i$ as defined in Lemma \ref{thm:CBA_CLT} such that 
  \begin{align}
    m_i &\le \sum_{\ell: \ell \in B_i}(|\mathcal{N}_\ell| + 1) \le |B_i|\cdot (\mathcal{N}_{n, \max} + 1)\nonumber\\
    & \le \max_{k}n_k (|\mathcal{N}_i| + 1)(\mathcal{N}_{n, \max} + 1) \le 4\bar{\gamma} \mathcal{N}_{n, \max}^2 \frac{n}{K_n}.    \label{eq:mi}
  \end{align}
  In addition, by assumption $\bar{\sigma}_n^2 \ge \frac{\eta + o(1)}{K_n}$ for a constant $\eta > 0$. We now invoke Lemma \ref{thm:CBA_CLT}. 
Note that we can write 
$ 
\hat{\tau}_n - \tau_{n,\mu} = \frac{1}{n} \sum_{i=1}^n X_i
$ 
where $X_i:= 2(2D_i - 1)\mu_i(\mathbf{D}) - (\mu_i(\mathbf{1}) - \mu_i(\mathbf{0}))$ are uniformly bounded random variables forming a dependency graph with maximum degree $m_{\max} = \mathcal{O}(\mathcal{N}_{n, \max}^2 \frac{n}{K_n}$). 
To check that the conditions in Lemma \ref{thm:CBA_CLT} we only need to show that 
$
 \frac{1}{\bar{\sigma}_n^3 n^2 } m_{\max}^2 + \frac{1}{ \bar{\sigma}_n^2 n^{3/2}} m_{\max}^{3/2}  = \mathcal{O} \left( \frac{\mathcal{N}_{n,\max}^4}{K_n^{1/2}} + \frac{\mathcal{N}_{n,\max}^3}{K_n^{1/2}} \right)
$
is $o(1)$. This follows by assumption, completing the proof.

\subsection{Proof of Theorem \ref{thm:worst_case_inference}} \label{proof:thm:worst_case_inference} 

Before proving the theorem we need to establish some properties. 

\subsubsection{Useful properties of $\chi(b, \nu)$}
  Let $\Phi(\cdot)$ denote the CDF of $N(0, 1)$. By definition,
  \begin{equation}
  \small 
  \begin{aligned} 
    \label{eq:chi_def}
    \Phi\lb\frac{\chi(b, \nu) - b}{\nu}\rb - \Phi\lb\frac{-\chi(b, \nu) - b}{\nu}\rb = 1 - \alpha.
    \end{aligned}
  \end{equation}
  Throughout the section we assume $b > 0$, which is true in \eqref{eq:chi}. 
  
  \begin{lem}\label{lem:upper_lower_bound}
    Assume $\alpha \le 0.5$. Let $z_{\beta}$ denote the $\beta$-th quantile of $N(0, 1)$. Then
    $\chi(b, \nu) \in [b + q_{1-\alpha}\nu, b + q_{1-\alpha/2}\nu].$ 
  \end{lem}
  \begin{proof}
    Since $\Phi(0) = 0.5 \ge \alpha$, 
    $\chi(b, \nu)\ge b > 0.$ 
    Thus,
    $\Phi\lb\frac{\chi(b, \nu) - b}{\nu}\rb - \Phi\lb\frac{-\chi(b, \nu) - b}{\nu}\rb\le 1 - \alpha\le \Phi\lb\frac{\chi(b, \nu) - b}{\nu}\rb.$ 
    Using the symmetry of $N(0, 1)$, we have
    $1 - \alpha\le \Phi\lb\frac{\chi(b, \nu) - b}{\nu}\rb\le 1 - \alpha / 2.$ 
    The proof is then completed. 
  \end{proof}

  \begin{lem}\label{lem:monotonicity}
    $\chi(b, \nu)$ is increasing in $b$ and $\nu$. 
  \end{lem}
  \begin{proof}
    Let $\chi_b(b, \nu)$ and $\chi_\nu(b, \nu)$ denote the partial derivatives of $\chi$ with respect to $b$ and $\nu$, respectively. We suppress the dependence on $(b, \nu)$ in $\chi, \chi_b, \chi_\nu$ for notational convenience. Let $\phi(x) =  \Phi'(x) = 1/\sqrt{2\pi} \exp\{-x^2 / 2\}$. Differentiating \eqref{eq:chi_def} with respect to $b$, we obtain that
    \[0 = (\chi_b - 1)\phi\lb \frac{\chi - b}{\nu}\rb + (\chi_b + 1)\phi\lb -\frac{\chi + b}{\nu}\rb = (\chi_b - 1)\phi\lb \frac{\chi - b}{\nu}\rb + (\chi_b + 1)\phi\lb \frac{\chi + b}{\nu}\rb,\]
    where the last line uses the symmetry of $\phi$. As a result,
    \begin{equation}
      \label{eq:chi_b}
      \chi_b = \frac{\phi\lb \frac{\chi - b}{\nu}\rb - \phi\lb \frac{\chi + b}{\nu}\rb}{\phi\lb \frac{\chi - b}{\nu}\rb + \phi\lb \frac{\chi + b}{\nu}\rb}.
    \end{equation}
    Since $\chi \ge b > 0$ and $\phi$ is decreasing on $[0, \infty)$, 
    $\phi\lb \frac{\chi - b}{\nu}\rb > \phi\lb \frac{\chi + b}{\nu}\rb.$ 
    Thus, $\chi_b > 0$, implying that $\chi$ is increasing in $b$.

    Differentiating \eqref{eq:chi_def} with respect to $\nu$, we obtain that
    $$  
    \small 
    \begin{aligned}
      0 &= \frac{\nu \chi_\nu - \chi + b}{\nu^2}\phi\lb \frac{\chi - b}{\nu}\rb + \frac{\nu \chi_\nu - \chi - b}{\nu^2}\phi\lb -\frac{\chi + b}{\nu}\rb\\
      & = \frac{\nu \chi_\nu - \chi + b}{\nu^2}\phi\lb \frac{\chi - b}{\nu}\rb + \frac{\nu \chi_\nu - \chi - b}{\nu^2}\phi\lb \frac{\chi + b}{\nu}\rb,
    \end{aligned}
    $$ 
    where, again, the last line uses the symmetry of $\phi$. This implies 
    \begin{equation}
      \label{eq:chi_sigma}
      \chi_\nu = \frac{\frac{\chi - b}{\nu}\phi\lb \frac{\chi - b}{\nu}\rb + \frac{\chi + b}{\nu}\phi\lb \frac{\chi + b}{\nu}\rb}{\phi\lb \frac{\chi - b}{\nu}\rb + \phi\lb \frac{\chi + b}{\nu}\rb} > 0.
    \end{equation}
    Thus, $\chi$ is increasing in $\nu$.
  \end{proof}

  \begin{lem}\label{lem:lipschitz}
    $\chi(b, \nu)$ is Lipschitz in $b$ and $\nu$. The Lipschitz constant is bounded by $L_\alpha$ for some constant $L_\alpha$ that only depends on $\alpha$.
  \end{lem}
  \begin{proof}
    By \eqref{eq:chi_b}, it is clear that
    $\chi_b(b, \nu)\le 1, \quad \forall b, \nu \ge 0.$ 
    Let $a = (\chi(b, \nu) - b)/\nu, z = 2b / \nu$. Then \eqref{eq:chi_sigma} can be rewritten as
    \[\chi_\nu(b, \nu) = \frac{a\phi(a) + (a + z)\phi(a + z)}{\phi(a) + \phi(a + z)}= a + \frac{z}{1 + \phi(a) / \phi(a+ z)} = a + \frac{z}{1 + \exp\{az + z^2/2\}}.\]
    Using the inequality that $\exp\{y\}\ge 1 + y$, we have
    \[\chi_\nu(b, \nu)\le a + \frac{z}{2 + z^2/2 + az}\le a + \frac{1}{a + 2}\le a + 0.5.\]
    By Lemma \ref{lem:upper_lower_bound}, we conclude that
    $\chi_\nu(b, \nu)\le q_{1-\alpha/2} + 0.5, \quad \forall b, \nu \ge 0.$ 
    By Lemma \ref{lem:monotonicity}, $\chi_b(b, \nu), \chi_\nu(b, \nu) \ge 0$. Thus, the Lipschitz constant is bounded by 
    $\sqrt{\chi_b(b, \nu)^2 + \chi_\nu^2}\le \sqrt{1 + (q_{1-\alpha/2} + 0.5)^2}.$ 
  \end{proof}

  By Lemma \ref{lem:monotonicity},
\begin{equation}\label{eq:length1}
\small 
\begin{aligned} 
  |I_{\ea}(\beta(\mu), \nu(\mu))|\le |I_{\ea}^*|.
  \end{aligned} 
  \end{equation}
In the proof of worst-case bias and variance, we have shown there exists $\mu^*$ such that
\[|\beta(\mu^*)| = \bar{\phi}_n b(\mathcal{C}_n), \quad \sigma^2(\mu^*) = \sigma_{n,*}^2(1 + \mathcal{O}(\varepsilon)).\]
By Lemma \ref{lem:lipschitz}, from a first order Taylor approximation  so that $\sigma_{n,*} \sqrt{(1 + \mathcal{O}(\varepsilon))} = \sigma_{n,*}  + \mathcal{O}(\varepsilon)$
\begin{equation}\label{eq:length2}
\small 
\begin{aligned}
  |I(\mu^{*})|\ge |I_{\ea}^*|\cdot (1 + \mathcal{O}(\varepsilon)).
  \end{aligned} 
\end{equation}
Combining \eqref{eq:length1} and \eqref{eq:length2}, we obtain the following result.

  \subsection{Proof of Theorem \ref{thm:adaptation_regret}}
\label{proof:thm:adaptation_regret}

We  follow a similar proof strategy as in \cite{epanomeritakis2025learning}. 
Fix a feasible clustering \(\mathcal C_n\). 
Because \(R_n(\widetilde{\mathcal C}_n;\xi)\) is affine in \(\xi\) for each
\(\widetilde{\mathcal C}_n\), the function \(R_n^*(\xi)\), as the pointwise
infimum of affine functions, is concave. Moreover, since
\(\underline\xi>0\), \(R_n^*(\xi)>0\) on \([\underline\xi,\bar\xi]\). It follows that 
$
\frac{R_n(\mathcal C_n;\xi)}{R_n^*(\xi)}.
$
is quasi-convex in $\xi$ since it is the ratio of an affine function in $\xi$ and a concave function in $\xi$. By quasi-convexity, its largest value is achieved at one of the extreme point so that 
$ 
\sup_{\xi\in[\underline\xi,\bar\xi]}
\frac{R_n(\mathcal C_n;\xi)}{R_n^*(\xi)}
=
\max\left\{
\frac{R_n(\mathcal C_n;\underline\xi)}{R_n^*(\underline\xi)},
\frac{R_n(\mathcal C_n;\bar\xi)}{R_n^*(\bar\xi)}
\right\}.
$

\section{Additional extensions} \label{app:more_extensions}

In this section we present additional extension and formal results. The proof of claims in the online supplement can be found in the additional online supplement \ref{sec:proof_appendix}.

\subsection{Random outcomes} \label{sec:random_outcomes}

While the finite population framework considered in Assumption \ref{ass:first_order} can be viewed as conditioning on potential outcomes, Assumption \ref{ass:exposure_restriction} may not be plausible for realizations of potential outcomes.  In these settings, we can re-interpret our approach by imposing Assumption \ref{ass:exposure_restriction} on the expected potential outcomes, which can take any values in $[0, 1]$. For simplicity, we consider a local asympotic framework with $\varepsilon = o(1)$ in Assumption \ref{defn:local} (i.e., spillovers are local to zero), but clearly one could extend our reasoning to settings with small but non-zero $\varepsilon$. 

To accommodate random outcomes, we can modify Assumption \ref{ass:first_order} by redefining
$\mu_i(\mathbf{d})$.
\begin{ass}[Weaker version of Assumption \ref{ass:first_order}] \label{ass:first_order_random} For $i \in \{1, \cdots,n\}$, 
$
\E[Y_i(\mathbf{d})] = \mu_i(\mathbf{d}_i, \mathbf{d}_{\mathcal{N}_i}), \quad \forall \mathbf{d}\in \{0, 1\}^n.
$
for some functions $\mu_i(1, \cdot) \in \mathcal{M}_{1,i}, \mu_i(0, \cdot) \in 
\mathcal{M}_{0, i}$ for some set of functions $\mathcal{M}_{1, i}, \mathcal{M}_{0, i}$.
\end{ass} 
Similarly, we can redefine the global average treatment effect as 
$\tau_{n, \mu} = \frac{1}{n}\sum_{i=1}^{n}(\mu_i(\mathbf{1}) - \mu_i(\mathbf{0})).$ 
In this setting, the expression of the bias in Lemma \ref{lem:worst_case_bias} does not change since the potential outcomes $Y_i(\mathbf{d})$ are independent of the treatment assignments. However, the variance and MSE will be inflated due to the randomness. To make the problem tractable, we impose the following two assumptions on the variance-covariance structure of potential outcomes.
\begin{ass}[Second moment of potential outcomes] \label{ass:second_moment} For some non-decreasing and Lipschitz 
 function $g: \R\rightarrow \R^{+}$, for any $i\in \{1, \ldots, n\}$ and $\mathbf{d}\in \{0, 1\}^n$,
$\E[Y_i(\mathbf{d})^2] = g(\E[Y_i(\mathbf{d})])$. 
\end{ass}

\begin{ass}[Weak correlation among potential outcomes] \label{ass:weak_correlation} Let $\mathbf{C}(\mathbf{d})$ denote the correlation matrix of $(Y_1(\mathbf{d}), \ldots, Y_n(\mathbf{d}))$. Then, by letting $||\cdot||_{\mathrm{op}}$ the operator norm,
$\sup_{\mathbf{d}\in \{0, 1\}^n}\|\mathbf{C}(\mathbf{d}) - \mathbf{I}\|_{\mathrm{op}} = o(1).$ 
\end{ass}

Assumption \ref{ass:second_moment} holds for a variety of outcome distributions, including the Bernoulli distribution ($g(\mu) = \mu$), Poisson distribution ($g(\mu) = \mu + \mu^2$), Exponential distribution ($g(\mu) = 2\mu^2$), $\chi^2$ distribution ($g(\mu) = 2\mu + \mu^2$), and Gaussian distribution ($g(\mu) = \mu^2 + \sigma^2$). Assumption \ref{ass:weak_correlation} holds when $\{Y_i(\mathbf{d}): \mathbf{d}\in \{0, 1\}^n\}$ are independent or weak dependent across $i$ and we consider a local asymptotic framework in Assumption \ref{defn:local} with $\varepsilon = o(1)$ (i.e., effects are local to zero but possibly non-negligible for inference). In particular, under the endogenous peer effects model \citep{manski1993identification, bramoulle2009identification}, this assumption is satisfied when the peer effect is small; see Appendix \ref{example:endogenous}.

Under these assumptions, we prove that the worst-case variance and MSE are both inflated by a constant that does not depend on the clustering. 

\begin{thm}\label{thm:random}
Let Assumptions \ref{ass:exposure_restriction}, \ref{defn:local}, \ref{ass:clusters}, \ref{ass:first_order_random}, \ref{ass:second_moment}, \ref{ass:weak_correlation} hold, with $\mu_i(\cdot)$ as defined in Assumption \ref{ass:first_order_random} and $\varepsilon = 0$ in Assumption \ref{defn:local}. Further assume $|\psi_R|\ge |\psi_L|$. Then the worst-case bias is 
$$
\small 
\begin{aligned} 
\sup_{\mu \in \mathcal{M} } \Big|\tau_{n, \mu} - \mathbb{E}_\mu[\hat{\tau}_n(\mathcal{C}_{ n})]\Big| = \bar{\phi}_n b_n(\mathcal{C}_n),
\end{aligned} 
$$
the worst-case variance is 
$$
\small 
\begin{aligned} 
&  \sup_{\mu \in \mathcal{M}} \mathbb{E}_\mu\Big[\Big(\hat{\tau}_n(\mathcal{C}_{ n}) - \mathbb{E}[\hat{\tau}_n(\mathcal{C}_{ n})]\Big)^2\Big]  = \left(\frac{\bar{\psi}}{n^2} \sum_{k=1}^{K_n} n_k^2 + \frac{4g(\psi_R) - \bar{\psi}}{n}\right)(1 + o(1)),
\end{aligned} 
$$
and the worst-case MSE is 
$$
\small 
\begin{aligned} 
\sup_{\mu\in \mathcal{M}} \E_{\mu}[(\hat{\tau}_n(\mathcal{C}_{ n}) - \tau_{n, \mu})^2] = \left(\mathcal{B}_n^*(\mathcal{C}_{ n}, 1) + \frac{4g(\psi_R) - \bar{\psi}}{n}\right)(1 + o(1)).
\end{aligned}
$$
\end{thm}
\begin{proof}[Proof of Theorem \ref{thm:random}]
See Appendix \ref{app:proof:thm:random}.
\end{proof}

\subsection{Relaxing the ex-ante symmetry}
\label{sec:heterogeneity}

Assumption \ref{ass:exposure_restriction} imposes ex-ante symmetry in the sense that the range of potential outcomes and the size of spillover effect are identical across individuals. This excludes regression adjustment under which the range of potential outcomes differs across individuals (see Remark \ref{rem:reg_adj}) or heterogeneous spillover effects. To account for the heterogeneity, we consider the following relaxation of Assumption \ref{ass:exposure_restriction}.
\begin{ass} \label{ass:exposure_restriction_heterogeneous} The potential outcomes $\mu = (\mu_1, \cdots, \mu_n) \in \mathcal{M}  = \otimes_{i=1}^n \mathcal{M}_i$ where $\mathcal{M}_i$ is the set such that $\mu_i(1, \cdot) \in \mathcal{M}_{1, i}, \mu_i(0, \cdot) \in \mathcal{M}_{0, i}$. For all $i \in \{1, \cdots, n\}$, $d \in \{0,1\}$, the function classes $\mathcal{M}_{d,i}$ is the class of functions that include all functions $\mu_i(d,\cdot)$ satisfying the following two conditions 
\begin{itemize} 
\item[(i)] $\mu_i(d, \cdot)\in [\psi_{Li}, \psi_{Ri}]$ for some $-\infty < \psi_{Li} < \psi_{Ri} < \infty$;
\item[(ii)] for all $\mathbf{d} \in \{0,1\}^{|\mathcal{N}_i|}$, for some (unknown) $\alpha_i \in [\underline{\alpha},1], \underline{\alpha} > 0$, with $\max_i \alpha_i = 1$, 
\begin{equation} \label{eqn:interference} 
\small 
\begin{aligned} 
\Big|\mu_i(0, \mathbf{d}) - \mu_i(0, \mathbf{0})\Big| & \le \bar{\phi}_n \frac{\alpha_i}{|\mathcal{N}_i|} \sum_{k \in \mathcal{N}_i} \mathbf{d}_k, \qquad 
 \Big|\mu_i(1, \mathbf{d}) - \mu_i(1, \mathbf{1})\Big| & \le  \bar{\phi}_n \frac{\alpha_i}{|\mathcal{N}_i|} \sum_{k \in \mathcal{N}_i} \Big(1 - \mathbf{d}_k\Big), 
 \end{aligned} 
\end{equation} 
for some $\bar{\phi}_n \le \min_{i}\{\psi_{Ri} - \psi_{Li}\}$. 
\end{itemize} 
\end{ass} 

We can still derive the worst-case bias, variance, and MSE assuming the above under mild assumptions on $(\psi_{Li}, \psi_{Ri})$. 
\begin{thm}\label{thm:worst_case_heterogeneous}
Let Assumptions \ref{ass:first_order}, \ref{defn:local}, \ref{ass:clusters}, \ref{ass:exposure_restriction_heterogeneous} hold. Further assume that $|\psi_{Ri}|\ge |\psi_{Li}|$ for all $i\in \{1, \ldots, n\}$, or $|\psi_{Li}|\ge |\psi_{Ri}|$ for all $i\in \{1, \ldots, n\}$. Let $\bar{\psi}_i = 2 \max\{|\psi_{Ri}|, |\psi_{Li}|\} $ and assume that $\underline{\psi}\le (1/n)\sum_{i=1}^{n}\bar{\psi}_i\le \bar{\psi}$.
for some constants $0 < \underline{\psi} < \bar{\psi} < \infty$ that does not vary with $n$. 
Then the worst-case bias is 
$$
\small 
\begin{aligned} 
\sup_{\mu \in \mathcal{M} } \Big|\tau_{n, \mu} - \mathbb{E}_\mu[\hat{\tau}_n(\mathcal{C}_{ n})]\Big| = \bar{\phi}_n b_{n, \alpha}(\mathcal{C}_n),
\qquad b_{n, \alpha}(\mathcal{C}_n) = \frac{1}{n} \sum_{i=1}^n \frac{\alpha_i}{|\mathcal{N}_i|} \Big| \mathcal{N}_i \bigcap \Big\{j: c(j) \neq c(i) \Big\}\Big|.
\end{aligned} 
$$ 
The worst-case variance is 
$$
\small 
\begin{aligned} 
&  \sup_{\mu \in \mathcal{M}} \mathbb{E}_\mu\Big[\Big(\hat{\tau}_n(\mathcal{C}_{ n}) - \mathbb{E}[\hat{\tau}_n(\mathcal{C}_{ n})]\Big)^2\Big]  = \left\{\sum_{k=1}^{K_n}\frac{n_k^2}{n^2} \lb\frac{1}{n_k}\sum_{i\in c_k}\bar{\psi}_i\rb^2\right\}(1 + \mathcal{O}(\varepsilon)),
\end{aligned}
$$
and the worst-case MSE is 
$$
\small 
\begin{aligned} 
\sup_{\mu\in \mathcal{M}} \E_{\mu}[(\hat{\tau}_n(\mathcal{C}_{ n}) - \tau_{n, \mu})^2] = \left\{\bar{\phi}_n^2 b_{n, \alpha}(\mathcal{C}_n)^2 + \sum_{k=1}^{K_n}\frac{n_k^2}{n^2} \lb\frac{1}{n_k}\sum_{i\in c_k}\bar{\psi}_i\rb^2\right\}(1 + \mathcal{O}(\varepsilon)).
\end{aligned} 
$$
\end{thm}

\begin{proof}[Proof of Theorem \ref{thm:worst_case_heterogeneous}] See Appendix \ref{proof:heterogeneity}. 
\end{proof} 

It is easy to show that similarly to what discussed in the main text, the optimization program admits a semi-definite relaxation that depends on the values of $\alpha$ and $\bar{\psi}$. 

\begin{rem}[Non-binary networks] Following a similar argument, it is also possible to consider non-binary networks where condition (ii) in Assumption \ref{ass:exposure_restriction_heterogeneous} by introducing weights $w_{i,k}$. This is omitted for brevity. \qed  
\end{rem} 

\subsection{Trimming estimators} \label{sec:other_estimators}

To illustrate how our framework generalizes to alternative estimators, in this section we study properties of 
\begin{equation} \label{eqn:estimator2}
\small 
\begin{aligned} 
\hat{\tau}_n^{w}(\mathcal{C}_{ n}) = \frac{2}{\sum_{i=1}^n w_i} \sum_{i=1}^n w_i \Big[D_i Y_i - (1 - D_i) Y_i \Big], \,\, w_i\ge 0
\end{aligned} 
\end{equation}  
for arbitrary non-negative weights $w$ as deterministic functions of the matrix $\mathbf{A}$ and on the clustering $\mathcal{C}_n$ (but not of the treatments or outcomes).\footnote{The restriction that the weights do not depend on the treatments is imposed for simplicity since otherwise, correlations may also depend on the correlations between the weights. The focus on difference in means estimators instead of inverse probability weights estimators is motivated by their robustness to poor overlap.}  For example, these weights may upweight units with many neighbors in the same cluster and down-weights units with few neighbors in the same cluster in the spirit of trimming estimators \citep{leung2023design}. We show how our framework extends to estimators different from the difference in means estimators, provide a characterization of their bias and variance as in previous sections, and an algorithmic procedure to optimize over the weights $w$ and the clustering $\mathcal{C}_n$.


\begin{thm}\label{thm:weighted_estimator}
Let Assumptions \ref{ass:first_order}, \ref{ass:exposure_restriction}, \ref{defn:local}, \ref{ass:clusters} hold. Consider an estimator $\hat{\tau}_n^{w}$, 
as in Equation \eqref{eqn:estimator2}.  Then as $n \rightarrow \infty$, 
$$
\small 
\begin{aligned} 
\sup_{\mu \in \mathcal{M} } \Big|\tau_{n, \mu} - \mathbb{E}_\mu[\hat{\tau}_n^{w}(\mathcal{C}_{ n})]\Big| = \left(\bar{\phi}_n b_n^{w}(\mathcal{C}_n) +(\psi_R - \psi_L)\sum_{i=1}^{n}\Big| \frac{w_i}{\|w\|_1} - \frac{1}{n}\Big|\right)\cdot (1 + \mathcal{O}(\varepsilon))
\end{aligned} 
$$ 
where 
$$
\small 
\begin{aligned} 
b_n^{w}(\mathcal{C}_{ n}) = \sum_{i=1}^n \frac{w_i }{||w||_1}\frac{1}{|\mathcal{N}_i|} \Big| \mathcal{N}_i \bigcap \Big\{j: c(i) \neq c(j)\Big\}\Big|.
\end{aligned} 
$$ 
If, further, $w_i\le \bar{w}$ and $\|w\|_1\ge \underline{w}n$ for some constants $0 < \underline{w} < \bar{w} < \infty$ that do not vary with $n$. Then, as $K_n \rightarrow \infty$, 
\[\sup_{\mu \in \mathcal{M}} \mathbb{E}_\mu\Big[\Big(\hat{\tau}_n^{w}(\mathcal{C}_{ n}) - \mathbb{E}_\mu[\hat{\tau}_n^{w}(\mathcal{C}_{ n})]\Big)^2\Big] = \frac{1}{\|w\|_1^2}\sum_{k=1}^{K_n}\left(\sum_{i: c(i)  = k}w_i\right)^2\cdot \bar{\psi} (1 + \mathcal{O}(\varepsilon)).\]
\end{thm}


\begin{proof}[Proof of Theorem \ref{thm:weighted_estimator}]   See Appendix \ref{app:proof:thm:weighted_estimator}.
\end{proof} 

Theorem \ref{thm:weighted_estimator} characterizes the worst-case bias and variance of the trimming estimator. Following the discussion for the difference in means estimator, this result suggests a weighted objective function 
\begin{equation}\label{eq:objective_trimming}
\small 
\begin{aligned} 
\mathcal{B}_{n}^{*}(\mathcal{C}_n, w, \lambda) = \bar{\psi}\frac{1}{\|w\|_1^2}\sum_{k=1}^{K_n}\left(\sum_{i: c(i)  = k}w_i\right)^2 + \lambda \left(\bar{\phi}_n b_n^{w}(\mathcal{C}_n) +(\psi_R - \psi_L)\sum_{i=1}^{n}\Big| \frac{w_i}{\|w\|_1} - \frac{1}{n}\Big|\right)^2.
\end{aligned} 
\end{equation}
For the difference in means estimator, the objective function $\mathcal{B}_n^{*}(\mathcal{C}_n, \lambda)$ defined in Theorem \ref{thm:Bn_Bnstar} can be interpreted as the worst-case MSE, up to $1+\mathcal{O}(\varepsilon)$ terms, as implied by Theorem \ref{thm:Bn_Bnstar}. Unfortunately, this exact equivalence no longer holds for general trimming estimators because the worst-case bias and variance are achieved by different configurations of potential outcomes. In fact, we can show that the worst-case MSE is given by the maximum of a quadractic function in the vectors $(\mu_1(\mathbf{1}), \ldots, \mu_n(\mathbf{1}))$ and $(\mu_1(\mathbf{0}), \ldots, \mu_n(\mathbf{0}))$ over the constraint defined by Assumption \ref{ass:exposure_restriction} (i); see Remark \ref{rem:exact_worst_case_mse_trimming} in Appendix \ref{app:proof:thm:worst_case_mse_trimming} for details. Nevertheless, we can show that the equivalence holds up to a constant. 

\begin{thm}\label{thm:worst_case_mse_trimming}
Under the conditions in Theorem \ref{thm:weighted_estimator}, with $w_i\le \bar{w}$ and $\|w\|_1\ge \underline{w}n$ for some constants $0 < \underline{w} < \bar{w} < \infty$ that do not vary with $n$, 
\[1/4 \le \liminf_{K_n\rightarrow \infty}\frac{\sup_{\mu\in \mathcal{M}}\E_{\mu}[(\hat{\tau}_n^{w}(\mathcal{C}_{ n}) - \tau_{n, \mu})^2]}{\mathcal{B}_n^*(\mathcal{C}_n, w, 1)(1 + \mathcal{O}(\varepsilon))}\le \limsup_{K_n\rightarrow \infty}\frac{\sup_{\mu\in \mathcal{M}}\E_{\mu}[(\hat{\tau}_n^{w}(\mathcal{C}_{ n}) - \tau_{n, \mu})^2]}{\mathcal{B}_n^*(\mathcal{C}_n, w, 1)(1 + \mathcal{O}(\varepsilon))}\le 1.\]
\end{thm}

\begin{proof}[Proof of Theorem \ref{thm:worst_case_mse_trimming}]   See Appendix \ref{app:proof:thm:worst_case_mse_trimming}.
\end{proof} 

Theorem \ref{thm:worst_case_mse_trimming} justifies that the objective function $\mathcal{B}_n^{*}(\mathcal{C}_n, w, 1)$ is a reasonable proxy for the worst-case MSE -- its optimal solution yields a clustering whose worst-case MSE is no more than four times the optimal worst-case MSE. 

As with Section \ref{sec:opt}, to obtain the frontier that trades off the worst-case bias and variance, we can remove the square in the second term and optimize the following proxy objective for a range of $\xi$:
\begin{equation}\label{eq:proxy_trimming}
\small 
\begin{aligned} 
\frac{\xi}{\|w\|_1^2}\sum_{k=1}^{K_n}\left(\sum_{i: c(i)  = k}w_i\right)^2 + \sum_{i=1}^n \frac{w_i }{||w||_1}\frac{1}{|\mathcal{N}_i|} \Big| \mathcal{N}_i \bigcap \Big\{j: c(i) \neq c(j)\Big\}\Big|,
\end{aligned} 
\end{equation}
where the term $(\psi_R - \psi_L)\sum_{i=1}^{n}\Big| \frac{w_i}{\|w\|_1} - \frac{1}{n}\Big|$ is not kept because it does not depend on the clustering. The program admits a semi-definite relaxation of the form
\begin{equation}\label{eq:SDP_weights}
\max_{\mathbf{X}(K)} \tr(\mathbf{L}_{\xi}^w \mathbf{X}(K)), \quad \text{s.t.}\quad \diag(\mathbf{X}(K)) = \one_{n}, \,\, \mathbf{X}(K)\succeq 0, \quad \mathbf{L}_{\xi}^w = ||w||_1 \mathbf{L}^w - \xi w w^\top, 
\end{equation}
where $\mathbf{L}_{i,j}^w = w_i \mathbf{L}_{i,j}$. 


It is possible to optimize sequentially over the weights by first, for given weights $w$ optimizing over the clustering, and for given clustering, optimizing over the weights $w$. Optimization over the weights can be solved as a fractional program \citep{bitran1973linear}. The approximately optimal clustering for given weights $w$ can be obtained via semi-definite programming. 

Algorithm \ref{alg:weights} presents details. In summary, Theorem \ref{thm:weighted_estimator} shows how our insights in the characterization of the worst-case bias and variance more broadly apply to other estimators. 

  \begin{algorithm} [!h]   \caption{Causal Clustering with trimming estimators}\label{alg:weights}
  \footnotesize 
    \begin{algorithmic}[1]
    \Require Adjacency matrix $\mathbf{A}$, \underline{K}, $\bar{K}$ smallest and largest number of clusters, $\xi$, number of iterations $T$
    \State Initialize $w_i^0 = 1$ for all $i \in \{1, \cdots, n\}$
   \For{$t \in \{0, \cdots, T\}$}
    \For{$K \in \{\underline{K}, \cdots, \bar{K}\}$} 
     \begin{algsubstates}
     \State Solve Equation \eqref{eq:SDP_weights} and obtain $\hat{\mathbf{X}}(K)$ as the solution of Equation with weights  \eqref{eq:SDP_weights} under a semi-definite relaxation on $\mathbf{X}(K) \succeq 0$, with weights $w^t$
     \State Retrieve the clusters $c_K$ via 
     $K$-means algorithm on the first $K$ eigenvectors of $\hat{\mathbf{X}}(K)$
     \State Compute the objective function corresponding to the chosen clustering in
     \eqref{eq:SDP_weights}
     \State Define $\mathcal{C}_n^t$ the clustering with lowest objective in Equation \eqref{eq:SDP_weights}
        \end{algsubstates}
    \EndFor
     
    \State Define $w^{\star}$ the weights with lowest objective in Eq. \eqref{eq:objective_trimming} via fractional programming
    \State Define $w^{t+1} \leftarrow w^{\star}$
    \EndFor
    
\Return Clustering $\mathcal{C}_n^T$ and weights $w^{T+1}$ 
         \end{algorithmic}
\end{algorithm}

\subsection{Saturation designs} \label{sec:saturations}

In this section, we study settings where individuals in a given cluster are assigned treatments with different probabilities across clusters, i.e., saturation experiments. For the sake of brevity, we focus on settings where the researcher must choose between two treatment probabilities only, whereas our discussion applies to more than two treatment probabilities.  

\begin{ass}[Treatment assignments] \label{ass:arbitrary_treatments} Suppose that for all $i \in \{1,\cdots, n\}$, 
$$
\small 
\begin{aligned} 
D_i  \sim \mathrm{Bern}(p_{c(i)}), \quad P(p_{c(i)} = q_0) = P(p_{c(i)} = q_1) = 1/2  
\end{aligned} 
$$
where  for some $(q_0, q_1)$, $0 < q_0 < 0.5 < q_1 < 1, q_0 + q_1 = 1$, and $p_1, \cdots, p_{K_n}$ are independent.  In addition $D_i \perp D_j | p_{c(i)}, p_{c(j)}$
\end{ass} 

Assumption \ref{ass:arbitrary_treatments} generalizes the cluster design to designs with arbitrary cluster-level treatment probabilities. In particular, a cluster design has $q_0 = 0$ and $q_1 = 1$. We assume that $q_0 + q_1 = 1$ to ensure the marginal treatment probability for each unit remains $1/2$ and hence the same estimator can be used: 
$ 
\hat{\tau}_n(\mathcal{C}_n) = \frac{2}{n} \sum_{i=1}^n \Big[ Y_i D_i - Y_i(1 - D_i) \Big]
$ 
denoting a difference between treated and control units, reweighted by the individual treatment probability, similar to the difference in means estimator studied in previous sections. Noe that here we focus on $\hat{\tau}_n(\mathcal{C}_n)$ to estimate the GATE for simplicity, but in principle one could consider alternative estimands and estimators, see Remark \ref{rem:other_estimands}.

\begin{thm} \label{thm:saturation} Suppose that Assumptions \ref{ass:first_order}, \ref{ass:exposure_restriction}, \ref{defn:local}, \ref{ass:clusters} hold except that treatments are assigned as in Assumption \ref{ass:arbitrary_treatments}.  Then as $K_n \rightarrow \infty$, the worst-case bias is
\begin{equation} \label{eqn:bound3} 
\small 
\begin{aligned} 
\sup_{\mu \in \mathcal{M}}\Big|\mathbb{E}[\hat{\tau}_n^q(\mathcal{C}_{ n})] - \tau_{n,\mu} \Big| = & \bar{\phi}_n \left\{b_n(\mathcal{C}_n) + 4q_1q_0 (1 - b_n(\mathcal{C}_n))\right\},
\end{aligned}
\end{equation}
the worst-case variance is
\begin{equation}\label{eq:saturated_variance}
\small
\begin{aligned}
\sup_{\mu \in \mathcal{M}}\mathbb{E}\Big[\Big(\mathbb{E}[\hat{\tau}_n^q(\mathcal{C}_{ n})] - \hat{\tau}_n^q\Big)^2\Big] = & \Big((q_1 - q_0)^2\sum_{k=1}^{K_n} \frac{n_k^2}{n^2} + \frac{4 q_0 q_1}{n}\Big)  \bar{\psi} + \mathcal{O}(\varepsilon/K_n),
\end{aligned} 
\end{equation} 
and the worst-case MSE is the sum of the right-hand side of \eqref{eqn:bound3} squared and \eqref{eq:saturated_variance}.
\end{thm} 

\begin{proof}[Proof of Theorem \ref{thm:saturation}] See Appendix \ref{app:thm:saturation}. 
\end{proof} 

Theorem \ref{thm:saturation} characterizes the worst-case bias, variance, and MSE for estimating the global treatment effect under a saturation design. The bias depends on an additional bias component with respect to Lemma \ref{lem:worst_case_bias} that arises from possibly assigning individuals in the same cluster to different treatments. 
The bias under a saturation design increases because individuals in the same clusters are not all assigned to the same treatment status. On the other hand, the variance decreases because treatments are not necessarily perfectly correlated in the same cluster. For given treatment probabilities $(q_1, q_0)$ with $q_1 + q_0 = 1$, and given clustering algorithms, researchers can compare such algorithms by studying their worst-case bias, variance, or MSE. In practice, if researchers are mostly concerned about the bias of the estimated overall treatment effects, cluster experiments are recommended. 

In conclusion, Theorem \ref{thm:saturation} shows how our characterization of the bias and variance can be extended to more complex designs. 

 \begin{rem}[Estimating spillover effects] \label{rem:other_estimands} Whereas we focus here on the GATE for their relevance in many applications, such as online experiments \citep{karrer2021network}, saturation experiments can be useful to study other estimands of interest, such as direct and spillover effects \citep{baird2018optimal}. In particular, suppose researchers use an estimator
\begin{equation} \label{eqn:spillovers}
 \small 
 \begin{aligned} 
 \hat{\tau}_n^s(\mathcal{C}_n) = \frac{2}{n} \sum_{i=1}^n \Big[ \frac{Y_i D_i 1\{p_c(i) = q_1\}}{q_1} - \frac{Y_i D_i 1\{p_c(i) = q_0\}}{q_0} \Big] 
 \end{aligned}
 \end{equation}
 denoting the contrast between potential outcomes under treatment when switching the treatment status of their friends from one to zero. The estimator compares average outcomes between treated units in clusters with larger and lower treatment probability. It is easy to show that the tools developed in this paper allow us to characterize the bias and variance of the estimator in Equation \eqref{eqn:spillovers} relative to the estimand measuring exposure contrast for different saturation probabilities. In particular, results similar to those in Theorem \ref{thm:saturation} for the overall treatment effect can also apply to other estimands (spillover effects in Equation \ref{eqn:spillovers}, direct effects, and other forms of spillover effects) in the presence of saturation experiments discussed in the current subsection. \qed 
 \end{rem}

\subsection{Endogenous peer effects} \label{example:endogenous} 

Our results can extend to higher order spillover effects as long as we can write  
$
Y_i(\mathbf{d}) = \mu_i(\mathbf{d}_i, \mathbf{d}_{\mathcal{N}_i}) + \mathcal{O}(h_n), 
$
for some $h_n \rightarrow 0$. Our results hold if $h_n = o(1/n)$, capturing the idea that first-order effects $\bar{\phi}_n$ are larger than second-order effects. 

We provide an example where this holds. 
Following models in \cite{manski1993identification}, let  
$$
\small 
\begin{aligned} 
Y_i(\mathbf{d}) = \alpha + \beta_n \mathbf{d}_i + \kappa_n \frac{\sum_{j \neq i} \mathbf{A}_{i,j} \mathbf{d}_j}{\sum_{j \neq i} \mathbf{A}_{i,j} } + \gamma_n \frac{\sum_{j \neq i} \mathbf{A}_{i,j} Y_i(\mathbf{d})}{\sum_{j \neq i} \mathbf{A}_{i,j}} + \nu_i, \quad \beta_n, \kappa_n, \gamma_n \in [-\bar{\phi}_n, \bar{\phi}_n], \bar{\phi}_n \in (-1, 1),
\end{aligned} 
$$
for some (deterministic) $\nu_i$. By \eqref{eq:laplacian}, we can rewrite it as (recall $\mathbf{L}$ in Equation \eqref{eq:laplacian})
$$
\small 
\begin{aligned} 
Y_i(\mathbf{d}) = \alpha + \beta_n \mathbf{d}_i + \kappa_n \sum_{j \neq i} \mathbf{L}_{i,j} \mathbf{d}_j+ \gamma_n \sum_{j \neq i} \mathbf{
L}_{i,j} Y_i(\mathbf{d}) + \nu_i.
\end{aligned} 
$$
Assuming $\mathbf{A}$ is connected  \citep[Eq (6) in][]{bramoulle2009identification}, for arbitrary $G$, and letting $\bar{\mathbf{d}}_i^A = \sum_{j \neq i} \mathbf{L}_{i,j} \mathbf{d}_j$, 
$$
\small 
\begin{aligned} 
Y_i(\mathbf{d}) &= \alpha_{n, i} + \beta_n \mathbf{d}_i + \kappa_n \bar{\mathbf{d}}_i^A \\
& + \gamma_n \underbrace{\Big\{\underbrace{ \sum_{g=0}^{G - 1}  \gamma_n^{g} \mathbf{L}_i^{g+1} (\beta_n \mathbf{d} + \kappa_n \bar{\mathbf{d}}^A)}_{\text{Effect of } G \text{ closest friends}}  + \underbrace{\gamma_n^{G } \sum_{g=0}^{\infty}  \gamma_n^{g} \mathbf{L}_i^{g+G + 1} (\beta_n \mathbf{d} + \kappa_n \bar{\mathbf{d}}^A)}_{\text{Approximation error from $G + 1$ neighbors' effects}} \Big\}}_{\text{Interference from higher order friends}} + \tilde{\nu}_i,\end{aligned} 
$$
where $\alpha_n = (\alpha_{n, 1}, \ldots, \alpha_{n, n})' = \alpha(\mathbf{I} - \gamma_n\mathbf{L})^{-1}\mathbf{1}$ and $\tilde{\nu} = (\tilde{\nu}_1, \ldots, \tilde{\nu}_n)' = (\mathbf{I} - \gamma_n\mathbf{L})^{-1}\nu$.

Observe that the additional term that depends on higher order interference, multiplies by a factor $\gamma_n \beta_n + \gamma_n \kappa_n$. Taking $\varepsilon = 0$ in Assumption \ref{defn:local}, so that spillovers are local to zero (but possibly  non-negligible for inference), 
by letting each of these coefficients $\beta_n, \kappa_n, \gamma_n \propto \bar{\phi}_n$, Assumption \ref{ass:first_order} holds up to an order $o(1/n)$ (see Remark \ref{rem:higher_order}) for an adjacency matrix $\mathbf{A}_G$ where two individuals are connected if they are friends up to order $G$, provided that $\bar{\phi}_n^{G + 1} \sum_{g=0}^{\infty}  \bar{\phi}_n^{g} \mathbf{L}^{g+G + 1} \mathbf{1} = o(1/n)$. This example illustrates that the approximation error due to local interference decreases exponentially fast in $G$ as $\bar{\phi}_n = o(1)$ is local to zero. However, in settings where researcher use an adjacency matrix $\mathbf{A}_G$ ($G > 1$), we require stronger sparsity restrictions for our local asymptotic framework to hold (Assumption \ref{defn:local}), with $\bar{\phi}_n G \mathcal{N}_{n, \max}^{2G} = o(1)$, where $\mathcal{N}_{n, \max}$ is the maximum degree under the original adjacency matrix $\mathbf{A}$. 

Finally, we show that Assumption \ref{ass:weak_correlation} holds under this model if $\gamma_n$ is small. 
\begin{lem}\label{lem:endogenous_peer_effects_weak_correlation}
Assume that $\mathbf{A}$ is connected and $\nu_1, \ldots, \nu_n$ are independent with $\Var(\nu_i) = \sigma^2 < \infty$. Then Assumption \ref{ass:weak_correlation} holds if 
$\gamma_{n}\sqrt{\mathcal{N}_{n, \max}} = o(1).$ 
\end{lem}

\begin{proof}[Proof of Lemma \ref{lem:endogenous_peer_effects_weak_correlation}] See Appendix \ref{proof:lem:endogenous}.  
\end{proof}

\subsection{Estimated variance for inference on treatment effects} \label{sec:estimated_variance2}

Returning to Section \ref{sec:bias_aware_inferece}, in this section we discuss the estimation of the variance for ex-post inference.  We define the \emph{ex post bias-aware confidence interval} as
$ 
    \mathcal{I}_{\ep} = [\hat{\tau} \pm \chi(\bar{b}_{*, n}, \hat{\sigma})],
 $ where $\hat{\sigma}(\mathcal{C}_n)$ is an estimate of $\Var(\hat{\tau})$ using the experimental data.

  Let
  $ 
    X_i = 2Y_i (2D_i - 1),
  $ and 
  $$ 
  \small 
  \begin{aligned} 
  B_i = \{v: \text{either }c(v) = c(i) \text{ or } c(v) = c(v')\text{ for some }v'\in \mathcal{N}_i\}, \quad G_i = \{g: \mathcal{N}_g \cap B_i \neq \emptyset\}.
  \end{aligned} 
  $$
Lemma \ref{lem:2} below implies
  $ 
    X_j \indep X_i, \quad \forall j \not\in B_i \cup G_i.$
  It is easy to see that
  $j\in B_i \cup G_i \Longleftrightarrow i \in B_j \cup G_j.$ 
  Thus, Lemma \ref{lem:2} defines a undirected dependency graph among $(X_1, \ldots, X_n)$. Define
  $i\sim j \text{ iff }j\in B_i \cup G_i.$ 
  A particular choice is the network HAC estimator studied in \cite{gao2023causal}, which is guaranteed to be positive. 
 $ 
    \hat{\sigma}^2(\mathcal{C}_n) = \frac{1}{n^2}\sum_{i}\sum_{j\in B_i\cup G_i} K_{ij}(X_i - \hat{\tau})(X_j - \hat{\tau}),
 $ 
  where $K = (K_{ij})_{i,j}$ is a matrix satisfying
 $
K \succeq (I(i \sim j))_{i,j}$. 
  For example, \cite{gao2023causal} suggests
  $K = Q \max\{\Lambda, 0\}Q^T$
  where $Q\Lambda Q^T$ is the eigendecomposition of $(I(i \sim j))_{i,j}$.\footnote{As with other work in the literature, \cite{gao2023causal} prove that
$n\hat{\sigma}^2(\mathcal{C}_n)\ge n\Var(\hat{\tau}) + o_\P(1).$}

\newpage 

\section{Proofs of claims in online Appendix} \label{sec:proof_appendix}

\subsection{Proof of Theorem \ref{thm:random}}\label{app:proof:thm:random}

We study the worst-case bias and variance separately. We conclude with a characterization of the MSE. Recall that  for simplicity here we take a local asymptotic framework with $\varepsilon = 0$, although our derivations can be directly extended to positive $\varepsilon$ as for the other proofs. 

\paragraph{Worst-case bias}
Note that 
\[\E_\mu[\hat{\tau}_n(\mathcal{C}_n)] = \E_\mu[\E_\mu[\hat{\tau}_n(\mathcal{C}_n)\mid \mathbf{D}]] = \E_\mu\left[\frac{2}{n}\sum_{i=1}^{n}(2D_i - 1)\mu_i(\mathbf{D})\right].\]
The right-hand side is the expectation of the difference in means estimator with known $\mu_i$ which is studied in Lemma \ref{lem:worst_case_bias}. Thus, the worst-case bias remains the same and the worst case is achieved when $\mu_i(\mathbf{d})$ is given by \eqref{eq:worst_case} in the proof of Lemma \ref{lem:worst_case_bias}. 

\paragraph{Worst-case variance: Part I} Moving to the variance, by variance decomposition formula, 
\begin{equation} \label{eqn:variance_Part1}\mathbb{E}_\mu\Big[\Big(\hat{\tau}_n(\mathcal{C}_{ n}) - \mathbb{E}[\hat{\tau}_n(\mathcal{C}_{ n})]\Big)^2\Big] = \E_\mu[(\E_\mu[\hat{\tau}_n(\mathcal{C}_n)\mid \mathbf{D}] - \mathbb{E}[\hat{\tau}_n(\mathcal{C}_{ n})])^2] + \E_\mu[(\hat{\tau}_n(\mathcal{C}_n) - \E_\mu[\hat{\tau}_n(\mathcal{C}_n)\mid \mathbf{D}])^2].
\end{equation}
Again, the second term is the variance of the difference in means estimator with known $\mu_i(\mathbf{d})$'s. By \eqref{eq:worst_case_variance_max_degree} in the proof of Lemma \ref{lem:rel}, 
\begin{equation}\label{eq:binary_variance_term1}
\E_\mu[(\E_\mu[\hat{\tau}_n(\mathcal{C}_n)\mid \mathbf{D}] - \mathbb{E}[\hat{\tau}_n(\mathcal{C}_{ n})])^2] = \frac{1}{n^2}\sum_{k=1}^{K_n}\left(\sum_{i: c(i)  = k}(\mu_i(\mathbf{1}) + \mu_i(\mathbf{0}))\right)^2 + o(1/K_n).
\end{equation}
\paragraph{Worst-case variance: Part II} We decompose the second term in the right-hand side of Equation \eqref{eqn:variance_Part1} as 
\begin{equation}\label{eq:binary_variance_term2}
\small
\begin{aligned}
&\E_\mu[(\hat{\tau}_n(\mathcal{C}_n) - \E_\mu[\hat{\tau}_n(\mathcal{C}_n)\mid \mathbf{D}])^2]\\
& = \frac{4}{n^2}\sum_{i\neq j} \E_{\mu}\left[\mathbf{C}_{ij}(\mathbf{D})(2\mathbf{D}_i - 1)(2\mathbf{D}_j - 1)\sqrt{\Var(Y_i(\mathbf{D})\mid \mathbf{D})}\sqrt{\Var(Y_j(\mathbf{D})\mid \mathbf{D})}\right]\\
& \quad + \frac{4}{n^2}\sum_{i=1}^{n}\E_{\mu}[\Var(Y_i(\mathbf{D})\mid \mathbf{D})].
\end{aligned}
\end{equation}
Since $|\psi_R|\ge |\psi_L|$, it must be that $\psi_R > 0$. Let $v(\mathbf{D}) = (v_1(\mathbf{D}), \ldots, v_n(\mathbf{D}))$ where $v_i(\mathbf{D}) = (2\mathbf{D}_i - 1)\sqrt{\Var(Y_i(\mathbf{D}\mid \mathbf{D}))}$. Then 
$$
\small
\begin{aligned}
&\frac{4}{n^2}\sum_{i\neq j} \E_{\mu}\left[\mathbf{C}_{ij}(\mathbf{D})(2\mathbf{D}_i - 1)(2\mathbf{D}_j - 1)\sqrt{\Var(Y_i(\mathbf{D})\mid \mathbf{D})}\sqrt{\Var(Y_j(\mathbf{D})\mid \mathbf{D})}\right]\\
& = \frac{4}{n^2}\E_{\mu}[v(\mathbf{D})'(\mathbf{C}(\mathbf{D}) - \mathbf{I})v(\mathbf{D})]\\
& \le \frac{4}{n^2}\E_\mu[\|v(\mathbf{D})\|_2^2]\cdot \sup_{\mathbf{d}\in \{0, 1\}^n}\|\mathbf{C}(\mathbf{d}) - \mathbf{I}\|_{\mathrm{op}}.
\end{aligned}
$$
By Assumption \ref{ass:second_moment},
\[\Var(Y_i(\mathbf{D})\mid \mathbf{D}) = g(\mu_i(\mathbf{D})) - \mu_i(\mathbf{D})^2\le g(\psi_R).\]
Thus, $\|v(\mathbf{D})\|_2^2\le ng(\psi_R) = \mathcal{O}(n)$. By Assumption \ref{ass:weak_correlation}, 
\begin{equation}\label{eq:binary_variance_term2_term1}
\small
\begin{aligned}
&\frac{4}{n^2}\sum_{i\neq j} \E_{\mu}\left[\mathbf{C}_{ij}(\mathbf{D})(2\mathbf{D}_i - 1)(2\mathbf{D}_j - 1)\sqrt{\Var(Y_i(\mathbf{D})\mid \mathbf{D})}\sqrt{\Var(Y_j(\mathbf{D})\mid \mathbf{D})}\right] = o(1/n) = o(1/K_n).
\end{aligned}
\end{equation}
Returning to the second term of \eqref{eq:binary_variance_term2}. By Assumption \ref{ass:second_moment},
$$
\small
\begin{aligned}
 &\frac{4}{n^2}\sum_{i=1}^{n}\E_{\mu}[\Var(Y_i(\mathbf{D})\mid \mathbf{D})] = \frac{4}{n^2}\sum_{i=1}^{n}\E_{\mu}[g(\mu_i(\mathbf{D})) - \mu_i(\mathbf{D})^2]\\
 & = \frac{2}{n^2}\sum_{i=1}^{n}\E_{\mu}[g(\mu_i(1, \mathbf{D}_{-i})) - \mu_i(1, \mathbf{D}_{-i})^2\mid D_i = 1] + \E_{\mu}[g(\mu_i(0, \mathbf{D}_{-i})) - \mu_i(0, \mathbf{D}_{-i})^2\mid D_i = 0]
 \end{aligned}
$$
By Assumption \ref{ass:exposure_restriction}, and $|\psi_R| \ge |\psi_L|$,  
\[|\mu_i(1, \mathbf{D}_{-i}) - \mu_i(\mathbf{1})|\le \bar{\phi}_n, \quad -\psi_R \le \mu_i(1, \mathbf{D}_{-i})\le \psi_R.\]
Let $L_g$ denote the Lipschitz constant of $g$. Then by Assumption \ref{ass:second_moment}
$$
\begin{aligned}
&|g(\mu_i(1, \mathbf{D}_{-i})) - \mu_i(1, \mathbf{D}_{-i})^2 - \left\{g(\mu_i( \mathbf{1})) - \mu_i(\mathbf{1})^2\right\}|\\
& \le |\mu_i(1, \mathbf{D}_{-i}) - \mu_i(\mathbf{1})|\cdot |L_g + \mu_i(1, \mathbf{D}_{-i}) + \mu_i(\mathbf{1})|\le (L_g + 2 \psi_R)\bar{\phi}_n.
\end{aligned}
$$
Similarly, 
$$
\begin{aligned}
&|g(\mu_i(0, \mathbf{D}_{-i})) - \mu_i(0, \mathbf{D}_{-i})^2 - \left\{g(\mu_i( \mathbf{0})) - \mu_i(\mathbf{0})^2\right\}|\le (L_g +2 \psi_R)\bar{\phi}_n.
\end{aligned}
$$
Thus,
\begin{equation} 
\small 
\begin{aligned}
\frac{4}{n^2}\sum_{i=1}^{n}\E_{\mu}[\Var(Y_i(\mathbf{D})\mid \mathbf{D})]  &= \frac{2}{n^2} \sum_{i=1}^{n}\left(\E_\mu[g(\mu_i( \mathbf{1})) - \mu_i(\mathbf{1})^2] + \E_\mu[g(\mu_i( \mathbf{0})) - \mu_i(\mathbf{0})^2]\right) + \mathcal{O}\left(\bar{\phi}_n / n\right)\\
& = \frac{2}{n^2} \sum_{i=1}^{n}\left\{g(\mu_i( \mathbf{1})) - \mu_i(\mathbf{1})^2 + g(\mu_i( \mathbf{0})) - \mu_i(\mathbf{0})^2\right\} + o\left(1 / K_n\right),\label{eq:binary_variance_term2_term2}
\end{aligned}
\end{equation}
where the last line uses the fact that $\mu_i(\cdot)$ is non-random and Assumption \ref{defn:local} that implies $\bar{\phi}_n = o(1)$. 

\paragraph{Worst case variance: Part III, putting together bounds} Putting together \eqref{eq:binary_variance_term1}, \eqref{eq:binary_variance_term2}, \eqref{eq:binary_variance_term2_term1}, and \eqref{eq:binary_variance_term2_term2}, we have 
\begin{align}
&\E_\mu[(\hat{\tau}_n(\mathcal{C}_n) - \mathbb{E}[\hat{\tau}_n(\mathcal{C}_{ n})])^2] =  \frac{1}{n^2}\sum_{k=1}^{K_n}\left(\sum_{i: c(i)  = k}(\mu_i(\mathbf{1}) + \mu_i(\mathbf{0}))\right)^2\nonumber\\
& \quad + \frac{2}{n^2} \sum_{i=1}^{n}\left\{g(\mu_i( \mathbf{1})) - \mu_i(\mathbf{1})^2 + g(\mu_i( \mathbf{0})) - \mu_i(\mathbf{0})^2\right\} + o\left(1 / K_n\right)\label{eq:binary_variance}\\
& = \frac{2}{n^2}\sum_{k=1}^{K_n}\sum_{i < j: c(i)  = c(j) = k}(\mu_i(\mathbf{1}) + \mu_i(\mathbf{0}))(\mu_j(\mathbf{1}) + \mu_j(\mathbf{0}))\nonumber\\
& \quad + \frac{1}{n^2}\sum_{i=1}^{n}(\mu_i(\mathbf{1}) + \mu_i(\mathbf{0}))^2 + 2\left\{g(\mu_i( \mathbf{1})) - \mu_i(\mathbf{1})^2 + g(\mu_i( \mathbf{0})) - \mu_i(\mathbf{0})^2\right\} + o\left(1 / K_n\right)\nonumber\\
& = \frac{2}{n^2}\sum_{k=1}^{K_n}\sum_{i < j: c(i)  = c(j) = k}(\mu_i(\mathbf{1}) + \mu_i(\mathbf{0}))(\mu_j(\mathbf{1}) + \mu_j(\mathbf{0}))\nonumber\\
& \quad + \frac{1}{n^2}\sum_{i=1}^{n} 2g(\mu_i( \mathbf{1})) + 2g(\mu_i( \mathbf{0})) - (\mu_i(\mathbf{1}) - \mu_i(\mathbf{0}))^2+ o\left(1 / K_n\right).\nonumber
\end{align}
Since $\psi_R > 0$ and $|\psi_R| \ge |\psi_L|$, $\bar{\psi}^{1/2} = 2\psi_R $, 
\[(\mu_i(\mathbf{1}) + \mu_i(\mathbf{0}))(\mu_j(\mathbf{1}) + \mu_j(\mathbf{0}))\le 4\psi_R^2 = \bar{\psi}, \]
and
\[2g(\mu_i( \mathbf{1})) + 2g(\mu_i( \mathbf{0})) - (\mu_i(\mathbf{1}) - \mu_i(\mathbf{0}))^2\le 4g(\psi_R),\]
where both terms achieve their maximums when $\mu_i(\mathbf{1}) = \mu_i(\mathbf{0}) = \psi_R$ for all $i$. Thus, 
$$
\begin{aligned}
\E_\mu[(\hat{\tau}_n(\mathcal{C}_n) - \mathbb{E}[\hat{\tau}_n(\mathcal{C}_{ n})])^2]& \le \frac{\bar{\psi}}{n^2}\sum_{k=1}^{K_n}n_k^2 + \frac{4(g(\psi_R) - \psi_R^2)}{n} + o(1/K_n)\\
& = \frac{\bar{\psi}}{n^2}\sum_{k=1}^{K_n}n_k^2 + \frac{4g(\psi_R) - \bar{\psi}}{n} + o(1/K_n).
\end{aligned}
$$
Since $4g(\psi_R) - \bar{\psi} = 4(g(\psi_R) - \psi_R^2)\ge 0$ and the first term is lower bounded by $\bar{\psi} / K_n$, we have that
\[\E_\mu[(\hat{\tau}_n(\mathcal{C}_n) - \mathbb{E}[\hat{\tau}_n(\mathcal{C}_{ n})])^2]\le \left(\frac{\bar{\psi}}{n^2}\sum_{k=1}^{K_n}n_k^2 + \frac{4g(\psi_R) - \bar{\psi}}{n}\right)(1 + o(1)).\]
\paragraph{Upper bound on MSE} We can upper bound the worst-case MSE as 
\begin{align*}
\E_\mu[(\hat{\tau}_n(\mathcal{C}_n) - \tau_{n, \mu})^2]
& \le \left(\bar{\phi}_n^2 b_n(\mathcal{C}_n)^2 + \frac{\bar{\psi}}{n^2}\sum_{k=1}^{K_n}n_k^2 + \frac{4g(\psi_R) - \bar{\psi}}{n}\right)(1 + o(1))\\
& = \left(\mathcal{B}_n^*(\mathcal{C}_n, 1) + \frac{4g(\psi_R) - \bar{\psi}}{n}\right)(1 + o(1)).
\end{align*}
\paragraph{Lower Bound on MSE (Upper Bound is achievable)} To prove the upper bounds are achievable, it is left to show that the potential outcomes defined in \eqref{eq:worst_case} achieves the bound for the variance, because we have already proved that it achieves the worst-case bias in the proof of Lemma \ref{lem:worst_case_bias}. For this construction, $\mu_i(\mathbf{1}) = \psi_R$ and $\mu_i(\mathbf{0}) = \psi_R - \bar{\phi}_n$. Plugging them into \eqref{eq:binary_variance}, it is easy to see that the variance is within $(1 + O(\bar{\phi}_n))$ times the upper bound. By Assumption \ref{defn:local}, it implies the worst-case variance is achieved by \eqref{eq:worst_case}.

\subsection{Proof of Theorem \ref{thm:worst_case_heterogeneous}} \label{proof:heterogeneity}
Assume without loss of generality that $|\psi_{Ri}|\ge |\psi_{Li}|$ for all $i$. Then $\bar{\psi}_i = 2 \psi_{Ri} > 0$. 

\paragraph{Worst-case bias} Following the proofs of Lemma \ref{lem:worst_case_bias}, we can prove that 
$\Big|\tau_{n, \mu} - \mathbb{E}_\mu[\hat{\tau}_n(\mathcal{C}_{ n})]\Big| \le \bar{\phi}_n b_{n, \alpha}(\mathcal{C}_n),$ 
and it can be achieved by the following construction of potential outcomes:
\begin{equation}\label{eq:worst_case_heterogeneous}
\mu_i(1, \mathbf{d}_{-i}) = \mu_i(0, \mathbf{d}_{-i}) = 
\bar{\psi}_{i}/2 -  \bar{\phi}_n\frac{\alpha_i}{|\mathcal{N}_i|} \sum_{k \in \mathcal{N}_i}  (1 - \mathbf{d}_k).
\end{equation}
Note that the above construction is legitimate because $\bar{\phi}_n \le \psi_{Ri} - \psi_{Li}$ and $\alpha_i \le 1$, implying that $\mu_i(\mathbf{d})\in [\psi_{Li}, \psi_{Ri}]$. 

\paragraph{Worst-case variance} Following the proof of Lemma \ref{lem:rel}, we can recover \eqref{eq:worst_case_variance_max_degree}:
\begin{align*}
& \sup_{\mu\in \mathcal{M}}\mathbb{E}_\mu\Big[\Big(\hat{\tau}_n(\mathcal{C}_{ n}) - \mathbb{E}_\mu[\hat{\tau}_n(\mathcal{C}_{ n})]\Big)^2\Big]= \frac{1}{n^2}\sum_{k=1}^{K_n}\left(\sum_{i: c(i)  = k}(\mu_i(\mathbf{1}) + \mu_i(\mathbf{0}))\right)^2 + \mathcal{O}(\varepsilon/K_n).
\end{align*}
Since $|\mu_i(\mathbf{1})|, |\mu_i(\mathbf{0})|\le \bar{\psi}_i/2$, 
$\sup_{\mu \in \mathcal{M}} \mathbb{E}_\mu\Big[\Big(\hat{\tau}_n(\mathcal{C}_{ n}) - \mathbb{E}[\hat{\tau}_n(\mathcal{C}_{ n})]\Big)^2\Big]  \le \sum_{k=1}^{K_n}\frac{n_k^2}{n^2} \lb\frac{1}{n_k}\sum_{i\in c_k}\bar{\psi}_i\rb^2 + \mathcal{O}(\varepsilon/K_n).$ 
On the other hand, for the potential outcomes defined in \eqref{eq:worst_case_heterogeneous}, $\mu_i(\mathbf{1}) = \bar{\psi}_i / 2, \mu_i(\mathbf{0}) = \bar{\psi}_i / 2 - \bar{\phi}_n \alpha_i$. In this case, let $\alpha_i \le \bar{\alpha} < \infty$
$$  
\small 
\begin{aligned}
& \frac{1}{n^2}\sum_{k=1}^{K_n}\left(\sum_{i: c(i)  = k}(\mu_i(\mathbf{1}) + \mu_i(\mathbf{0}))\right)^2 = \sum_{k=1}^{K_n}\frac{n_k^2}{n^2} \lb\frac{1}{n_k}\sum_{i\in c_k}\bar{\psi}_i - \bar{\phi}_n \frac{1}{n_k}\sum_{i\in c_k} \alpha_i\rb^2 \\
& \ge \sum_{k=1}^{K_n}\frac{n_k^2}{n^2} \lb\frac{1}{n_k}\sum_{i\in c_k}\bar{\psi}_i\rb^2 - 2 \bar{\alpha} \bar{\phi}_n\sum_{k=1}^{K_n}\frac{n_k}{n^2}\left(\sum_{i\in c_k}\bar{\psi}_i\right) \ge \sum_{k=1}^{K_n}\frac{n_k^2}{n^2} \lb\frac{1}{n_k}\sum_{i\in c_k}\bar{\psi}_i\rb^2 - \frac{2\bar{\alpha} \bar{\gamma} \bar{\phi}_n}{K_n}\lb\frac{1}{n}\sum_{i=1}^{n}\bar{\psi}_i\rb,
\end{aligned}
$$ 
where the last line applies Assumption \ref{ass:clusters} for the second term. Since $(1/n)\sum_{i=1}^{n}\bar{\psi}_i \le \bar{\psi} = O(1)$ and $\bar{\phi}_n = \mathcal{O}(\varepsilon)$ by Assumption \ref{defn:local},
$\frac{1}{n^2}\sum_{k=1}^{K_n}\left(\sum_{i: c(i)  = k}(\mu_i(\mathbf{1}) + \mu_i(\mathbf{0}))\right)^2\ge \sum_{k=1}^{K_n}\frac{n_k^2}{n^2} \lb\frac{1}{n_k}\sum_{i\in c_k}\bar{\psi}_i\rb^2 + \mathcal{O}(\varepsilon/K_n)$. 
Thus, 
$\sup_{\mu \in \mathcal{M}} \mathbb{E}_\mu\Big[\Big(\hat{\tau}_n(\mathcal{C}_{ n}) - \mathbb{E}[\hat{\tau}_n(\mathcal{C}_{ n})]\Big)^2\Big]= \sum_{k=1}^{K_n}\frac{n_k^2}{n^2} \lb\frac{1}{n_k}\sum_{i\in c_k}\bar{\psi}_i\rb^2 + \mathcal{O}(\varepsilon/K_n).$ 
By Cauchy-Schwarz inequality,
$\sum_{k=1}^{K_n}\frac{n_k^2}{n^2} \lb\frac{1}{n_k}\sum_{i\in c_k}\bar{\psi}_i\rb^2 = \frac{1}{n^2} \sum_{k=1}^{K_n}\lb\sum_{i\in c_k}\bar{\psi}_i\rb^2\ge \frac{1}{K_n}\left(\frac{1}{n}\sum_{i=1}^{n}\bar{\psi}_i\right)^2\ge \frac{\underline{\psi}^2}{K_n}.$ 
Thus, 
$\sup_{\mu \in \mathcal{M}} \mathbb{E}_\mu\Big[\Big(\hat{\tau}_n(\mathcal{C}_{ n}) - \mathbb{E}[\hat{\tau}_n(\mathcal{C}_{ n})]\Big)^2\Big]= \sum_{k=1}^{K_n}\frac{n_k^2}{n^2} \lb\frac{1}{n_k}\sum_{i\in c_k}\bar{\psi}_i\rb^2 (1 + \mathcal{O}(\varepsilon)).$

\paragraph{Worst-case MSE} Finally, since the worst-case bias and variance are both achieved by \eqref{eq:worst_case_heterogeneous}, the worst-case MSE is the worst-case bias square plus the worst-case variance.

 \subsection{Proof of Theorem \ref{thm:weighted_estimator}} \label{app:proof:thm:weighted_estimator} 

 We separately provide bounds on the worst-case bias and variance. 
 
\paragraph{Upper bound on the worst-case bias.} Let 
\[\tau_{n, \mu}^{w} = \frac{1}{\|w\|_1}\sum_{i=1}^{n}w_i (\mu_i(\mathbf{1}) - \mu_i(\mathbf{0})),\]
and 
\[U_n^{*}(\mathcal{C}_n, w) = \bar{\phi}_n b_n^w(\mathcal{C}_n) 
 + (\psi_R - \psi_L)\sum_{i=1}^{n}\Big| \frac{w_i}{\|w\|_1} - \frac{1}{n}\Big|.\]

Following verbatim the proof of Lemma \ref{lem:worst_case_bias}, we have 
$$
\small 
\begin{aligned} 
\sup_{\mu \in \mathcal{M}} \Big| \mathbb{E}_\mu[\hat{\tau}_n^{w}] - \tau_{n,\mu}^w\Big| 
& = 
\sup_{\mu \in \mathcal{M}} \Big| 
\sum_{i=1}^n \frac{w_i}{\|w\|_1}\left(\mathbb{E}\Big[\mu_i(D_i, \mathbf{D}_{-i}) - \mu_i(\mathbf{1}) | D_i = 1\Big] - \mathbb{E}\Big[\mu_i(D_i, \mathbf{D}_{-i}) - \mu_i(\mathbf{0}) | D_i = 0\Big]\right)\Big| \\ 
&\le \sup_{\mu(1, \cdot) \in \otimes_{i=1}^n \mathcal{M}_{1, i}} \Big|
 \sum_{i=1}^n \frac{w_i}{\|w\|_1}\mathbb{E}\Big[\mu_i(D_i, \mathbf{D}_{-i}) - \mu_i(\mathbf{1}) | D_i = 1\Big]\Big| \\
& \qquad + \sup_{\mu(0, \cdot) \in \otimes_{i=1}^n \mathcal{M}_{0, i}} \Big|  \sum_{i=1}^n \frac{w_i}{\|w\|_1}\mathbb{E}\Big[\mu_i(D_i, \mathbf{D}_{-i}) - \mu_i(\mathbf{0}) | D_i = 0\Big]\Big|\\
& \le \bar{\phi}_n \sum_{i=1}^n \frac{w_i }{||w||_1}\frac{1}{|\mathcal{N}_i|} \Big| \mathcal{N}_i \bigcap \Big\{j: c(i) \neq c(j)\Big\}\Big|,
\end{aligned} 
$$ 
where the last line follows \eqref{eq:bias_term1} in the proof of Lemma \ref{lem:worst_case_bias}. Now we bound $|\tau_{n, \mu}^{w} - \tau_{n, \mu}|$. By the triangle inequality and Assumption \ref{ass:exposure_restriction} (i), 
$$
\small 
\begin{aligned} 
|\tau_{n, \mu}^{w} - \tau_{n, \mu}| & = \Big|\sum_{i=1}^{n}(\mu_i(\mathbf{1}) - \mu_i(\mathbf{0}))\lb \frac{w_i}{\|w\|_1} - \frac{1}{n}\rb\Big| \le (\psi_R - \psi_L)\sum_{i=1}^{n}\Big| \frac{w_i}{\|w\|_1} - \frac{1}{n}\Big|.
\end{aligned}
$$
Thus, we have proved that 
\begin{equation}\label{eq:bias_upper_bound_trimming}
\sup_{\mu \in \mathcal{M}} \Big| \mathbb{E}_\mu[\hat{\tau}_n^{w}] - \tau_{n,\mu}\Big|\le U_n^{*}(\mathcal{C}_n, w).
\end{equation}
\paragraph{Lower bound (achievability) of worst-case bias} To prove the achievability, we consider the following constructions of the potential outcomes:
\begin{equation}\label{eq:worst_case_trimming_bias}
\mu_i(1, \mathbf{d}_{-i}) = \mu_i(\mathbf{1}) - \bar{\phi}_n  \frac{1}{|\mathcal{N}_i|} \sum_{k \in \mathcal{N}_i}(1 - \mathbf{d}_k), \quad \mu_i(0, \mathbf{d}_{-i}) = \mu_i(\mathbf{0}) + \bar{\phi}_n  \frac{1}{|\mathcal{N}_i|} \sum_{k \in \mathcal{N}_i}  \mathbf{d}_k,
\end{equation}
and 
\[(\mu_i(\mathbf{1}), \mu_i(\mathbf{0})) = \left\{\begin{array}{ll}
(\psi_L + \bar{\phi}_n, \psi_R - \bar{\phi}_n) & (\text{if }w_i / \|w\|_1 \ge 1/n)\\
(\psi_R, \psi_L) & (\text{otherwise})
\end{array}\right..\]
Under Assumption \ref{ass:exposure_restriction},$\bar{\phi}_n\le \psi_R - \psi_L$, $\mu \in \mathcal{M}$. For this choice of $\mu$, repeating the above steps, 
\begin{equation}\label{eq:bias_trimming_term1}
\small 
\begin{aligned} 
\mathbb{E}_\mu[\hat{\tau}_n^{w}] - \tau_{n,\mu}^w
& = 
\sum_{i=1}^n \frac{w_i}{\|w\|_1}\left(\mathbb{E}\Big[\mu_i(D_i, \mathbf{D}_{-i}) - \mu_i(\mathbf{1}) | D_i = 1\Big] - \mathbb{E}\Big[\mu_i(D_i, \mathbf{D}_{-i}) - \mu_i(\mathbf{0}) | D_i = 0\Big]\right)\\ 
& = -\bar{\phi}_n\sum_{i=1}^n \frac{w_i}{\|w\|_1}\left(\frac{1}{|\mathcal{N}_i|}\sum_{k\in \mathcal{N}_i}\mathbb{E}[(1 - D_k)\mid D_i = 1] + \frac{1}{|\mathcal{N}_i|}\sum_{k\in \mathcal{N}_i}\mathbb{E}[D_k\mid D_i = 0]\right)\\ 
& = -\bar{\phi}_n \sum_{i=1}^n \frac{w_i }{||w||_1}\frac{1}{|\mathcal{N}_i|} \Big| \mathcal{N}_i \bigcap \Big\{j: c(i) \neq c(j)\Big\}\Big|,
\end{aligned} 
\end{equation}
and (letting $(\cdot)_{+} = \max\{\cdot, 0\}$)
$$
\small 
\begin{aligned} 
\tau_{n, \mu}^{w} - \tau_{n, \mu} &= \sum_{i=1}^{n}(\mu_i(\mathbf{1}) - \mu_i(\mathbf{0}))\lb \frac{w_i}{\|w\|_1} - \frac{1}{n}\rb\\
& = -(\psi_R - \psi_L)\sum_{i=1}^{n}\Big|\frac{w_i}{\|w\|_1} - \frac{1} {n}\Big| + 2\bar{\phi}_n \sum_{i=1}^{n}\lb\frac{w_i}{\|w\|_1} - \frac{1} {n}\rb_{+}\\
& = -(\psi_R - \psi_L - \bar{\phi}_n)\sum_{i=1}^{n}\Big|\frac{w_i}{\|w\|_1} - \frac{1} {n}\Big|,
\end{aligned}
$$
where the last line is due to the fact that because $w_i \ge 0$, 
\[\sum_{i=1}^{n}\lb\frac{w_i}{\|w\|_1} - \frac{1} {n}\rb = 0 \Longrightarrow \sum_{i=1}^{n}\lb\frac{w_i}{\|w\|_1} - \frac{1} {n}\rb_{+} = \sum_{i=1}^{n}\lb\frac{w_i}{\|w\|_1} - \frac{1} {n}\rb_{-} = \frac{1}{2}\sum_{i=1}^{n}\Big|\frac{w_i}{\|w\|_1} - \frac{1} {n}\Big|.\]
Putting two pieces together, for this choice of $\mu$, 
\begin{equation}\label{eq:bias_lower_bound_trimming}
|\E_{\mu}[\hat{\tau}_{n}^{w}] - \tau_{n, \mu}| = U_n^{*}(\mathcal{C}_n, w) - \bar{\phi}_n\sum_{i=1}^{n}\Big|\frac{w_i}{\|w\|_1} - \frac{1} {n}\Big|.
\end{equation}
By definition, 
\[U_n^{*}(\mathcal{C}_n, w)\ge (\psi_R - \psi_L)\sum_{i=1}^{n}\Big|\frac{w_i}{\|w\|_1} - \frac{1} {n}\Big|.\]
Thus, because $\bar{\phi}_n = \mathcal{O}(\varepsilon)$ under Assumption \ref{defn:local},  $\bar{\phi}_n\sum_{i=1}^{n}\Big|\frac{w_i}{\|w\|_1} - \frac{1} {n}\Big| = \mathcal{O}(\varepsilon U_n^{*}(\mathcal{C}_n, w))$ and 
\[|\E_{\mu}[\hat{\tau}_{n}^{w}] - \tau_{n, \mu}| = U_n^{*}(\mathcal{C}_n, w)(1 + \mathcal{O}(\varepsilon)). \]
Combining \eqref{eq:bias_upper_bound_trimming} and \eqref{eq:bias_lower_bound_trimming}, we prove the result for the worst-case bias.

\paragraph{Upper bound on the worst-case variance.} 

Following the proof of Lemma \ref{lem:rel}, we obtain that 
$$
\small
\begin{aligned}
&\mathbb{E}_\mu\Big[\Big(\hat{\tau}_n^{w}(\mathcal{C}_{ n}) - \mathbb{E}_\mu[\hat{\tau}_n^{w}(\mathcal{C}_{ n})]\Big)^2\Big]\\
&= \frac{4}{\|w\|_1^2}\sum_{i,j} w_i w_j\mathrm{Cov}\Big(\mu_i(D_i, \mathbf{D}_{-i})[2 D_i - 1], \mu_j(D_j, \mathbf{D}_{-j})[2 D_j - 1]\Big)\\
& = \frac{4}{\|w\|_1^2} \sum_{i,j: c(i) = c(j)} w_i w_j\mathrm{Cov}\Big(\mu_i(D_i, \mathbf{D}_{-i})[2 D_i - 1], \mu_j(D_j, \mathbf{D}_{-j})[2 D_j - 1]\Big) + \\
& \qquad \mathcal{O}\left(\frac{\bar{\phi}_n}{\|w\|_1^2}\cdot \sum_{i,j: c(i)\neq c(j)}w_i w_jI(j\in B_i\cup G_i)\right)\qquad \text{(by Lemma \ref{lem:2} and that $b_n(\mathcal{C}_n)\le 1$)}\\
& =  \frac{4}{\|w\|_1^2}  \sum_{i,j: c(i) = c(j)} \frac{w_i w_j}{4}(\mu_i(\mathbf{1}) + \mu_i(\mathbf{0}))(\mu_j(\mathbf{1}) + \mu_j(\mathbf{0})) + \mathcal{O}\left(\frac{\bar{\phi}_n}{\|w\|_1^2}\cdot \sum_{i,j: c(i)= c(j)}w_i w_j\right) + \\
& \qquad \mathcal{O}\left(\frac{\bar{\phi}_n }{\|w\|_1^2}\cdot \sum_{i,j: c(i)\neq c(j)}w_i w_jI(j\in B_i\cup G_i)\right)\qquad \text{(by Lemma \ref{lem:rel})}\\
& = \frac{4}{\|w\|_1^2} \sum_{i,j: c(i) = c(j)} \frac{w_i w_j}{4}(\mu_i(\mathbf{1}) + \mu_i(\mathbf{0}))(\mu_j(\mathbf{1}) + \mu_j(\mathbf{0}))\\
& \qquad \mathcal{O}\left(\frac{\bar{\phi}_n }{\|w\|_1^2}\cdot \sum_{i,j}w_i w_j I(j\in B_i\cup G_i)\right)\qquad (\text{because }c(i) = c(j) \Longrightarrow j\in B_i\cup G_i)\\
& = \frac{1}{\|w\|_1^2}\sum_{k=1}^{K_n}\left(\sum_{i: c(i)  = k}w_i(\mu_i(\mathbf{1}) + \mu_i(\mathbf{0}))\right)^2 +  \mathcal{O}\left(\frac{\bar{\phi}_n }{\|w\|_1^2}\cdot \sum_{i,j}w_i w_j I(j\in B_i\cup G_i)\right)\\
& = \frac{1}{\|w\|_1^2}\sum_{k=1}^{K_n}\left(\sum_{i: c(i)  = k}w_i(\mu_i(\mathbf{1}) + \mu_i(\mathbf{0}))\right)^2 +  \mathcal{O}\left(\frac{\bar{\phi}_n }{n^2}\cdot \sum_{i,j}I(j\in B_i\cup G_i)\right)\quad (\text{because }w_i \le \bar{w} \text{ and }\|w\|_1 \ge \underline{w}n)\\
& = \frac{1}{\|w\|_1^2}\sum_{k=1}^{K_n}\left(\sum_{i: c(i)  = k}w_i(\mu_i(\mathbf{1}) + \mu_i(\mathbf{0}))\right)^2 +  \mathcal{O}\left(\frac{\bar{\phi}_n }{n^2}\cdot \sum_{i}|B_i\cup G_i|\right).
\end{aligned}
$$
It has been shown in the proof of Lemma \ref{lem:worst_case_bias}, that 
\[\frac{\bar{\phi}_n }{n^2}\cdot \sum_{i}|B_i\cup G_i| = \mathcal{O}(\varepsilon/K_n).\]
Thus, 
\begin{equation}\label{eq:var_trimming_term1}
\mathbb{E}_\mu\Big[\Big(\hat{\tau}_n^{w}(\mathcal{C}_{ n}) - \mathbb{E}_\mu[\hat{\tau}_n^{w}(\mathcal{C}_{ n})]\Big)^2\Big] = \frac{1}{\|w\|_1^2}\sum_{k=1}^{K_n}\left(\sum_{i: c(i)  = k}w_i(\mu_i(\mathbf{1}) + \mu_i(\mathbf{0}))\right)^2 + \mathcal{O}(\varepsilon/K_n).
\end{equation}
By Assumption \ref{ass:exposure_restriction} (i) and the definition \eqref{eq:psibar} of $\bar{\psi}$, we have
\[\sup_{\mu\in \mathcal{M}}\mathbb{E}_\mu\Big[\Big(\hat{\tau}_n^{w}(\mathcal{C}_{ n}) - \mathbb{E}_\mu[\hat{\tau}_n^{w}(\mathcal{C}_{ n})]\Big)^2\Big]\le \frac{1}{\|w\|_1^2}\sum_{k=1}^{K_n}\left(\sum_{i: c(i)  = k}w_i\right)^2\cdot \bar{\psi} + \mathcal{O}(\varepsilon/K_n).\]
\paragraph{Lower bound (achievability) of worst-case variance} On the other hand, for potential outcomes defined in \eqref{eq:worst_case}, similar to \eqref{eq:worst_case_variance_last_step}, 
$$
\begin{aligned}
\mathbb{E}_\mu\Big[\Big(\hat{\tau}_n^{w}(\mathcal{C}_{ n}) - \mathbb{E}_\mu[\hat{\tau}_n^{w}(\mathcal{C}_{ n})]\Big)^2\Big]
& = \frac{1}{\|w\|_1^2}\sum_{k=1}^{K_n}\left(\sum_{i: c(i)  = k}w_i\right)^2\cdot (\bar{\psi}^{1/2} - \bar{\phi}_n)^2 + \mathcal{O}(\varepsilon/K_n)\\
& = \frac{1}{\|w\|_1^2}\sum_{k=1}^{K_n}\left(\sum_{i: c(i)  = k}w_i\right)^2\cdot \bar{\psi} (1 + o(1)) + \mathcal{O}(\varepsilon/K_n).
\end{aligned}
$$
By Cauchy-Schwarz inequality, 
\begin{equation}\label{eq:Cauchy_Schwarz_trimming}
\sum_{k=1}^{K_n}\left(\sum_{i: c(i)  = k}w_i\right)^2\ge \frac{1}{K_n}\left(\sum_{k=1}^{K_n}\sum_{i: c(i)  = k}w_i\right)^2 = \frac{\|w\|_1^2}{K_n}.
\end{equation}
Therefore, the proof of worst-case variance is completed.

\subsection{Proof of Theorem \ref{thm:worst_case_mse_trimming}}\label{app:proof:thm:worst_case_mse_trimming}
We prove the following more general result:
$$
\small 
\begin{aligned}
\sup_{\mu \in \mathcal{M}} \mathbb{E}_\mu\Big[\Big(\hat{\tau}_n^{w}(\mathcal{C}_{ n}) - \mathbb{E}_\mu[\hat{\tau}_n^{w}(\mathcal{C}_{ n})]\Big)^2\Big] + \lambda  \Big(\tau_{n,\mu} - \mathbb{E}_\mu[\hat{\tau}_n^{w}(\mathcal{C}_{ n})]\Big)^2
 \quad \in [1/4, 1]\cdot \mathcal{B}_n^*(\mathcal{C}_n, w, \lambda)(1 + \mathcal{O}(\varepsilon)).
\end{aligned}
$$
\paragraph{Upper bound} The upper bound is a direct consequence of Theorem \ref{thm:weighted_estimator}. 

\paragraph{Lower bound} To prove the lower bound, assume without loss of generality that
\(|\psi_R|\ge |\psi_L|\); otherwise replace all potential outcomes by their
negatives. Then \(\psi_R>0\) and, by the definition of \(\bar\psi\),
$
\bar\psi^{1/2}=2\psi_R.
$ 
Consider \(\mu\) satisfying \eqref{eq:worst_case_trimming_bias} and taking $n$ large enough so that $\bar{\phi}_n \le \frac{\psi_R - \psi_L}{4}$ and 
\[
(\mu_i(\mathbf 1),\mu_i(\mathbf 0))
=
\begin{cases}
\left((\psi_R+\psi_L)/2-\bar\phi_n,\ \psi_R-\bar\phi_n\right),
& \text{if } w_i/\|w\|_1\ge 1/n,\\
\left(\psi_R,\ (\psi_R+\psi_L)/2\right),
& \text{otherwise.}
\end{cases}
\]
For this construction, the bias term is at least one half of the bias component
appearing in \(\mathcal B_n^*(\mathcal C_n,w,\lambda)\). Moreover,
$ 
\mu_i(\mathbf 1)+\mu_i(\mathbf 0)
\ge
\psi_R-2\bar\phi_n
=
\frac{\bar\psi^{1/2}}{2}-2\bar\phi_n.
$ 
Hence, by \eqref{eq:var_trimming_term1},
$ 
\mathbb E_\mu\left[
\left(
\hat\tau_n^w(\mathcal C_n)
-
\mathbb E_\mu[\hat\tau_n^w(\mathcal C_n)]
\right)^2
\right]
\ge
\frac{1}{4}\bar\psi
\frac{1}{\|w\|_1^2}
\sum_{k=1}^{K_n}
\left(
\sum_{i:c(i)=k}w_i
\right)^2
(1+\mathcal O(\varepsilon)).
$ 
Putting together  the final expression, we have
$$
\begin{aligned}
&\sup_{\mu \in \mathcal{M}} \mathbb{E}_\mu\Big[\Big(\hat{\tau}_n^{w}(\mathcal{C}_{ n}) - \mathbb{E}_\mu[\hat{\tau}_n^{w}(\mathcal{C}_{ n})]\Big)^2\Big] + \lambda  \Big(\tau_{n,\mu} - \mathbb{E}_\mu[\hat{\tau}_n^{w}(\mathcal{C}_{ n})]\Big)^2\\
& \ge \frac{1}{4}\mathcal{B}_n^{*}(\mathcal{C}_n, w, \lambda)(1 + \mathcal{O}(\varepsilon)).
\end{aligned}
$$

\begin{rem}\label{rem:exact_worst_case_mse_trimming}
From the above proof, it is not hard to show that the worst-case MSE, up to $1+\mathcal{O}(\varepsilon)$ terms, can be formulated as 
$$
\begin{aligned}
\max_{\mu_i(\mathbf{1}), \mu_i(\mathbf{0})\in [\psi_L, \psi_R]} & \frac{1}{\|w\|_1^2}\sum_{k=1}^{K_n}\left(\sum_{i: c(i)  = k}w_i(\mu_i(\mathbf{1}) + \mu_i(\mathbf{0}))\right)^2 \\
& + \lambda\left(\bar{\phi}_n b_n^w(\mathcal{C}_n) + \sum_{i=1}^{n}(\mu_i(\mathbf{1}) - \mu_i(\mathbf{0}))\lb \frac{w_i}{\|w\|_1} - \frac{1}{n}\rb\right)^2
\end{aligned}
$$ \qed 
\end{rem}

\subsection{Proof of Theorem \ref{thm:saturation}} 
\label{app:thm:saturation} 

\paragraph{Upper bound on the worst-case bias.} Consider first the worst-case bias. Following the argument in the proof of Lemma \ref{lem:worst_case_bias}, 
$$
\begin{aligned} 
\sup_{\mu \in \mathcal{M}} \Big| \mathbb{E}_\mu[\hat{\tau}_n] - \tau_{n,\mu} \Big| & \le \sup_{\mu(1, \cdot) \in \otimes_{i=1}^n \mathcal{M}_{1, i} } \Big| \frac{1}{n} \sum_{i=1}^n \mathbb{E}\Big[ \mu_i(D_i, \mathbf{D}_{-i}) - \mu_i(\mathbf{1}) | D_i = 1  \Big] \Big| \\ 
& + \sup_{\mu(0, \cdot) \in \otimes_{i=1}^n \mathcal{M}_{0, i} } \Big| \frac{1}{n} \sum_{i=1}^n \mathbb{E}\Big[ \mu_i(D_i, \mathbf{D}_{-i}) - \mu_i(\mathbf{0}) | D_i = 0  \Big] \Big|. 
\end{aligned} 
$$
Following verbatim the proof of Lemma \ref{lem:worst_case_bias}, we can write 
$$
\small 
\begin{aligned} 
& \sup_{\mu(1, \cdot) \in \otimes_{i=1}^n \mathcal{M}_{1, i} } \Big| \frac{1}{n} \sum_{i=1}^n \mathbb{E}\Big[ \mu_i(D_i, \mathbf{D}_{-i}) - \mu_i(\mathbf{1}) | D_i = 1  \Big] \Big|  = \frac{\bar{\phi}_n}{n} \sum_{i=1}^n \frac{1}{|\mathcal{N}_i|} \mathbb{E}\Big[\sum_{k \in \mathcal{N}_i} (1 - D_k) \mid D_i = 1\Big] \\ 
&\le \frac{\bar{\phi}_n}{2 n} \sum_{i=1}^n \frac{1}{|\mathcal{N}_i|} \Big|j \in \mathcal{N}_i: c(i) \neq c(j)\Big| + \frac{\bar{\phi}_n}{n} \sum_{i=1}^n \frac{1}{|\mathcal{N}_i|} \Big(|\mathcal{N}_i| - \sum_{k \in \mathcal{N}_i} \mathbb{E}[D_k | D_i = 1] \Big)1\{c(i) = c(k)\}, 
\end{aligned} 
$$
where the first term of the second line uses the fact that $\mathbb{E}[(1 - D_k)\mid D_i = 1] = 1 - 
(q_1 + q_0) / 2 = 1/2$ if $c(k) \neq c(i)$. 
For $c(i) = c(k)$, we can write 
$$
\small 
\begin{aligned} 
\mathbb{E}[D_k | D_i = 1] & = \frac{P(D_k = 1, D_i = 1)}{P(D_i = 1)} = 2 P(D_k = 1, D_i = 1) = (q_1^2 + q_0^2). 
\end{aligned} 
$$
Similarly, we can write 
$$
\small 
\begin{aligned} 
& \sup_{\mu(0, \cdot) \in \otimes_{i=1}^n \mathcal{M}_{0, i} } \Big| \frac{1}{n} \sum_{i=1}^n \mathbb{E}\Big[ \mu_i(D_i, \mathbf{D}_{-i}) - \mu_i(\mathbf{0}) | D_i = 0  \Big] \Big| \\ 
&\le \frac{\bar{\phi}_n}{2 n} \sum_{i=1}^n \frac{1}{|\mathcal{N}_i|} \Big|j \in \mathcal{N}_i: c(i) \neq c(j) \Big| + \frac{\bar{\phi}_n}{n} \sum_{i=1}^n \frac{1}{|\mathcal{N}_i|} \Big(\sum_{k \in \mathcal{N}_i} \mathbb{E}[D_k | D_i = 0]1\{c(i) = c(k) \}\Big). 
\end{aligned} 
$$
For $c(i) = c(k)$, we can write 
$$
\mathbb{E}[D_k | D_i = 0] = \frac{P(D_k = 1, D_i = 0)}{P(D_i = 0)} = q_0(1 - q_0) + q_1(1 - q_1). 
$$
By collecting terms, we can write (since $q_1 + q_0 = 1$)  
$$
\small 
\begin{aligned} 
\sup_{\mu \in \mathcal{M}} \Big| \mathbb{E}_\mu[\hat{\tau}_n] - \tau_{n,\mu} \Big| &\le \frac{\bar{\phi}_n}{n} \sum_{i=1}^n \frac{1}{|\mathcal{N}_i|} \Big| j \in \mathcal{N}_i: c(i) \neq c(j) \Big|\\
& \quad + \frac{\bar{\phi}_n}{n} \sum_{i=1}^n \frac{1}{|\mathcal{N}_i|} 2(1 - q_1^2 - q_0^2) \Big|\mathcal{N}_i \bigcap \{j\in \mathcal{N}_i: c(i) = c(j)\}\Big|. 
\end{aligned} 
$$
Since $q_1 + q_0 = 1$, $1 - q_1^2 - q_0^2 = (q_1 + q_0)^2 - q_1^2 - q_0^2 = 2q_1q_0$. Recall the definition \eqref{eq:bn}, we can write the upper bound as 
\[\bar{\phi}_{n} \left\{b_n(\mathcal{C}_n) + 4q_1q_0(1 
- b_n(\mathcal{C}_n))\right\}.\]

\paragraph{Lower bound (achievability) of worst-case bias}
Following the proof of lower bound in Lemma \ref{lem:worst_case_bias}, we can show that the potential outcomes defined in \eqref{eq:worst_case} achieves the above upper bound. 

\paragraph{Upper bound on the worst-case variance} Consider now the variance component. Following verbatim the reasoning of Lemma \ref{prop:main}, we can write
\[\mathbb{E}\Big[ \Big(\hat{\tau}_n - \mathbb{E}[\hat{\tau}_n]\Big)^2 \Big] = \frac{4}{n^2}\sum_{i,j: c(i) = c(j)}\mathrm{Cov}\Big( D_i, D_j \Big)(\mu_i(\mathbf{1}) + \mu_i(\mathbf{0}))(\mu_j(\mathbf{1}) + \mu_j(\mathbf{0})) + \mathcal{O}(\varepsilon/K_n).\]
Thus,
$$
\sup_{\mu \in \mathcal{M}} \mathbb{E}\Big[ \Big(\hat{\tau}_n - \mathbb{E}[\hat{\tau}_n]\Big)^2 \Big] \le\frac{4}{n^2} \sum_{i,j: c(i) = c(j)} \mathrm{Cov}\Big( D_i, D_j \Big) \bar{\psi} + \mathcal{O}(\varepsilon/K_n).  
$$
Note that we can write for $c(i) = c(j), i \neq j$,
$$
\begin{aligned} 
\mathbb{E}[D_i D_j]- \mathbb{E}[D_i]\mathbb{E}[D_j] = \Big[\frac{q_1^2 + q_0^2}{2}  - \frac{1}{4}\Big] = \frac{1}{4} (2 q_1^2 + 2 q_0^2 - 1). 
\end{aligned} 
$$
Since $q_1 + q_0 = 1$, $2q_1^2 + 2q_0^2 - 1 = 2q_1^2 + 2q_0^2 - (q_1 + q_0)^2 = (q_1 - q_0)^2$.
 For $i = j$, summing over $n$ terms the variance contribution instead is $1/4 = (q_1 - q_0)^2/4 + q_0 q_1$ since $q_1 + q_0 =1$. Collecting the terms we obtain the desired bound 
\paragraph{Lower bound (achievability) of worst-case variance}
Following the proof of lower bound in Lemma \ref{prop:main}, we can show that the potential outcomes defined in \eqref{eq:worst_case} also achieves the above upper bound. 

\paragraph{Worst-case MSE} Since \eqref{eq:worst_case} achieves both the worst-case bias and variance, the worst-case MSE is simply their sum.

\subsection{Proof of Lemma \ref{lem:endogenous_peer_effects_weak_correlation}}
\label{proof:lem:endogenous}

Let
$
\mathbf B_n=(\mathbf I-\gamma_n\mathbf L)^{-1}.
$ 
Because \(\mathbf A\) is symmetric and the graph is connected,
$ 
\left\|
\mathbf V^{-1/2}\mathbf A\mathbf V^{-1/2}
\right\|_{\mathrm{op}}\le 1.
$ 
Therefore,
$ 
\|\mathbf L\|_{\mathrm{op}}
=
\left\|
\mathbf V^{-1/2}
(\mathbf V^{-1/2}\mathbf A\mathbf V^{-1/2})
\mathbf V^{1/2}
\right\|_{\mathrm{op}}
\le
\|\mathbf V^{-1/2}\|_{\mathrm{op}}\|\mathbf V^{1/2}\|_{\mathrm{op}}
\le
\sqrt{\mathcal N_{n,\max}}.
$ 
The assumption \(\gamma_n\sqrt{\mathcal N_{n,\max}}=o(1)\) therefore implies
$ 
\|\gamma_n\mathbf L\|_{\mathrm{op}}=o(1).
$ 
Hence, for large \(n\), \(\|\gamma_n\mathbf L\|_{\mathrm{op}}<1\), and the
Neumann series gives
$ 
\mathbf B_n-\mathbf I
=
(\mathbf I-\gamma_n\mathbf L)^{-1}-\mathbf I
=
\sum_{m=1}^{\infty}(\gamma_n\mathbf L)^m.
$ 
Thus,
$ 
\|\mathbf B_n-\mathbf I\|_{\mathrm{op}}
\le
\frac{\|\gamma_n\mathbf L\|_{\mathrm{op}}}
{1-\|\gamma_n\mathbf L\|_{\mathrm{op}}}
=
o(1).
$ 
Consequently,
$ 
\|\mathbf B_n\mathbf B_n^\top-\mathbf I\|_{\mathrm{op}}=o(1).
$ 
Since
$ 
\mathbf D_n=\operatorname{diag}(\mathbf B_n\mathbf B_n^\top),
$ 
we also have
$ 
\|\mathbf D_n-\mathbf I\|_{\mathrm{op}}=o(1),
$ 
and hence
$ 
\left\|
\mathbf D_n^{-1/2}
(\mathbf B_n\mathbf B_n^\top)
\mathbf D_n^{-1/2}
-
\mathbf I
\right\|_{\mathrm{op}}
=o(1).
$ 
Therefore \(\|\mathbf C-\mathbf I\|_{\mathrm{op}}=o(1)\).

\end{document}